\documentclass[aps,pre,reprint,amsmath,amssymb,longbibliography,floatfix]{revtex4-2}
\usepackage[T1]{fontenc}
\usepackage{graphicx,bm,booktabs,amsthm,xcolor,mathrsfs}
\usepackage{tikz}
\usetikzlibrary{arrows.meta,positioning}
\usepackage[hidelinks]{hyperref}

\newtheorem{proposition}{Proposition}
\newcommand{\E}{\mathbb E}
\newcommand{\dd}{\,\mathrm d}
\newcommand{\KL}{D_{\mathrm{KL}}}
\newcommand{\Id}{\mathbb I}

\newcommand{\Var}{\operatorname{Var}}
\newcommand{\Cov}{\operatorname{Cov}}
\hypersetup{pdftitle={Hidden kinetic correlations control collective phases of entropy-conditioned histories},pdfauthor={Abhishek Chowdhury}}
\begin{document}
\title{Hidden kinetic correlations control collective phases of entropy-conditioned histories}
\author{Abhishek Chowdhury}
\email{achowdhury@iitbbs.ac.in}
\affiliation{School of Basic Sciences, Indian Institute of Technology Bhubaneswar, India}

\begin{abstract}
Visible dynamics and mean dissipation need not determine the collective
phases of rare histories. We construct local Markov models with identical
ordinary visible path distributions, stationary states, mean entropy
production, mean jump activity and driving affinities, yet different phases
when histories are weighted by their full entropy production. Visible
configurations modulate hidden driven cycles, whose exact elimination turns
kinetic correlations into interactions. A positive spectral reduction
separates hidden relaxation from collective modes. Periodic binary models
retain opposite phases with entropy variances matched at every observation
time. For conserved particles, a pair matches the first two joint cumulants
of entropy and hidden activity at every observation time, yet selects
segregation or a fluctuating state with suppressed long-wavelength density
noise. Bound-state transport and fermionic evolution determine their
contrasting dynamics. At a clean Ising critical midpoint, the entropy source
couples quadratically to the thermal perturbation, making the fourth
cumulant the first to detect the critical susceptibility. Its rate per site
grows logarithmically, with an exact cutoff set by system size and
observation time. For two quenched disorder ensembles, exact sample-response
formulas and numerical evidence support a finite fourth-cumulant density
despite very slow relaxation; a weak-disorder continuum response admits a
controlled Borel sum. A common positive hidden eigenvector characterizes
autonomous selected dynamics. When this condition fails, a finite network
develops visible memory while preserving its ordinary visible process and
mean diagnostics. These results identify kinetic information missed by
aggregate thermodynamic observations.
\end{abstract}
\maketitle

\section{Introduction}
\label{sec:intro}
A visible trajectory can be reversible while internal degrees of freedom
consume fuel. This occurs naturally when a measured conformation, position
or population label omits a driven internal cycle. Inferring dissipation
from such observations is a central problem of stochastic thermodynamics.
Path irreversibility and waiting-time information constrain hidden
dissipation, but their resolving power depends on which events and
variables are observed~\cite{RoldanParrondo2012,SkinnerDunkel2021,DeguentherMeerSeifert2024}.
The collective behavior of rare histories raises a further question.
Even if the full mean dissipation and its noise are known separately,
do they determine how a system organizes when its histories are selected
for reduced irreversibility?

We answer this question by comparing microscopic models under explicit
observational constraints. In our principal matched comparisons, the complete ordinary visible process, full
stationary distribution, mean entropy production, mean number of jumps
and microscopic affinities are held fixed. The examples use several
visible update rules, from independent flips to conserved exchanges;
the observational constraints are imposed within each such family. Changing the dependence of
hidden turnover on visible configurations can nevertheless cross a
collective phase boundary in the entropy-weighted ensemble. Two stronger
comparisons test whether additional aggregate measurements remove this
ambiguity. Periodic models can also have identical ordinary entropy
variance at every observation time and occupy opposite selected phases.
For conserved visible particles, even the first two joint cumulants of
entropy and hidden activity can agree exactly while the selected
organization changes from macroscopic segregation to suppressed
long-wavelength density fluctuations.

The distinction between driving and kinetics is established in
nonequilibrium physics. Symmetric transition prefactors affect response
and state selection at fixed thermodynamic
forces~\cite{Maes2016Nondissipative}. Autonomous sensory networks similarly
separate an upstream stochastic signal from a driven downstream
system~\cite{BaratoHartichSeifert2013}. Kinetically constrained models with
autonomous configurational motion and mobility-gated dissipative cycles
have also been studied through activity-selected histories and suppressed
reaction~\cite{Kaneko2026KineticLocking}. Here the kinetic comparison
preserves the stated ordinary observations and thermodynamic diagnostics.
Its consequence appears after selecting histories by the \emph{full}
entropy, including hidden reactions. The additional information resides
in correlations between visible configurations and hidden event times,
which we quantify through a mixed observable and its path Fisher
information.

The mechanism is accessible in a chain of independently flipping visible
variables coupled to hidden three-state cycles. Alignment slows both
directions of a cycle by the same factor, leaving its affinity unchanged.
For a given visible history the hidden events are conditional Poisson
processes. Their entropy weight penalizes integrated turnover, so
persistent alignment can become favorable even though the ordinary
visible variables do not interact. The resulting spin operator is exact
at finite positive rates. Biasing the interaction energy of independent
spins is already known to produce an Ising trajectory
transition~\cite{JackSollich2010,VasiloiuOakesCarolloGarrahan2020}; entropy
and activity can also resolve different history
phases~\cite{GingrichVaikuntanathanGeissler2014}. The present question is
which of these phases is compatible with the same measured ordinary
process and thermodynamic information. The effective interaction is
fixed by the microscopic entropy histories and their kinetic gates.

The mathematical organization makes the comparison exact and
extendable. Hidden-group averaging identifies a positive visible sector;
a spectral bound establishes when its critical modes govern the full
microscopic relaxation. Symmetric-polynomial identities construct matched
gate moments on opposite sides of the phase boundary. With particle
conservation, the same elimination gives an anisotropic Heisenberg (XXZ) operator. Its bound
states determine the mobility and current fluctuations of a moving
domain, while a determinant of one-particle propagators determines the
hyperuniform partner's evolution. These structures turn the matched
static diagnostics into precise comparisons of selected transport.
For more general hidden clocks, a common positive eigenvector identifies
exactly when selected visible motion stays autonomous. When that
condition fails, a finite network develops visible memory while
preserving the ordinary visible process and mean thermodynamic diagnostics. Eliminating hidden states therefore
has a testable boundary, as well as an exact solvable regime.

A general path-space variational principle gives entropy bias $s=1/2$
a physical meaning. For mutually absolutely continuous forward and
reversed histories, its selected measure is the reversal-invariant
history distribution of smallest relative-entropy cost. Reversible
information geometry and stochastic control organize this principle
\cite{WolferWatanabe2021,Andrieux2025,ChetriteTouchetteControl2015};
Appendix~\ref{app:information} derives the finite-time and stationary-rate
versions.
Tuning that midpoint to criticality makes the entropy source tangent to
the thermal perturbation. The Ising variance density remains finite,
whereas the fourth-cumulant rate per site grows logarithmically with
system size or observation time. Conditional counting explains this
unusual diagnostic and fixes the finite-time crossover. A three-state
Potts realization changes the logarithm into power growth. Quenched gates
pose a different question, since very slow modes need not couple strongly
to the measured entropy. We derive the exact energy-response matrix
elements and find numerical support for a finite fourth-cumulant density
despite activated relaxation. A weak-disorder continuum limit explains
how a response can be smooth and nonanalytic, with a divergent expansion
whose Borel sum and truncation error are controlled. The critical
calculation thus depends on the observable's coupling to slow modes, not
on relaxation times alone.

Sections~\ref{sec:model}--\ref{sec:phases} develop the microscopic
construction, exact reduction and uniform-chain comparison.
Section~\ref{sec:periodic-profiles} strengthens the observational
constraints with spatially varying gates.
Sections~\ref{sec:critical} and \ref{sec:potts} derive the entropy
signatures, followed by the conserved comparison in
Sec.~\ref{sec:conserved}. Section~\ref{sec:extensions} connects duality,
general hidden clocks and further local dynamics.
Section~\ref{sec:selected-memory} determines when selection creates
visible memory. Appendix~\ref{app:information} develops the path-measure and rate
variational principles. Appendices~\ref{app:ising}--\ref{app:gate-inference}
derive the binary spectrum, duality, local extensions, counting response
and inference formulas. Appendices~\ref{app:conserved} and
\ref{app:droplet-transport} treat conserved states and transport;
Appendices~\ref{app:general-hidden}--\ref{app:selected-memory} treat
general hidden clocks, quenched response and selected memory.

\section{A driven network with autonomous visible dynamics}
\label{sec:model}
Consider binary variables $z_i=\pm1$ on the vertices of a finite simple graph
with $N$ vertices and edge set $\mathcal E$. Each variable flips independently
at rate $\gamma>0$. A hidden cycle sits on each edge $e=(i,j)$. Its state is
$h_e\in\mathbb Z_3=\{0,1,2\}$, with addition modulo three.
The rates $u_e$ for $h_e\to h_e+1$ and $v_e$ for $h_e\to h_e-1$
depend on the visible alignment $P_e=z_i z_j$,
\begin{equation}
 u_e(z)=r(1-\eta P_e)e^{A/2},\quad
 v_e(z)=r(1-\eta P_e)e^{-A/2},
 \label{eq:rates}
\end{equation}
where $r>0$, $A>0$ and $0\leq\eta<1$. All rates are finite and positive.
The dimensionless parameter $A$ is the entropy increment of one clockwise
jump; the affinity around a complete three-step cycle is $3A$.
Rates $r$ and $\gamma$ have units of inverse time. Entropies are in units
with Boltzmann's constant equal to one.

\begin{figure*}[t]
\centering
\includegraphics[width=\textwidth]{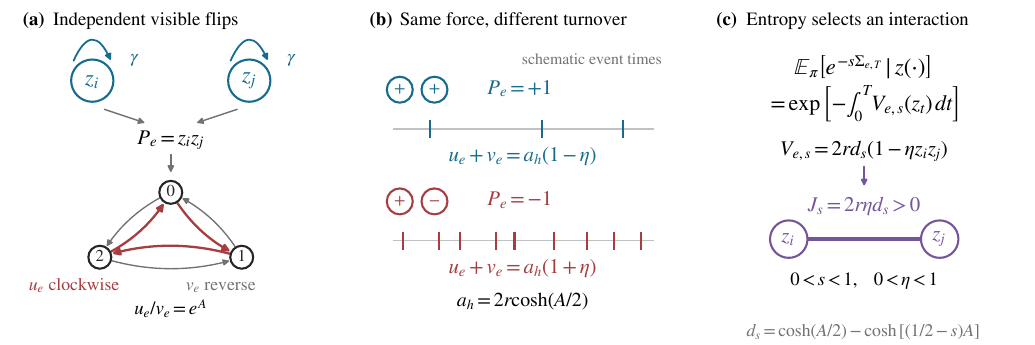}
\caption{From hidden kinetic coupling to an interaction between selected
histories. (a)   The visible variables flip independently at rate $\gamma$;
their product gates both directions of a hidden three-state cycle while
the ratio $u/v=e^A$ stays fixed. (b) Alignment reduces the conditional
hidden activity to $a_h(1-\eta)$ and opposition increases it to
$a_h(1+\eta)$, where $a_h=2r\cosh(A/2)$. The event marks are schematic.
(c) Summing the hidden entropy $\Sigma_{e,T}$ on the illustrated edge
exactly multiplies the visible path weight by
$\exp\left[-\int_0^T V_{e,s}\,dt\right]$. Here $s$ is the entropy bias and
$d_s=\cosh(A/2)-\cosh\left[(1/2-s)A\right]$. For $0<s<1$ and $\eta>0$, the
alignment part of this potential is the ferromagnetic interaction
$-J_s z_i z_j$, with $J_s=2r\eta d_s>0$. The scalar cost is retained. Forward arrows in (a) follow
$0\to1\to2\to0$, and reverse arrows follow the opposite orientation.
The solid visible bond belongs to the selected weight; the ordinary
visible rates remain independent.}

\label{fig:model}
\end{figure*}

We call $1-\eta P_e$ a kinetic gate, meaning the configuration-dependent
factor multiplying both directional rates. It changes the frequency of
hidden events while preserving their forward-to-reverse ratio. Aligned visible variables slow the hidden cycle; opposite
variables accelerate it. A physical interpretation is a fuel-driven
internal cycle whose transition barriers depend on neighboring
conformations. Local detailed balance can assign the same chemical entropy
$A$ to every forward step while the conformations remain energetically
degenerate~\cite{Maes2016Nondissipative}. The model isolates how this conformation-dependent turnover affects rare histories. The labels $z_i$ represent configurations, such as conformations.
Reversal reads a history in the opposite temporal order and reverses
its jumps while leaving each configuration label unchanged.
The operation $z_i\mapsto-z_i$ is a separate spin-inversion symmetry.

The visible process is autonomous. Hidden jumps never change $z$, and the
visible rates never depend on $h$. Under stationary preparation, its complete path
distribution is the same product of independent telegraph processes for
every $(r,A,\eta)$. At fixed $z$, a cycle with spatially uniform clockwise
and counterclockwise rates has a uniform stationary distribution, even
when its net clockwise circulation  is nonzero. The full stationary distribution is therefore
\begin{equation}
 \pi(z,h)=2^{-N}3^{-|\mathcal E|}.
 \label{eq:stationary}
\end{equation}
 Here $|\mathcal E|$ is the number of graph edges, hence the number of
hidden cycles. Equation~\eqref{eq:stationary} fixes the configuration
probabilities, while Eq.~\eqref{eq:rates} still changes full-state temporal correlations.

Let $K_{e,T}^{+}$ and $K_{e,T}^{-}$ count clockwise $(+)$ and counterclockwise $(-)$ hidden jumps on edge
$e$ in a time interval of length $T$, and let
$K_{e,T}=K_{e,T}^{+}+K_{e,T}^{-}$. For stationary histories, the logarithm
of the ratio of forward and reversed path probabilities is exactly
\begin{equation}
 \Sigma_T=A\sum_{e\in\mathcal E}(K_{e,T}^{+}-K_{e,T}^{-}).
 \label{eq:entropy}
\end{equation}
Visible flips and the stationary endpoint term contribute zero.
The mean entropy production and hidden jump activity per edge are
\begin{equation}
 \sigma=2rA\sinh(A/2),\qquad a_h=2r\cosh(A/2).
 \label{eq:means}
\end{equation}
Neither depends on $\eta$, since $\E_\pi\left[P_e\right]=0$.
The mean total jump rate is $N\gamma+|\mathcal E|a_h$.
Thus changing $\eta$ preserves the full stationary state, every visible
multi-time statistic, the mean full entropy production, the mean full
activity and every cycle affinity. We will determine whether its
entropy-conditioned phases are fixed by these ordinary diagnostics.

\section{Eliminating hidden histories exactly}
\label{sec:reduction}

All underlying microscopic processes are classical continuous-time Markov
processes. For microscopic states $x,y$, we use the backward generator
\begin{equation}
 (Lf)(x)=\sum_{y\ne x}k_{xy}\left[f(y)-f(x)\right],\qquad L\mathbf1=0.
 \label{eq:backward-convention}
\end{equation}
It evolves observables $f$. A column of state probabilities instead
obeys $\dot p=L^{\mathsf T}p$, with $p_x\ge0$ and
$\|p\|_1=\sum_xp_x=1$. The spin Hamiltonians below represent tilted
classical path weights. Their $L^2$ eigenvectors organize this evolution;
physical probabilities will follow from a positive similarity transform.

The appropriate statistical ensemble weights a full history by
$e^{-s\Sigma_T}$, with real entropy bias $s$. Its normalization and
long-time scaled cumulant generating function (SCGF) are
\begin{equation}
 Z_{N,T}(s)=\E_\pi\left[e^{-s\Sigma_T}\right],\enspace
 \psi_N(s)=\lim_{T\to\infty}T^{-1}\log Z_{N,T}(s).
 \label{eq:scgf}
\end{equation}
This is a statistical selection of histories. A Markov process realizing
the selected histories in their long-time bulk is  obtained below.
Throughout, entropy conditioning means this normalized exponential
weight. It differs from imposing an exact value of $\Sigma_T$, which
would remove the fluctuations of that same integrated observable.

\subsection{Conditional Poisson processes and a local interaction}
For a specified visible history $z(t)$, the two directions of every hidden
cycle are independent inhomogeneous Poisson processes. Their intensities
depend on $z(t)$ but not on the hidden cycle position. Summing these hidden
histories gives, at every finite $T$,
\begin{align}
 \E\left[e^{-s\Sigma_T}\mid z(\cdot)\right]
 &=\exp\left[-\int_0^T V_s(z(t))\dd t\right],\label{eq:conditional}\\
 V_s(z)&=2r d_s\sum_{e\in\mathcal E}(1-\eta P_e),\label{eq:potential}\\
 d_s&=\cosh(A/2)-\cosh\left[(1/2-s)A\right].\label{eq:ds}
\end{align}
The expectation in this identity is under the ordinary process, initially
in Eq.~\eqref{eq:stationary}. Later expectations with subscript $s$
refer to the stationary driven process defined below; disorder averages
will carry the separate subscript \emph{dis}.
For $0<s<1$, $d_s>0$. The entropy weight penalizes histories according to
their hidden turnover, which depends on visible alignment. This is the
mechanism through which an interaction enters a process with independent
ordinary visible dynamics.

For $0\le s\le1$, $V_s\ge0$, so the same weight can be read as survival
under removal at a state-dependent rate $V_s$. This is the killing
interpretation of the Feynman--Kac formula. For general real $s$ the
potential may have either sign.

Write $X_i$ for the operator that flips $z_i$ and $Z_i$ for multiplication
by $z_i$. The visible Feynman--Kac operator, whose potential accumulates
along a path, is
\begin{equation}
 H_s=\gamma\sum_i(\Id-X_i)+2r d_s\sum_{(i,j)\in\mathcal E}
 (\Id-\eta Z_i Z_j).
 \label{eq:H}
\end{equation}
The matrix $H_s$ is real symmetric in the counting inner product
$\langle f,g\rangle=\sum_z f(z)^*g(z)$, and $\mathbf1_z=1$.
With the ordinary visible probability $2^{-N}$,
Eq.~\eqref{eq:conditional} becomes
\begin{equation}
 Z_{N,T}(s)=2^{-N}\mathbf1^{\mathsf T}e^{-T H_s}\mathbf1.
 \label{eq:finiteT}
\end{equation}
No separation of time scales or approximation has entered this reduction.
It removes $3^{|\mathcal E|}$ hidden states while retaining their full
contribution to the generating function.

\subsection{A positive projection and the full spectrum}
The finite-time argument has an algebraic counterpart that also
determines whether hidden modes contribute to the slow critical spectrum.
Let $U_e$ act on a backward observable by
$(U_e f)(z,h)=f(z,h+\boldsymbol e_e)$, where $\boldsymbol e_e$ advances
cycle $e$ by one step modulo three. Its transpose acts on probability columns. These operators generate
the finite Abelian group $G=(\mathbb Z_3)^{|\mathcal E|}$. Averaging over
this group,
\begin{equation}
 \mathcal P=3^{-|\mathcal E|}\sum_{g\in G}U_g,
 \label{eq:projector}
\end{equation}
averages hidden positions at fixed visible configuration and preserves
nonnegative functions. Visible functions are closed under pointwise products and include the
constant function, so the range of this projection is an algebra. That algebra is invariant under
the ordinary generator, so projection defines an exact Markov process,
rather than just closing a selected set of means. The entropy-tilted operator commutes with every
$U_e$ and leaves this algebra invariant, now with the Feynman--Kac potential  
in Eq.~\eqref{eq:potential}.

An entropy tilt multiplies each off-diagonal rate by the entropy weight
of that jump and keeps its original escape diagonal. It therefore need
not conserve probability. In the convention of Eq.~\eqref{eq:backward-convention},
the negative full tilted generator is
\begin{equation}
 \begin{split}
 \mathscr H_s={}&\gamma\sum_i(\Id-X_i)\\
 &+\sum_e r(\Id-\eta Z_i Z_j)\left[\vphantom{e^{-(1/2-s)A}U_e^{-1}}2\cosh(A/2)\Id\right.\\
 &\hspace{18mm}\left.-e^{(1/2-s)A}U_e-e^{-(1/2-s)A}U_e^{-1}\right].
 \end{split}
 \label{eq:fullH}
\end{equation}
The invariant sector $U_e=1$ is precisely Eq.~\eqref{eq:H}.
 In the Doi--Peliti representation, let $b_i|1\rangle=|0\rangle$ and
$b_i|0\rangle=0$ on occupations $n_i=0,1$. These hard-core operators
enforce single occupancy through $b_i^2=0$ and
$\{b_i,b_i^\dagger\}=1$; operators on distinct sites commute. Then $X_i=b_i+b_i^\dagger$ and
$Z_i=1-2b_i^\dagger b_i$, on the two-dimensional local occupation space.
The hidden variables use a three-dimensional clock representation.
Equation~\eqref{eq:fullH} is an exact finite local-state stochastic
Hamiltonian; introducing unrestricted bosons would require projecting
back to these local spaces.

The lowest eigenvector $\phi_s(z)$ of the real symmetric $H_s$ is strictly
positive. Its lift, constant in $h$, is therefore the positive Perron
eigenvector of the full irreducible tilted generator. Thus
\begin{equation}
 \psi_N(s)=-E_N(s),\qquad E_N(s)=\min\operatorname{spec}H_s.
 \label{eq:perron}
\end{equation}
Positivity identifies this reduced eigenpair as the one controlling the
full entropy-generating function.

There is also quantitative control of every discarded sector. The
characters of $G$, its discrete Fourier modes, assign $U_e=e^{\mathrm i\theta_e}$ with
$\theta_e=0,\pm2\pi/3$. In such a sector, $H_s$ acquires the diagonal term
\begin{equation}
 \begin{split}
 \sum_e2r(1-\eta P_e)\Bigl\{&\cosh\left[(1/2-s)A\right](1-\cos\theta_e)\\
 &-\mathrm i\sinh\left[(1/2-s)A\right]\sin\theta_e\Bigr\}.
 \end{split}
 \label{eq:sectors}
\end{equation}
Taking the real part of the Rayleigh quotient of any sector eigenvector
shows that a sector with $m$ nonzero characters obeys
\begin{equation}
 \operatorname{Re} E\geq E_N(s)+
 3m r(1-\eta)\cosh\left[(1/2-s)A\right].
 \label{eq:hiddengap}
\end{equation}
The bound is independent of $N$. Hidden modes therefore stay separated
from critical visible modes at fixed positive rates and $\eta<1$.
This Fourier organization makes the exact reduction useful for dynamics
as well as stationary correlations.

\subsection{The dynamics of selected histories}

At finite observation time, the selected probability of a full history is
$dQ_{s,T}=Z_{N,T}(s)^{-1}e^{-s\Sigma_T}dP_T$, where $P_T$ denotes its
ordinary stationary probability. Far from the endpoints of a long history,
$Q_{s,T}$ approaches a time-homogeneous driven process at each fixed $N$.
Its generator and stationary distribution are distinct objects from the
nonconservative tilt and its eigenvector.

Normalize $\sum_z\phi_s(z)^2=1$. The generalized Doob transform uses this
positive eigenvector to turn the tilted evolution into a conservative
Markov process~\cite{ChetriteTouchetteControl2015}. Its rates are
\begin{align}
 w_s(z\to z^i)&=\gamma\,\frac{\phi_s(z^i)}{\phi_s(z)},\label{eq:doobvis}\\
 u_{e,s}(z)&=r(1-\eta P_e)e^{(1/2-s)A},\nonumber\\
 v_{e,s}(z)&=r(1-\eta P_e)e^{-(1/2-s)A}.
 \label{eq:doobhidden}
\end{align}
Here $z^i$ is obtained by flipping the $i$th variable. The stationary
distribution is $\phi_s(z)^2/3^{|\mathcal E|}$.
The driven visible process is reversible for every real $s$, whereas the
full process is reversible at $s=1/2$, when the hidden cycle affinities
vanish~\cite{PriorityBonancaJarzynski2016}. The collective visible rates in Eq.~\eqref{eq:doobvis} describe
the selected long histories and replace the constant visible flip rate
$\gamma$. At finite $T$ the exact selected measure also has endpoint effects.

For visible functions, the conservative backward generator is
\begin{equation}
 L_s^{\rm D}f=-\phi_s^{-1}H_s(\phi_s f)+E_N(s)f,
 \qquad L_s^{\rm D}\mathbf1=0.
 \label{eq:doob-generator-convention}
\end{equation}
Its stationary visible probability is $p_s^{\rm D}(z)=\phi_s(z)^2$.
Indeed $p_s^{\rm D}(z)w_s(z,z^i)=\gamma\phi_s(z)\phi_s(z^i)$ is
symmetric in its endpoints. This detailed-balance identity explains the
square and its normalization as a classical probability. For a nonsymmetric tilted matrix the
corresponding stationary probability is the product of its positive
left and right Perron vectors, normalized to sum to one.

The same transform relates stationary visible correlations to the
spin spectrum. Here $\langle\cdot\rangle_s$ denotes a stationary driven expectation.
For a diagonal observable $F$ and physical time separation $t\geq0$,
\begin{equation}
 \langle F(t)F(0)\rangle_s
 =\langle\phi_s|F e^{-t(H_s-E_N)}F|\phi_s\rangle.
 \label{eq:random-stochastic-correlation}
\end{equation}
Subtracting $\langle F\rangle_s^2$ gives the connected correlation.
Thus $e^{-t(H_s-E_N)}$, often called imaginary-time evolution in spin
language, describes correlations at physical time $t$ of the driven
Markov process. Excitation energies determine relaxation rates,
while observable matrix elements determine which modes a measurement sees.

Appendix~\ref{app:finite-duration-bulk} gives the finite-duration rates
and marginal explicitly. It shows how the $L^2$ lowest mode becomes an
$L^1$ stationary probability in the long-time interior.

\section{Identical ordinary diagnostics, different collective phases}
\label{sec:phases}
\begin{figure*}[t]
\includegraphics[width=\textwidth]{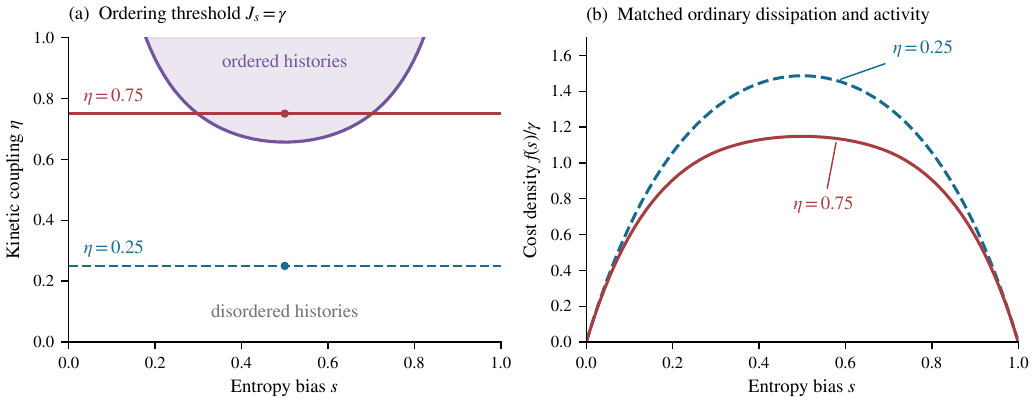}
\caption{Hidden kinetic coupling changes the phase while the ordinary
diagnostics stay fixed. (a) Entropy-conditioned phase diagram at
$\gamma=1$, $A=4$, $r=1/\sinh2$. The shaded region is $J_s>1$; its
tip is at $\eta_c=1/(2\tanh1)$. Both horizontal examples have
$\sigma=8$ and $a_h=2\coth2$. (b) Their exact cost densities from
Eq.~\eqref{eq:freeenergy}. The scalar term is retained, so both curves
have the same slope $\sigma$ at $s=0$, although their midpoint phases
differ.}
\label{fig:phases}
\end{figure*}

Take a periodic nearest-neighbor chain with $N\geq3$ and $|\mathcal E|=N$. Equation~\eqref{eq:H}
is a transverse-field Ising operator with a specified scalar term,
\begin{align}
 H_s&=N\left[\gamma+2r d_s\right]\Id-\gamma\sum_iX_i-J_s\sum_iZ_iZ_{i+1},
 \label{eq:ising}\\
 J_s&=2r\eta d_s.\label{eq:J}
\end{align}
The scalar is essential to the absolute trajectory probability and its
entropy derivatives. It does not affect $\phi_s$.
The Ising solution~\cite{CriticalPfeuty1970} determines both this cost and
the selected correlations. Its trajectory interpretation for independent
spins~\cite{JackSollich2010,VasiloiuOakesCarolloGarrahan2020} becomes a
microscopic entropy comparison through the coupling $J_s$ in
Eq.~\eqref{eq:J}.

We first take $T\to\infty$ at finite $N$, then $N\to\infty$ at fixed
$\gamma,r,A,\eta$. The cost density in this limit is
\begin{equation}
 \begin{split}
 f(s)&=-\lim_{N\to\infty}\frac{\psi_N(s)}N\\
 &=\gamma+2r d_s-\frac1{2\pi}\int_0^\pi\varepsilon_s(k)\dd k,\\
 \varepsilon_s(k)&=2\sqrt{\gamma^2+J_s^2-2\gamma J_s\cos k}.
 \end{split}
 \label{eq:freeenergy}
\end{equation}
Here $\varepsilon_s(k)$ is the Ising fermion excitation energy.
For $0<s<1$, the selected histories are ordered when $J_s>\gamma$.
In the bulk stationary distribution, this means
\begin{equation}
 \lim_{\ell\to\infty}\langle z_i z_{i+\ell}\rangle_s=m_s^2,
 \qquad m_s=(1-\gamma^2/J_s^2)^{1/8}.
 \label{eq:order}
\end{equation}
Finite rings  retain spin-flip symmetry, so the two-point definition
avoids assigning them a spontaneous nonzero one-point mean.
The transition at $J_s=\gamma$ is a genuine infinite-system singularity;
the finite-size SCGF remains analytic in the parameters considered.

For $0\le J_s<\gamma$, the long-distance correlation vanishes.
At $J_s=\gamma$ the correlation length diverges and the order parameter
turns on continuously with exponent $1/8$~\cite{CriticalPfeuty1970}.
These phases characterize configurations at bulk times of the
entropy-selected classical histories.

\begin{proposition}[Matched diagnostics and opposite phases]
\label{prop:matched}
Fix $\gamma,r,A>0$, and vary $0\leq\eta<1$ in Eq.~\eqref{eq:rates}.
All members have identical visible path distributions, mean full entropy
production, mean full activity and cycle affinities. Their entropy-midpoint
ensembles include both an ordered and a disordered member if and only if
\begin{equation}
 J_{\max}=2r\left[\cosh(A/2)-1\right]
          =a_h\left[1-\operatorname{sech}(A/2)\right]>\gamma.
 \label{eq:criterion}
\end{equation}
The separating value is $\eta_c=\gamma/J_{\max}$.
\end{proposition}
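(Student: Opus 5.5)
The proof splits into two parts: a matched-diagnostics part, which has already been established, and the phase criterion, which reduces to a one-parameter monotonicity statement at $s=1/2$.

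\emph{Matched diagnostics.} For the first part I will simply collect what Sec.~\ref{sec:model} already showed.
\begin{itemize}
\item The visible process is autonomous, a product of independent telegraph processes at rate $\gamma$ for every $\eta$.
\item The stationary law \eqref{eq:stationary} does not depend on $\eta$.
\item The means \eqref{eq:means} do not depend on $\eta$, because $\E_\pi[P_e]=0$.
\item The affinity $\log(u_e/v_e)=A$ per step, hence $3A$ per cycle, holds for every $\eta$, since the gate $1-\eta P_e$ cancels in the ratio.
\end{itemize}
Nothing new is needed here.

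\emph{Phase criterion.} I will work on the periodic chain at the entropy midpoint $s=1/2$. There $\cosh[(1/2-s)A]=1$, so Eq.~\eqref{eq:ds} gives $d_{1/2}=\cosh(A/2)-1>0$. Equation~\eqref{eq:J} then gives $J_{1/2}=\eta J_{\max}$ with $J_{\max}=2r[\cosh(A/2)-1]$. Using $a_h=2r\cosh(A/2)$, this rewrites as $a_h[1-\operatorname{sech}(A/2)]$, which is the second form in \eqref{eq:criterion}.

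By the reduction \eqref{eq:ising} and the Ising solution quoted in \eqref{eq:order}, the midpoint ensemble is ordered, with $m_{1/2}>0$, exactly when $J_{1/2}>\gamma$. It is disordered, with vanishing long-distance correlation, when $0\le J_{1/2}<\gamma$. The scalar term in \eqref{eq:ising} does not affect $\phi_s$ and so plays no role in this classification.

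As $\eta$ ranges over $[0,1)$, $J_{1/2}=\eta J_{\max}$ ranges continuously and increasingly over $[0,J_{\max})$. The value $\eta=0$ always gives a disordered member.

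\emph{The two directions of the equivalence.}
\begin{itemize}
\item If $J_{\max}>\gamma$, put $\eta_c=\gamma/J_{\max}<1$. Any $\eta\in(\eta_c,1)$ then yields an ordered member, and any $\eta\in[0,\eta_c)$ yields a disordered one.
\item If $J_{\max}\le\gamma$, then $J_{1/2}<J_{\max}\le\gamma$ for every admissible $\eta<1$. Every member is therefore disordered, and no ordered member exists.
\end{itemize}
This gives the "if and only if", with separating value $\eta_c=\gamma/J_{\max}$.

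\emph{Where care is needed.} The only delicate point is the boundary case $J_{\max}=\gamma$. It must be excluded strictly, because $\eta<1$ is open: $J_{1/2}$ approaches $\gamma$ but never reaches it, so no member sits at criticality.

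I will also note that "phase" refers to the order of limits fixed above, $T\to\infty$ first and then $N\to\infty$. The classification therefore rests entirely on the quoted Pfeuty result and needs no finite-size argument.
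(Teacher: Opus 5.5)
Your proposal is correct and is essentially the paper's own proof. The matched statistics follow from the $\eta$-independent stationary law, the means in Eq.~\eqref{eq:means} and autonomy of the visible generator; the phase criterion follows from $J_{1/2}=\eta J_{\max}$ together with the Ising condition $J_{1/2}\gtrless\gamma$, with $\eta=0$ always disordered and strict inequality forced by the open range $\eta<1$. Your explicit monotonicity argument in $\eta$ simply writes out the two directions that the paper states in one line.
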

\begin{proof}
The matched statistics follow from Eqs.~\eqref{eq:stationary} and
\eqref{eq:means} and autonomy of the visible generator. At the midpoint,
$J_{1/2}=\eta J_{\max}$. The Ising criterion then proves both directions.
Strict inequality in Eq.~\eqref{eq:criterion} is required because all
hidden rates are positive and $\eta<1$. The member $\eta=0$ is disordered.
\end{proof}

Figure~\ref{fig:phases} illustrates an explicit pair. Choose units $\gamma=1$, set $A=4$ and
$r=1/\sinh2$. Both members have $\sigma=8$ and $a_h=2\coth2$.
With $\eta=1/4$ the midpoint coupling is $\tfrac12\tanh1<1$;
with $\eta=3/4$ it is $\tfrac32\tanh1>1$.
This phase distinction survives despite complete agreement of the
ordinary diagnostics stated in Proposition~\ref{prop:matched}.

The reversible midpoint has a direct information-theoretic meaning.
Let $P_T$ be the full stationary path measure and $P_T^R$ its time
reversal. For a trial history measure $Q$, the Kullback--Leibler divergence
$\KL(Q\Vert P_T)=\int dQ\log(dQ/dP_T)$ measures its logarithmic
probability cost. Among all reversal-invariant $Q$, the minimizer is
\begin{equation}
 \begin{gathered}
 Q_{1/2,T}=\frac{\sqrt{P_T P_T^R}}{Z_{N,T}(1/2)},\\
 \min_{Q=Q^R}\KL(Q\Vert P_T)=-\log Z_{N,T}(1/2).
 \end{gathered}
 \label{eq:KLprojection}
\end{equation}
This finite-time optimization ranges over history measures, including
ones with time-dependent endpoint effects. The corresponding rate minimum over
stationary reversible Markov dynamics is $E_N(1/2)$, attained by
Eqs.~\eqref{eq:doobvis} and \eqref{eq:doobhidden} at the midpoint.
Reversible information projections organize Markov kernels
\cite{WolferWatanabe2021} and quantify distance from equilibrium
\cite{Andrieux2025}; the driven-process variational principle relates
path selection to optimal control~\cite{ChetriteTouchetteControl2015};
Appendix~\ref{app:information} gives the needed continuous-time derivation.
Proposition~\ref{prop:matched} thus concerns the collective organization
of the least costly reversible histories.

The origin of the effective interaction can be stated directly on path
space. Let $y$ be a visible history and $h$ its hidden complement, with
observation commuting with time reversal. Products of measures below
mean products of densities with respect to a common reference measure.
For $0\leq s\leq1$, define
the overlap of the conditional hidden history distributions by
\begin{equation}
 \mathcal A_s(y)=\int
 \left[dP_T(h\mid y)\right]^{1-s}
 \left[dP_T^{\rm R}(h\mid y)\right]^s.
 \label{eq:app-conditional-affinity}
\end{equation}
The visible marginal $Q_s^{\rm vis}$ of the normalized full selected
measure is therefore
\begin{equation}
 dQ_s^{\rm vis}(y)
 =\frac{
 \left[dP_T^{\rm vis}(y)\right]^{1-s}
 \left[d(P_T^{\rm vis})^{\rm R}(y)\right]^s
 \mathcal A_s(y)}{Z_T(s)}.
 \label{eq:app-visible-conditional-tilt}
\end{equation}
Here $Z_T(s)$ is the normalizing path integral, equal to $Z_{N,T}(s)$
for the present model. Selection and observation commute precisely when $\mathcal A_s(y)$
is independent of the visible history on the selected support.
In our model the ordinary visible process is reversible, so selecting
its own entropy leaves it unchanged. Selecting the full entropy instead
weights it by the history-dependent overlap in Eq.~\eqref{eq:conditional}.
Appendix~\ref{app:conditional-overlap} relates this obstruction to the
relative-entropy chain rule.

\subsection{Dissipation and activity boundaries}
The criterion can be expressed through measured mean quantities.
The ratio $\sigma/a_h=A\tanh(A/2)$ fixes $A>0$ uniquely. It does not
fix $\eta$. At specified $a_h$ and $\sigma$, an ordered member exists
precisely when $a_h>\gamma$ and
\begin{equation}
 \sigma>\sigma_c(a_h),\quad
 \sigma_c(a_h)=2\sqrt{\gamma(2a_h-\gamma)}
 \operatorname{arcosh}\frac{a_h}{a_h-\gamma}.
 \label{eq:activityfrontier}
\end{equation}
This follows by solving $a_h\left[1-\operatorname{sech}(A/2)\right]=\gamma$
and using Eq.~\eqref{eq:means}. At equality the limiting value
$\eta=1$ would be required, outside the positive-rate domain.

If activity is unconstrained, the same family has an ordered member
at a specified entropy rate if and only if $\sigma>4\gamma$, since
\begin{equation}
 J_{1/2}=\eta\sigma\frac{\tanh(A/4)}A<\frac\sigma4.
 \label{eq:entropybudget}
\end{equation}
The function $\tanh(A/4)/A$ decreases from $1/4$ to zero. Choosing
sufficiently small positive $A$ and $\eta$ sufficiently close to one
proves sufficiency when $\sigma>4\gamma$.
Approaching this lower boundary requires growing hidden activity.
The factor $1/4$ is also familiar in entropy-fluctuation
bounds~\cite{PietzonkaBaratoSeifert2016,GingrichHorowitzPerunovEngland2016}.
Here the specific result is its sharp role in the phase criterion of
this kinetic family, with the finite-activity boundary in
Eq.~\eqref{eq:activityfrontier}.

When $J_{1/2}>\gamma$, the two critical entropy biases are
\begin{equation}
s_\pm=\frac12\pm\frac1A\operatorname{arcosh}
 \left[\cosh(A/2)-\frac{\gamma}{2r\eta}\right].
 \label{eq:edges}
\end{equation}
The selected histories are ordered for $s_-<s<s_+$. The symmetry $s\leftrightarrow1-s$ is
time-reversal symmetry of the entropy-generating function, and the
edges meet at $s=1/2$ when $\eta=\eta_c$.

\subsection{Which additional observation distinguishes them?}
The parameter omitted by the mean diagnostics is a correlation between
hidden turnover and visible alignment. At ordinary stationarity,
\begin{equation}
 \lim_{T\to\infty}\frac1T\E_\pi\left[\int_0^T P_e(t)\dd K_{e,t}\right]
 =-\eta a_h.
 \label{eq:crossreadout}
\end{equation}
The value of $P_e$ at a hidden jump is unambiguous because that jump
leaves $z$ unchanged. Together with $a_h$, this mixed first moment
identifies $\eta$ and therefore the phase criterion. It requires
joint access to visible configurations and hidden event times.

Even an entropy fluctuation measurement without state-resolved hidden
activity distinguishes members with different $\eta\geq0$.
For any finite simple graph in the binary model,
\begin{equation}
 \lim_{T\to\infty}\frac{\Var_\pi\Sigma_T}{T}
 =|\mathcal E|\left[A^2a_h+\frac{\eta^2\sigma^2}{2\gamma}\right].
 \label{eq:ordinaryvariance}
\end{equation}
Conditional Poisson noise gives the first term. The second is the
fluctuation of the integrated conditional entropy rate
$\sigma\sum_e(1-\eta P_e)$. Independence of the visible spins implies
$\Cov(P_e(t),P_f(0))=\delta_{ef}e^{-4\gamma|t|}$, including for edges
that share a vertex, which proves Eq.~\eqref{eq:ordinaryvariance}.
Purely visible histories carry no information about $\eta$, but these
additional observations measure the kinetic correlation they omit.

For event-resolved observations this gain can be quantified by the Fisher
information, the mean squared derivative of the log likelihood. For the
parameters $(\eta,a_h,A)$, the ordinary stationary path information is
\begin{equation}
 \frac{\mathcal I_T}{|\mathcal E|T}
 =\operatorname{diag}\left(
 \frac{a_h}{1-\eta^2},\frac1{a_h},\frac{a_h}{4\cosh^2(A/2)}\right).
 \label{eq:gate-fisher}
\end{equation}
The gate  can be inferred locally from the joint observations even when $a_h$ and
$A$ must also be estimated. These latter unknowns are nuisance parameters
for inference of $\eta$. The score, the derivative of the path log
likelihood, has zero covariance with their scores. Visible paths alone have exactly zero
information about $\eta$. Appendix~\ref{app:gate-inference} derives this
specialization of path-space information theory~\cite{Pantazis2013} and
distinguishes event-resolved information from aggregate moment measurements.

\section{Spatial organization beyond matched moments}
\label{sec:periodic-profiles}

The preceding identification used a single uniform gate. For a spatial profile on the same binary ring, a few aggregate
measurements can leave the phase-controlling product undetermined. Allow a fixed periodic gate profile $\eta_i\in(0,1)$,
while keeping $\gamma,r,A$ common. Every edge still has the same entropy
and activity means in Eq.~\eqref{eq:means}. Write
$J_0(s)=2rd_s$ for the interaction scale before multiplication by the
gate. The exact Ising bonds are $J_i(s)=J_0(s)\eta_i$.
For $0<s<1$, repeat a fixed period of length $n$ and take the number of periods to infinity. The phase depends on the geometric mean,
\begin{equation}
 J_0(s)\left(\prod_{i=1}^{n}\eta_i\right)^{1/n}
 \gtrless\gamma
 \quad\text{for order/disorder}.
 \label{eq:periodic-criterion}
\end{equation}
The product arises from a boundary mode of the Ising operator on an
auxiliary semi-infinite open chain with the same periodic bonds.
A Majorana mode is a self-adjoint linear combination of fermion creation
and annihilation operators. At zero energy, its coefficient on successive
sites is multiplied by $\gamma/J_i$. Normalizability after successive periods
gives Eq.~\eqref{eq:periodic-criterion}, consistently with the exact
surface-magnetization formula~\cite{IgloiRieger1998RandomWalks}.

To keep the first two gate moments fixed while changing their product,
encode three gates as roots of a monic cubic. Its first two coefficients
fix those moments, while its constant coefficient controls the product.
We therefore vary the constant term and choose $\eta_i(c)$ to be the roots of 
\begin{equation}
 p_c(x)=(x-\tfrac12)^3-\frac{x-\tfrac12}{16}+c,
 \qquad c_\pm=\pm\frac1{256}.
 \label{eq:periodic-cubic}
\end{equation}
Their discriminant is $37/65536>0$. The two stationary points lie
between zero and one, and $p_c(0)<0<p_c(1)$, so all three roots belong
to $(0,1)$. Newton's identities, which relate polynomial coefficients
to sums of powers of their roots, give
\begin{equation}
 \sum_i\eta_i=\frac32,\quad
 \sum_i\eta_i^2=\frac78,\quad
 \prod_i\eta_i(c_\pm)=\frac{24\mp1}{256}.
 \label{eq:periodic-moments}
\end{equation}
Repeat each triplet on a ring with $N$ divisible by three and choose
$J_0(1/2)/\gamma=(32/3)^{1/3}$.
The cubed ratios in Eq.~\eqref{eq:periodic-criterion} are $23/24$
and $25/24$. The two midpoint ensembles are therefore in opposite
phases in the infinite-chain limit at finite positive microscopic rates.

In addition to the matched ordinary diagnostics, both profiles have the
same aggregate turnover--alignment rate
$T^{-1}\E_\pi\left[\sum_i\int_0^T P_i\dd K_i\right]=-a_h\sum_i\eta_i$. They also have identical ordinary entropy variance
at every finite observation time,
\begin{align}
 \Var_\pi\Sigma_T={}&NA^2a_hT
 +\sigma^2\sum_i\eta_i^2
 \left[\frac{T}{2\gamma}
 -\frac{1-e^{-4\gamma T}}{8\gamma^2}\right].
 \label{eq:periodic-finite-variance}
\end{align}
Conditional Poisson variance and the covariance
$\Cov(P_i(t),P_j(0))=\delta_{ij}e^{-4\gamma|t|}$ prove this identity.
Edge-resolved mixed observations can still distinguish the profiles.

The algebraic mechanism extends to any finite list of gate moments.
Start with $n$ distinct roots in $(0,1)$ whose product is critical and
change only the constant coefficient of their monic polynomial.
For a sufficiently small change the roots remain real and positive.
Newton's identities preserve their power sums through order $n-1$,
whereas their product crosses the phase boundary. This statement concerns
gate moments; higher trajectory cumulants can also depend on spatial
multi-edge correlations.

\section{Critical entropy fluctuations and relaxation}
\label{sec:critical}
We return to the binary ring with a spatially uniform gate. The critical
predictions below concern stationary driven histories, first at fixed $N$
and then as $N$ grows; the ordinary visible dynamics still consists of
independent flips. First, ordinary spectral relaxation remains fast. At $s=0$
the visible generator has gap $2\gamma$, while Eq.~\eqref{eq:hiddengap}
bounds the remaining sectors. The real-part spectral gap of the full
ordinary generator is at least
\begin{equation}
 \min\{2\gamma,3r(1-\eta)\cosh(A/2)\},
 \label{eq:ordinarygap}
\end{equation}
uniformly in $N$. At the critical midpoint, the visible Ising gap is
$2\gamma\tan\left[\pi/(4N)\right]$. For all sufficiently large $N$ it is smaller
than the hidden-sector bound and hence equals the full driven gap.
Critical slowing down with dynamical exponent one therefore occurs in
the full entropy-conditioned system despite the size-independent
ordinary gap.

\subsection{A quadratic entropy source}
The meeting of the two critical bias edges has a less familiar consequence
for entropy measurements. Set $\delta=s-1/2$ and tune $J_{1/2}=\gamma$.
Then
\begin{equation}
 J_s-\gamma=-r\eta A^2\delta^2+O(\delta^4).
 \label{eq:tangent}
\end{equation}
The thermal direction here is the energy perturbation $J-\gamma$,
rather than a change of bath temperature. The entropy source approaches
the Ising critical surface quadratically,
rather than linearly. The Ising singularity
$(J-\gamma)^2\log|J-\gamma|$ consequently becomes
\begin{equation}
 f_{\mathrm{sing}}(s)=\frac{r^2\eta^2A^4}{\pi\gamma}
 \delta^4\log|\delta|,
 \label{eq:fourthlog}
\end{equation}
up to terms analytic in $\delta$ and higher singular orders.
Thus the variance density stays finite at this critical point, while
the fourth entropy cumulant detects a logarithmic singularity.
An ordinary crossing of either edge away from the midpoint instead
couples linearly to the thermal direction.

The counting mechanism gives a complementary explanation. We measure
the original entropy observable in the stationary midpoint-driven
process, which is itself reversible. The entropy production computed from the driven rates vanishes, but
we continue to count the original jump marks $\pm A$. Their fluctuations
are the observable studied here. Let $\kappa_n$ denote their entropy
cumulants. With entropy counting field $\zeta$, their long-time generating rate is
$\psi_N(1/2-\zeta)-\psi_N(1/2)$, so for even $n$,
$\lim_{T\to\infty}\kappa_n/T=\psi_N^{(n)}(1/2)$.
The counting identity below also holds at finite $T$ under this stationary
preparation. For the directly counted hidden activity, let
$K_T=\sum_e K_{e,T}$ count hidden jumps, and define
$\mathscr C_T(\zeta,\omega)=\log\E_{1/2}\left[e^{\zeta\Sigma_T+\omega K_T}\right]$. Conditional on their event times, the
directions at the midpoint are independent fair signs. Hence
\begin{equation}
 \mathscr C_T(\zeta,\omega)
 =\mathscr C_T\bigl(0,\omega+\log\cosh(A\zeta)\bigr).
 \label{eq:all-count-composition}
\end{equation}
In particular, at every finite $T$,
\begin{equation}
 \begin{aligned}
 \kappa_2&=A^2\E_{1/2}\left[K_T\right],\\
 \kappa_4&=A^4\left[3\Var K_T-2\E_{1/2}\left[K_T\right]\right].
 \end{aligned}
 \label{eq:entropy-activity-cumulants}
\end{equation}
The random number of fair entropy increments has two distinct roles.
Its mean fixes the entropy variance; its temporal correlations enter
the fourth cumulant through the hidden-activity susceptibility.

Odd entropy cumulants vanish by reversal symmetry. The fourth is the
first even cumulant to contain the variance of the number of hidden
events, hence an integrated alignment response. Higher even cumulants
probe higher counting responses, as Eqs.~\eqref{eq:count-sixth} and
\eqref{eq:count-eighth} make explicit.

 This counting identity holds for the common-affinity
uniform cycles on any graph and with any autonomous reversible visible generator;
it does not assume Ising criticality.

\subsection{Exact finite-size amplitudes}
For this uniform periodic critical chain, with $m=0,\ldots,N-1$, define
\begin{equation}
 \begin{gathered}
 R_N=\csc\frac{\pi}{2N},\qquad
 \theta_m=\frac{(m+1/2)\pi}{N},\\
 S_N=\frac12\sum_{m=0}^{N-1}\frac{\cos^2\theta_m}{\sin\theta_m}.
 \end{gathered}
 \label{eq:RS}
\end{equation}
The exact finite-size results, illustrated in Fig.~\ref{fig:cumulants}, are
\begin{align}
 \lim_{T\to\infty}\frac{\kappa_2}{T}
 &=2rA^2(N-\eta R_N),\label{eq:k2}\\
 \lim_{T\to\infty}\frac{\kappa_4}{T}
 &=2rA^4(N-\eta R_N)
   +\frac{12r^2\eta^2 A^4}{\gamma}S_N.\label{eq:k4}
\end{align}
Appendix~\ref{app:ising} derives these expressions from the finite-ring
spectrum and the integrated alignment covariance. In particular,
\begin{equation}
 \frac{S_N}{N}=\frac{\log(8N/\pi)+\gamma_{\mathrm E}-1}{\pi}
 +O(N^{-2}),
 \label{eq:Sasym}
\end{equation}
where $\gamma_{\mathrm E}$ is Euler's constant. Equations~\eqref{eq:k2}
and \eqref{eq:k4} fix both the regular background and the divergent
coefficient.

The cost of making the full histories reversible contains a universal
finite-size term. At $J_{1/2}=\gamma$,
\begin{align}
 E_N(1/2)&=N\left[\gamma+2r d_{1/2}\right]-2\gamma R_N\nonumber\\
 &=N f(1/2)-\frac{\pi\gamma}{6N}+O(N^{-3}).
 \label{eq:casimir}
\end{align}
This is the conformal finite-size energy
$-\pi c v/(6N)$ with central charge $c=1/2$ and velocity
$v=2\gamma$~\cite{Affleck1986,BloteCardyNightingale1986}.
In the present setting it is a subextensive correction to a minimum
relative-entropy rate. The full microscopic interpretation is exact
because the hidden modes remain separated by Eq.~\eqref{eq:hiddengap}.

\begin{figure*}[t]
\includegraphics[width=\textwidth]{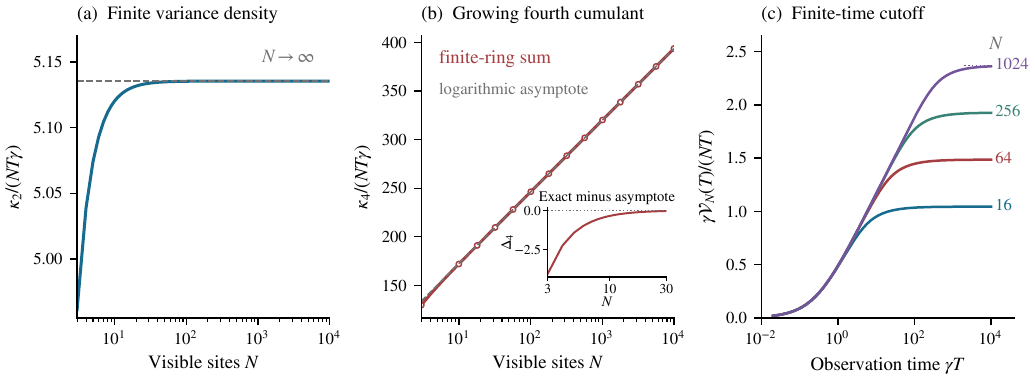}
\caption{Which entropy fluctuation detects criticality, and when.
The force and rate scale are those of Fig.~\ref{fig:phases}, with
$\eta=\eta_c$. (a) The long-time variance density approaches a finite
limit. (b)  The fourth-cumulant density grows logarithmically with $N$; the inset
shows $\Delta_4$, the exact $\kappa_4/(NT\gamma)$ minus its plotted
large-$N$ expression. Solid curves evaluate Eqs.~\eqref{eq:k2} and \eqref{eq:k4}; dashed
curves use their asymptotes. (c) The exact integrated alignment variance
$\gamma\mathcal V_N(T)/(NT)$ grows with observation time and saturates
at $S_N/N$ (dotted levels). Multiplication by $12r^2\eta^2A^4/\gamma$
gives its contribution to $\kappa_4/(NT)$. The common $1/t$ correlation
tail is cut off by observation time or ring size. All entropy cumulants
refer to the original entropy observable in the stationary reversible
midpoint process.}

\label{fig:cumulants}\label{fig:time-crossover}
\end{figure*}

\subsection{Finite observation time}
The logarithm has a temporal cutoff that can be determined exactly.
At criticality, let $\mathcal B=\sum_iZ_iZ_{i+1}$ and write
$C_{\mathcal B}(t)=\langle\mathcal B(t)\mathcal B(0)\rangle_{1/2}
-\langle\mathcal B\rangle_{1/2}^2$. The spectral identity in
Eq.~\eqref{eq:random-stochastic-correlation} has a particularly instructive
consequence. With $\mathcal V_N(T)=\Var\int_0^T\mathcal B(t)\dd t$,
\begin{equation}
 \begin{gathered}
 \mathcal V_N(T)=2\int_0^T(T-t)C_{\mathcal B}(t)\dd t,\\
 \frac{C_{\mathcal B}(t)}{N}\sim\frac{1}{2\pi\gamma t},
 \qquad 1\ll\gamma t\ll N.
 \end{gathered}
 \label{eq:critical-green-kubo-main}
\end{equation}
The conditional hidden intensity is $2r\left[N-\eta\mathcal B(t)\right]$, so
$\Var K_T=\E_{1/2}\left[K_T\right]+4r^2\eta^2\mathcal V_N(T)$. Together with
Eq.~\eqref{eq:entropy-activity-cumulants}, this connects the $1/t$ tail
to the fourth entropy cumulant. Appendix~\ref{app:finite-time-counts} evaluates
the finite spectral sum at every $N\ge3$ and $T$. For
$1\ll\gamma T\ll N$, the growing fourth-cumulant rate is
\begin{equation}
 \frac{\kappa_4}{NT}
 =\frac{12r^2\eta^2A^4}{\pi\gamma}\log(\gamma T)+O(1).
 \label{eq:temporal-log}
\end{equation}
For $\gamma T\gg N$ the logarithm saturates at $\log N$, recovering
Eq.~\eqref{eq:k4}. The variance rate in Eq.~\eqref{eq:k2} is already exact
at finite $T$ under stationary midpoint preparation. These formulas
separate a long-time cumulant rate from a cumulant measured before the
critical relaxation time.

\subsection{Random criticality beyond a divergent cumulant}
\label{sec:random-response}
We now keep the independent binary visible flips but replace the
periodic gate profile by spatial disorder. Each realization is fixed
throughout a trajectory; this is quenched disorder. For nondegenerate
independent identically distributed gates in a fixed interval
$0<\eta_{\min}\leq\eta_i\leq\eta_{\max}<1$, the critical criterion is
$J_0(s)\exp(\E_{\rm dis}\left[\log\eta\right])=\gamma$. Here $\E_{\rm dis}$ averages over the gate distribution after computing
a trajectory observable for each fixed realization. It differs from
$\E_s$, which averages stationary driven histories at fixed gates.
The established random-Ising theory has activated critical relaxation,
$\log\tau_{\rm rel}\sim N^{1/2}$ for typical scales, in place of a
finite dynamical power~\cite{Fisher1995,IgloiRieger1998RandomWalks}.
The uniform hidden-sector bound transfers these low-energy scales to
the full driven process. Near a tuned critical midpoint, the quadratic
entropy source in Eq.~\eqref{eq:tangent} makes the neighboring
Griffiths dynamical exponent diverge as $(s-1/2)^{-2}$. In this regime,
rare strongly coupled intervals cause slow power-law relaxation. Appendix~\ref{app:random-gates} derives the
coefficient and specifies the distinction between quenched, typical
and averaged quantities.

The fourth cumulant  probes the response to the weighted alignment
$\mathcal B_\eta=\sum_i\eta_iZ_iZ_{i+1}$. For one fixed sample, let
$\mathscr E_N(x)$ be the ground energy of
$-\gamma\sum_iX_i-x\mathcal B_\eta$, and define
$\chi_\eta(N)=-\mathscr E_N''(x_*)$ at $x_*=J_0(1/2)$.
The bond scale $x$ has units of inverse time. The conditional-count
identity, Eq.~\eqref{eq:entropy-activity-cumulants},  
relates the variance of hidden event counts to the alignment
susceptibility and gives the long-time relation
\begin{equation}
 \lim_{T\to\infty}\frac{\kappa_4-A^2\kappa_2}{T}
 =12r^2A^4\chi_\eta(N).
 \label{eq:random-cumulant-main}
\end{equation}
Thus a slow mode matters for this measurement through its alignment
matrix element as well as its lifetime.

Free fermions reduce the $2^N$-state response to an $N$-dimensional
matrix. Set $M(x)=\gamma I-x\operatorname{diag}(\eta_i)S_{\rm AP}$,
where $S_{\rm AP}$ shifts a site and changes sign on crossing the
periodic boundary. The unique positive spin ground state is even under global spin
inversion; its Jordan--Wigner fermions therefore obey this antiperiodic
boundary condition. If $M=U\operatorname{diag}(\varsigma_a)V^{\mathsf T}$
is its singular-value decomposition, with orthogonal $U,V$ and
$\varsigma_a>0$, the ground energy is
$\mathscr E_N(x)=-\sum_a\varsigma_a$. The two sets of real Majorana
modes are coupled by $M$, so its singular values are half the fermion
excitation energies. Put $\mathsf F=U^{\mathsf T}M'V$, with $M'=\dd M/\dd x$.
Then
\begin{equation}
 \chi_\eta(N)=\sum_{a<b}
 \frac{(\mathsf F_{ab}-\mathsf F_{ba})^2}{\varsigma_a+\varsigma_b}.
 \label{eq:random-curvature-main}
\end{equation}
The organizing step is to embed $M$ in the symmetric matrix
$\bigl(\begin{smallmatrix}0&M\\M^{\mathsf T}&0\end{smallmatrix}\bigr)$,
whose eigenvalues are $\pm\varsigma_a$. In the curvature of the sum
of positive eigenvalues, transitions within that branch cancel in
pairs. The remaining positive--negative transitions have denominators
$\varsigma_a+\varsigma_b$ and matrix elements proportional to the
antisymmetric part of $\mathsf F$. This proves the structure of
Eq.~\eqref{eq:random-curvature-main} and its regularity at coincident
positive singular values. Appendix~\ref{app:random-energy-response}
gives the differentiation and finite-time response. Physically,
alignment preserves spin-inversion parity and therefore creates
fermion pairs. In
particular, the much smaller odd-parity gap cannot be substituted for
these denominators. The formula also avoids subtracting nearby
extensive energies in a numerical derivative.

The corresponding bounded lattice problem can be tested with
Eq.~\eqref{eq:random-curvature-main}. We choose
gates whose logarithms $\log(4\eta_i)$ are independent and uniform on $\left[-w,w\right]$,
$\gamma=1$ and $x_*=4$. For $w=0.3$ or $0.8$ all rates stay positive,
and the ensemble satisfies the geometric-mean critical condition
exactly. Each sample retains its own finite-size drift.
Figure~\ref{fig:disorder-response} compares the disorder-averaged
curvature density with clean critical growth. The numerical test,
its sampling errors and its precision checks are detailed in
Appendix~\ref{app:disorder-numerics}. For $w=0.8$, the mean curvature
density stays close to $0.0502$ from $N=64$ to $1024$, while the median
smallest pair energy decreases from $0.1329$ to $1.379\times10^{-4}$.
The weak-disorder case reaches an apparent plateau near $0.0865$
after a longer crossover.

\begin{figure*}[t]
\includegraphics[width=\textwidth]{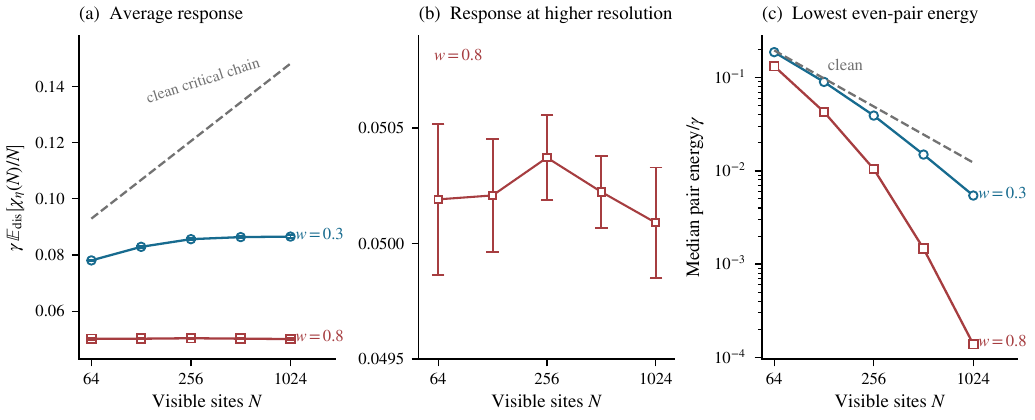}
\caption{Quenched critical response without clean logarithmic growth.
(a) Mean curvature density from Eq.~\eqref{eq:random-curvature-main}
for independent log-uniform gates. Error bars are one sampling standard
error, using 512 samples at each of $N=64,128,256$, 256 at $512$, and
64 at $1024$, for each width. The dashed curve is the exact clean
critical result. (b) The same $w=0.8$ response and error bars on an expanded vertical
scale. The variation across sizes is not resolved beyond sampling error.
(c) Median smallest even-pair energy in the same samples, compared with
the lowest clean even-parity pair excitation $8\gamma\sin\left[\pi/(2N)\right]$.
The falling energy scale does not force the weighted curvature to grow.
Connecting lines guide the eye; no limiting value or exponent is fitted.}

\label{fig:disorder-response}
\end{figure*}

These results motivate the lattice conjecture that
$\lim_{N\to\infty}\E_{\rm dis}\left[\chi_\eta(N)/N\right]$ is finite for these two bounded
distributions. The continuum derivation and the lattice evidence have
different logical roles. A proof for general bounded disorder would
also have to control rare samples before interchanging derivatives,
disorder averages and the thermodynamic limit.

The long-wavelength fermion description of a weakly disordered Ising
chain has a spatially random mass, the coefficient measuring local
distance from its critical point. In this continuum theory the averaged energy
is infinitely differentiable but nonanalytic at the critical
point~\cite{BunderMcKenzie1999}. Up to analytic backgrounds and an
overall dimensional factor, its relevant contribution is the
Borel--Laplace integral~\cite{BunderMcKenzie1999,DLMF2026}
\begin{equation}
 \Phi_{\rm dis}(\upsilon)=-\int_0^\infty e^{-t/(2\upsilon)}
 \left[\tfrac12\coth(t/2)-\tfrac1t\right]\dd t.
 \label{eq:disorder-borel}
\end{equation}
The integration variable $t$ is dimensionless, and $\upsilon>0$
measures the dimensionless distance from criticality.
At a tuned midpoint its leading weak-disorder identification is
$\upsilon\simeq\mu_s/v_\eta\propto(s-1/2)^2$, where
$\mu_s=\log\left[\gamma/J_0(s)\right]-\E_{\rm dis}\left[\log\eta\right]$ and
$v_\eta=\Var_{\rm dis}(\log\eta)$. This relation keeps the leading weak-disorder scaling field.
At finite lattice disorder, its nonlinear corrections and analytic
energy background must also be determined.

Termwise Laplace integration of the kernel expansion gives a
factorially divergent Bernoulli series.
The Borel kernel's nearest poles, $t=\pm2\pi i$, lie off the positive integration ray,
so the integral defines an unambiguous real sum. Every fixed derivative
at the midpoint remains finite, while optimal truncation has an error
$O(\sqrt{\upsilon}\,e^{-\pi/\upsilon})$, smaller than every power of the
entropy bias. Appendix~\ref{app:disorder-borel} derives a signed
remainder bounded by the first omitted term. Thus asymptotic summation
captures critical information beyond any finite cumulant order within
the continuum theory. The lattice plateau remains a separate conjecture;
analytic backgrounds and nonlinear scaling fields matter for its amplitudes.

\section{Beyond binary order}
\label{sec:potts}
The physical comparison and reduction do not depend on free fermions.
A non-Ising example illustrates how the same entropy source can test a
different universality class. Replace each visible bit by a $q$-state
variable $z_i\in\{0,\ldots,q-1\}$ for integer $q\geq2$, jumping to every different value
at rate $\gamma$. On each edge use Eq.~\eqref{eq:rates} with
\begin{equation}
 P_e^{(q)}=\frac{q\,\delta_{z_i,z_j}-1}{q-1}.
 \label{eq:Pq}
\end{equation}
It equals one on an equal-color pair, $-1/(q-1)$ otherwise, and has
zero uniform mean. Positivity and the matched ordinary visible histories, stationary
probabilities, mean entropy and activity, and affinities survive,
with visible activity $(q-1)\gamma$ per site.

On a periodic ring of $N\geq3$ sites, let $\mathsf X_i$ cyclically shift
the color and let $\mathsf Z_i$ have diagonal entries
$e^{2\pi\mathrm i z_i/q}$. These $q$-dimensional clock matrices generalize
the two-dimensional Pauli matrices. The exact reduced operator is
\begin{align}
 H_s^{(q)}={}&N\left[(q-1)\gamma+2r d_s\right]\Id
 -\gamma\sum_i\sum_{k=1}^{q-1}\mathsf X_i^k\nonumber\\
 &-J_s^{(q)}\sum_i\sum_{k=1}^{q-1}\mathsf Z_i^k \mathsf Z_{i+1}^{-k},\nonumber\\
 &J_s^{(q)}=\frac{2r\eta d_s}{q-1}.
 \label{eq:pottsH}
\end{align}
For $q=3$ and $0<s<1$, this is the ferromagnetic quantum Potts operator, with a continuous
self-dual transition at $J^{(3)}=\gamma$
~\cite{KarraschSchuricht2017,ZhangSierra2025}. The matched-family phase
criterion becomes $J_{\max}>2\gamma$; if hidden activity is unrestricted,
an ordered member exists precisely when $\sigma>8\gamma$.

The three-state thermal correlation-length exponent is $\nu=5/6$ and
the critical theory has $c=4/5$~\cite{KarraschSchuricht2017}.
Its singular ground-energy density scales as
$|J^{(3)}-\gamma|^{(1+1)\nu}=|J^{(3)}-\gamma|^{5/3}$.
At the critical entropy midpoint,
$J_s^{(q)}-\gamma=-r\eta A^2(s-1/2)^2/(q-1)+O((s-1/2)^4)$.
For $q=3$ this quadratic dependence gives
\begin{equation}
 f_{\mathrm{sing}}^{(3)}(s)\propto|s-1/2|^{10/3},\qquad
 \lim_{T\to\infty}\frac{\kappa_4}{NT}\propto N^{2/5}.
 \label{eq:pottscritical}
\end{equation}
The variance per site remains finite. The power $2/5$ follows from
thermal susceptibility scaling $N^{2/\nu-2}$, since the fourth entropy
derivative is the first derivative that contains that susceptibility.
These are scaling predictions using the established Potts critical
theory; the exact model reduction itself holds for every finite $N$.
The distinction between logarithmic Ising growth and the Potts power
gives the non-Ising extension an observable consequence.

More generally, in spatial dimension $d$ with dynamical exponent
$z_{\mathrm{dyn}}$, a quadratic entropy source tangent to a single
thermal scaling field gives a singular fourth-cumulant density
proportional to $L^{2/\nu-(d+z_{\mathrm{dyn}})}$, when hyperscaling
applies and the exponent is positive. A marginal susceptibility may
instead have logarithmic growth, as in the Ising chain. This statement
concerns a finite-size critical regime with $T$ longer than its relaxation
time. Its hypotheses must be reconsidered for  systems at or above an upper critical dimension or transitions with
additional relevant scaling fields.

\section{Conserved particles and indistinguishable thermodynamic noise}
\label{sec:conserved}
\begin{figure*}[t]
\includegraphics[width=\textwidth]{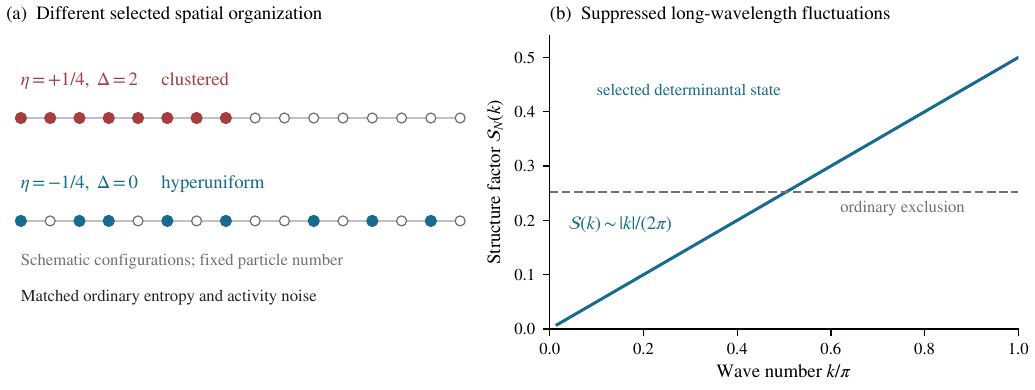}
\caption{Collective organization in the conserved matched pair.
The schematic particle strips illustrate segregation and a fluctuating
state with repulsion between occupied positions.
The graph compares the exact nonzero-wave-number structure factors
for $N=128$, $Q=64$. The selected determinantal state suppresses
long-wavelength fluctuations, whereas ordinary exclusion has a constant
structure factor. The two signs of the hidden gate have identical
ordinary entropy--activity means and covariance matrices at every
observation time.}

\label{fig:conserved}
\end{figure*}

We return to binary occupations and change the visible dynamics from
independent flips to particle-conserving exchanges. Consider a periodic
symmetric exclusion process with $N\geq3$ sites and $Q$ particles,
$0<Q<N$. Here $Q$ is the conserved particle number; the Potts color
number $q$ no longer enters. Each adjacent
particle--hole pair exchanges at rate $\gamma$. Write $n_i\in\{0,1\}$,
$z_i=1-2n_i$ and $P_i=z_i z_{i+1}$. The ordinary process is uniform within
the fixed-$Q$ sector. Its mean bond alignment is
\begin{equation}
 m_{N,Q}=\frac{(N-2Q)^2-N}{N(N-1)}.
 \label{eq:conserved-centering}
\end{equation}
Couple each bond to a hidden cycle with rates
\begin{equation}
 \begin{gathered}
 u_i=r f_i e^{A/2},\qquad v_i=r f_i e^{-A/2},\\
 f_i=1-\eta(P_i-m_{N,Q}).
 \end{gathered}
 \label{eq:conserved-rates}
\end{equation}
Both signs of $\eta$ are allowed; $|\eta|<1/2$ ensures positive rates
uniformly in size. The centering keeps $\langle f_i\rangle_0=1$ exactly,
including the finite-size canonical correlations. Changing $\eta$ leaves
every ordinary visible history, the full stationary distribution, all
cycle affinities, and the means $\sigma=2rA\sinh(A/2)$ and
$a_h=2r\cosh(A/2)$ unchanged.

The two signs have a stronger observational equivalence. Let
$\mathcal Y_T=\int_0^T\sum_i(P_i-m_{N,Q})\dd t$ under stationary ordinary
exclusion. Given that history, the directional hidden counts are Poisson
with means $r e^{\pm A/2}(NT-\eta\mathcal Y_T)$. Their conditional
variance and covariance determine the covariance matrix $\mathcal C_T$ of the vector $(\Sigma_T,K_T)^{\mathsf T}$,
\begin{equation}
 \begin{aligned}
 \mathcal C_T={}&NT\begin{pmatrix}A^2a_h&\sigma\\\sigma&a_h\end{pmatrix}\\
 &+\eta^2\Var_0\mathcal Y_T
 \begin{pmatrix}\sigma^2&\sigma a_h\\\sigma a_h&a_h^2\end{pmatrix},
 \end{aligned}
 \label{eq:conserved-matched-noise}
\end{equation}
where $K_T$ counts hidden jumps. Hence $+\eta$ and $-\eta$ have the same
first and second joint entropy--activity cumulants at every finite
observation time. Correlations with the visible configuration can
distinguish them, since
$\operatorname{Cov}_0(\mathcal Y_T,K_T)
=-a_h\eta\operatorname{Var}_0\mathcal Y_T$.

Exact hidden elimination now gives a number-conserving anisotropic
Heisenberg (XXZ) operator,
where $X_i,Y_i,Z_i$ are again Pauli operators on the binary occupation
space $n_i=0,1$,
\begin{align}
 H_s-NC_s
 &=\frac\gamma2\sum_i\Delta_s(\Id-Z_iZ_{i+1})\nonumber\\
 &\quad-\frac\gamma2\sum_i(X_iX_{i+1}+Y_iY_{i+1}),\nonumber\\
 J_s&=2r\eta d_s,\qquad \Delta_s=1+2J_s/\gamma,\nonumber\\
 C_s&=2rd_s\left[1-\eta(1-m_{N,Q})\right],
 \label{eq:conserved-XXZ}
\end{align}
with $d_s=\cosh(A/2)-\cosh\left[(1/2-s)A\right]$. The anisotropy $\Delta_s$ is the ratio of longitudinal to transverse
couplings. The full dominant eigenvector
is again independent of hidden positions. Its positive visible part $\phi_s$, normalized to unit squared norm
over the fixed-$Q$ configurations, determines the selected stationary distribution
$p_s^{\rm D}(n)=\phi_s(n)^2$ and the legal exchange rate
$\gamma\phi_s(n')/\phi_s(n)$. Particle number
remains conserved in the driven process.

Choose finite parameters satisfying
\begin{equation}
 r\left[\cosh(A/2)-1\right]=\gamma,\qquad \eta_+=\tfrac14,
 \qquad\eta_-=-\tfrac14.
 \label{eq:conserved-pair}
\end{equation}
At the midpoint, the two XXZ anisotropies are respectively $2$ and $0$.
For the positive member, positivity of the unbiased exclusion operator
and a contiguous-block trial state imply
\begin{equation}
 \begin{gathered}
 NC_++4J_+\leq E_+\leq NC_++4J_++2\gamma,
 \\\langle N_{\mathrm{wall}}\rangle_+\leq4.
 \end{gathered}
 \label{eq:conserved-wall-main}
\end{equation}
Here the subscripts $+$ and $-$ denote the two gates evaluated at
$s=1/2$, and $N_{\mathrm{wall}}$ counts particle--hole interfaces. At fixed
density $Q/N\to\rho\in(0,1)$, every finite spatial window converges
to an occupied window with probability $\rho$ or an empty window with
probability $1-\rho$. Finite rings remain translation invariant. The bounded interface count establishes macroscopic segregation.
A particle-rich domain in the vacancy background is called a droplet;
its  position remains delocalized.

For the negative member, the interaction cancels the diagonal exclusion
escape term, leaving free fermions. The occupied orbitals form a Slater
determinant, which becomes the positive sine product
\begin{equation}
 \phi_-(x_1,\ldots,x_Q)
 =\frac{2^{Q(Q-1)/2}}{N^{Q/2}}
 \prod_{a<b}\sin\frac{\pi(x_b-x_a)}N,
 \label{eq:conserved-vandermonde-main}
\end{equation}
for ordered occupied sites $0\leq x_1<\cdots<x_Q<N$.
Thus a local hidden turnover choice selects a circular determinantal
distribution with repulsion between occupied positions. Its static
structure factor, with $\rho=Q/N$ at finite size and defined for
$k_\ell=2\pi\ell/N$, $\ell=1,\ldots,N-1$, by
$\mathcal S_N(k_\ell)=N^{-1}\langle|\sum_j e^{ik_\ell j}(n_j-\rho)|^2\rangle_-$,
is exactly
\begin{equation}
 \mathcal S_N(k_\ell)=\frac{\min\{\ell,N-\ell,Q,N-Q\}}N.
 \label{eq:conserved-structure-main}
\end{equation}
As $N\to\infty$ at fixed density, it approaches $|k|/(2\pi)$ at small
nonzero wave number. This vanishing structure factor defines
hyperuniformity, the suppression of density fluctuations on long length
scales. Ordinary exclusion instead retains $\rho(1-\rho)$ in this limit.

The fermionic structure also determines the dynamics. Antisymmetrizing
independent one-particle evolution gives its exterior power, represented
by a determinant. The positive ground-state transform then turns it
into the exact transition probability between ordered configurations,
\begin{equation}
 p_t(x,y)=e^{E_{\mathrm{hop}}t}
 \frac{\phi_-(y)}{\phi_-(x)}
 \det\left[g_t(x_a,y_b)\right]_{a,b=1}^Q.
 \label{eq:conserved-temporal-kernel}
\end{equation}
Here $g_t$ is the one-particle hopping propagator on the ring with
$e^{ikN}=(-1)^{Q-1}$, and
$E_{\mathrm{hop}}=-2\gamma\sin(\pi Q/N)/\sin(\pi/N)$ is the sum of
occupied hopping energies. Appendix~\ref{app:conserved} gives the
momentum kernel and normalization. This propagator describes the
homogeneous driven process, whose finite-duration entropy-weighted
bridge can still have endpoint dependence~\cite{PopkovSchutz2011}.
Its first particle--hole excitation fixes the visible relaxation gap,
\begin{equation}
 g_N^{\mathrm{vis}}
 =4\gamma\sin(\pi Q/N)\sin(\pi/N).
 \label{eq:conserved-gap}
\end{equation}
At fixed $0<\rho=Q/N<1$, this is
$4\pi\gamma\sin(\pi\rho)/N+o(N^{-1})$. Once below the uniform
hidden-sector bound, it is also the full microscopic gap. The same
fermionic organization therefore controls spatial fluctuations and
physical relaxation, not just the stationary density.

Activity-conditioned exclusion already connects XXZ chains with
segregation and hyperuniformity
\cite{LecomteGarrahanVanWijland2012,JackThompsonSollich2015}.
The sine-product state and its effective repulsion also arise in an
extreme-current or extreme-activity limit
\cite{PopkovSchutzSimon2010,PopkovSchutz2011}.
Here that determinantal state occurs at finite microscopic rates and a
finite entropy bias, paired with a segregated member whose ordinary
thermodynamic noise agrees exactly. The comparison identifies a
limitation of aggregate diagnostics even when conservation controls the
collective dynamics. Appendix~\ref{app:conserved} derives the finite-ring
normalization, interface bound and time-dependent selected dynamics.

\subsection{A mobile bound domain and exponential relaxation}
\label{sec:droplet-transport}

The clustered member has a further dynamical distinction from its
hyperuniform partner. Its boundaries can fluctuate while displacement of
the entire domain becomes exponentially slow. Two limits make this
statement precise, one at fixed particle number on the infinite line
and one at fixed density on a ring.
The interface estimate establishes local coexistence. The mobile
bound cluster below is obtained from a separate spectral construction.

First fix $Q$ and take the infinite-line limit of the selected particle
dynamics. The scalar $NC_s$ cancels from its Doob generator before this
limit. Absolute position has no normalized stationary distribution on
$\mathbb Z$; stationarity here refers to the internal particle separations. Write
$\Delta=\cosh\vartheta>1$ and order their positions as
$x_1<\cdots<x_Q$. The XXZ bound-state dispersion is
\cite{FischbacherStolz2014,NachtergaeleSpitzerStarr2007}
\begin{align}
 E_Q(k)&=2\gamma\sinh\vartheta\,
 \frac{\cosh(Q\vartheta)-\cos k}{\sinh(Q\vartheta)},
 \label{eq:droplet-band}\\
 D_Q&=\frac{\gamma\sinh\vartheta}{\sinh(Q\vartheta)}.
 \label{eq:droplet-mobility}
\end{align}
The momentum $k$ labels translation of the entire bound domain in the
operator representation. The band curvature $D_Q$ is also an exact
transport coefficient of the conditioned classical particle process. This probabilistic meaning can be established directly,
without identifying a moving domain with a random walker. Define
\begin{equation}
 b_j=\frac{\cosh\left[(j-Q/2)\vartheta\right]}{\cosh(Q\vartheta/2)},
 \qquad
 w_j=\frac{D_Q}{\gamma b_{j-1}b_j},
 \label{eq:droplet-weights}
\end{equation}
where $b_0=b_Q=1$. Here the $b_j$ are scalar bound-state factors,
with gap index $j$, distinct from the local hard-core operators. Set
$X_{\mathrm{eff}}=\sum_{j=1}^{Q}w_jx_j$. The weights are positive and sum
to unity. The cancellation is local in each gap. Particle $j$ hops
right at rate $\gamma b_{j-1}/b_j$ and left at rate
$\gamma b_j/b_{j-1}$ when the destination is empty. For the two
particles bordering a nonempty internal gap $j$, the products of jump rate and
$X_{\mathrm{eff}}$ increment have equal magnitude $D_Q/b_j^2$ and
opposite sign. Their drift contributions cancel, as do those of the
two outer moves. Hence
$X_{\mathrm{eff}}(T)-X_{\mathrm{eff}}(0)$ is an exact martingale,
a coordinate with zero conditional drift. Appendix~\ref{app:droplet-transport}
evaluates its quadratic variation. With the
internal gaps initially stationary,
\begin{equation}
 \operatorname{Var}\left[X_{\mathrm{eff}}(T)-X_{\mathrm{eff}}(0)\right]
 =2D_QT .
 \label{eq:droplet-variance}
\end{equation}
The arithmetic center of mass $\overline x=Q^{-1}\sum_jx_j$ has the same long-time diffusivity and
Brownian scaling limit. The internal shape remains fluctuating and has a
product-geometric distribution. Such stable clouds belong to the broader
theory of exclusion with particle-dependent rates
\cite{MalyshevMenshikovPopovWade2023}; the entropy-selected rates here make
the harmonic coordinate and diffusivity explicit. For the matched member
$\Delta=2$, Eq.~\eqref{eq:droplet-mobility} gives
$D_Q\sim2\sqrt3\gamma(2-\sqrt3)^Q$, whereas the mean visible jump rate
tends to $2\gamma/\sqrt3$ as $Q$ grows. Frequent boundary rearrangements
therefore coexist with slow collective transport.

The same bound-state structure fixes non-Gaussian collective-current
fluctuations. In the entropy-selected infinite-line particle process,
let $C_T=\overline x(T)-\overline x(0)$ be the arithmetic center-of-mass
displacement of a fixed-$Q$ cloud, with $Q\ge2$ and stationary internal
gaps. Appendix~\ref{app:droplet-current}
establishes
\begin{equation}
 \lim_{T\to\infty}\frac1T\log\E\left[e^{\theta C_T}\right]
 =2D_Q(\cosh\theta-1),\quad |\theta|<Q\vartheta .
 \label{eq:droplet-current-main}
\end{equation}
The proof controls the endpoint factors in the tilted path measure,
which are essential on an infinite gap space. The real field $\theta$ biases displacement after the entropy selection
at $s=1/2$. The resulting cloud drifts at $2D_Q\sinh\theta$ and remains
bound throughout this interval. The mean span $\ell_Q=x_Q-x_1$ grows by the exact factor
$\left[1-(\cosh\theta-1)/(\cosh(Q\vartheta)-1)\right]^{-1}$.
The internal gaps lose a normalizable stationary distribution at the
strip boundary. The collective current has a Poisson form at
long times even though the center of mass is not a finite-time Markov
projection.

For two particles, the complete SCGF follows the bound level until it
meets a continuous band. With $c_\theta=\cosh(\theta/2)$,
\begin{equation}
 \Lambda_2(\theta)=
 \begin{cases}
 \dfrac{2\gamma}{\Delta}(c_\theta^2-1),&c_\theta\le\Delta,\\[2pt]
 4\gamma c_\theta-2\gamma(\Delta+\Delta^{-1}),&c_\theta\ge\Delta.
 \end{cases}\label{eq:two-particle-full-SCGF}
\end{equation}
At $|\theta|=2\vartheta$, the first derivative is continuous but the
second drops by $\gamma(\Delta-\Delta^{-1})$. Beyond this threshold the
internal gap has no normalizable stationary distribution. Figure~\ref{fig:current-binding}
shows this current-induced unbinding transition. It is an infinite-line
result with stationary initial gaps; a fixed finite ring has an analytic
SCGF. The same bound-level and continuous-band mechanism appears in
current-conditioned zero-range processes and their two-particle
exclusion interpretation ~\cite{RakosHarris2008}. Appendix~\ref{app:two-particle-current}
establishes spectral dominance and the complete velocity rate function
for the present entropy-selected process. For general $Q$, the exterior
SCGF remains undetermined by the bound-state construction.

\begin{figure*}[t]
\includegraphics[width=\textwidth]{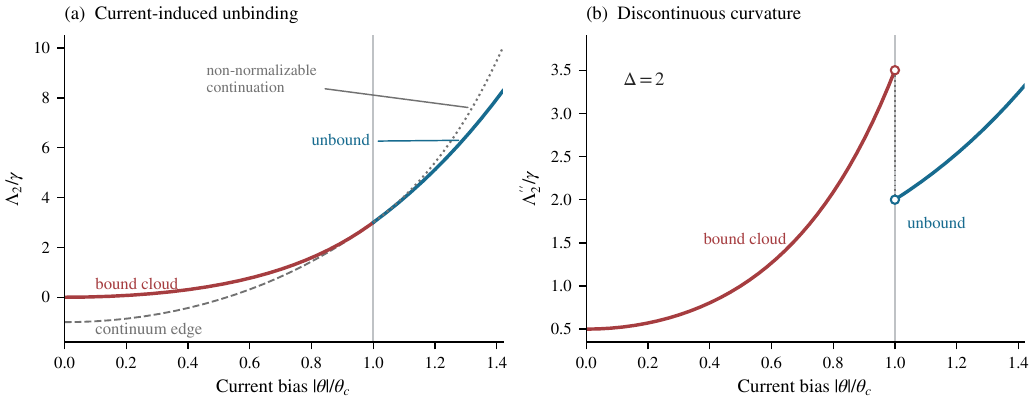}
\caption{Exact two-particle current transition on the infinite line at
$\Delta=2$. (a) The bound SCGF meets the continuous-band edge at
$\theta_c=2\operatorname{arcosh}\Delta$. The dashed curve is the
continuous-band edge below threshold. Beyond threshold, the dotted
continuation of the bound level is not a normalizable eigenstate.
(b) The first derivative is continuous, while the second derivative
has unequal one-sided limits. The plot concerns a further current bias
of the entropy-selected process with stationary initial gaps.}
\label{fig:current-binding}
\end{figure*}

Next keep a nonzero particle density on a ring. Theorem~6.1 of Nachtergaele and Starr for the periodic XXZ droplet
band bounds the first $N$ energies within
$O(e^{-Q\vartheta}+e^{-(N-Q)\vartheta})$ of
$2\gamma\sinh\vartheta$ \cite{NachtergaeleStarr2001}. Consequently the
visible conditioned relaxation gap satisfies
\begin{equation}
 g_{N,Q}\le C_{\Delta,\gamma}
 \left(e^{-Q\vartheta}+e^{-(N-Q)\vartheta}\right).
 \label{eq:droplet-ring-gap-bound}
\end{equation}
At fixed density this proves that the inverse-gap relaxation time grows
at least exponentially with $N$. The finite ring retains a unique translation-invariant
stationary state. The bound concerns its slow approach to stationarity;
it does not posit a chosen droplet position or determine an exact
finite-density gap prefactor. High-precision finite-ring calculations
test the more detailed mobility estimate in
Appendix~\ref{app:droplet-ring-numerics}. They support a specific
asymptotic conjecture while resolving sizable corrections on small rings.

\section{Symmetry and the scope of the construction}
\label{sec:extensions}
We return to the independent-flip binary ring of Sec.~\ref{sec:critical},
with uniform gates tuned to $J_{1/2}=\gamma$ and no conserved particle
number. Its spin representation has a noninvertible
Kramers--Wannier operator $\mathcal D$ that
interchanges $X_i$ and $Z_{i-1}Z_i$, with
\begin{equation}
 \mathcal D^2=(\Id+\mathcal F)\mathcal T^{-1},\qquad
 \mathcal F=\prod_i X_i,
 \label{eq:KW}
\end{equation}
where $\mathcal T$ translates one lattice site. The projection onto
the even spin-flip sector is why this is not an invertible change of
variables~\cite{SeibergSeifnashriShao2024,ZhangSierra2025}.

There is an operational consequence for the selected dynamics. At the
critical midpoint, measure time in $\tau=\gamma t$. Let $F_\tau$ count
visible flips and let $B_\tau=\int_0^\tau\sum_i z_i z_{i+1}\dd\tau'$
be the integrated alignment. Define their joint long-time generating
function by
\begin{equation}
 \Theta_N(u,v)=\lim_{\tau\to\infty}\tau^{-1}
 \log\E_{1/2}\left[e^{uF_\tau+vB_\tau}\right].
 \label{eq:Theta}
\end{equation}
Counting visible flips multiplies the flip coefficient by $e^u$;
biasing integrated alignment changes the bond coefficient to $1+v$.
Kramers--Wannier duality exchanges these two coefficients. For the
positive finite-ring ground state, this gives
\begin{equation}
 \Theta_N(u,v)=\Theta_N(\log(1+v),e^u-1),\qquad v>-1.
 \label{eq:stochasticKW}
\end{equation}
Combining this identity with the finite-ring Ising spectrum gives
\begin{equation}
 \lim_{\tau\to\infty}\frac1\tau
 \operatorname{Cov}\begin{pmatrix}F_\tau\\B_\tau\end{pmatrix}
 =\begin{pmatrix}R_N+S_N&-S_N\\-S_N&S_N\end{pmatrix}.
 \label{eq:KWcov}
\end{equation}
The critical logarithms cancel in the variance of $F_\tau+B_\tau$.
Appendix~\ref{app:duality} derives both the spectral identity and the
associated compensated counting martingale. This is a testable
fluctuation relation inherited from a noninvertible symmetry. The duality acts as a signed linear map between tilted operators;
its probabilistic consequence is the relation between their fluctuation
generators.

The second structure is freedom in the visible algebra on which the
hidden cycles act. Replace the product-flip generator by any finite
irreducible reversible visible generator $L_0$, with stationary measure
$\pi_0$. Let the two directions of each hidden cycle have positive
rates $u_e(z),v_e(z)$ that depend on the visible configuration but
remain independent of hidden position. These generalize Eq.~\eqref{eq:rates}. The stationary full
measure is $\pi(z,h)=\pi_0(z)3^{-|\mathcal E|}$. Visible jump entropy cancels the
$\pi_0$ endpoint contribution, and the conditional-history derivation
again gives exactly $H_s=-L_0+V_s$, now self-adjoint in $L^2(\pi_0)$.

The hidden entropy marks are $\pm\log\left[u_e(z)/v_e(z)\right]$, so
\begin{equation}
 \begin{aligned}
 V_s(z)=\sum_e\left[\begin{aligned}&u_e(z)+v_e(z)\\
 &-u_e(z)^{1-s}v_e(z)^s-v_e(z)^{1-s}u_e(z)^s\end{aligned}\right].
 \end{aligned}
 \label{eq:general-visible-potential}
\end{equation}

For every real $s$, let $\varphi_s$ be the positive lowest eigenfunction
of $H_s$, normalized by $\sum_z\pi_0(z)\varphi_s(z)^2=1$.
Its lift, constant in the hidden positions, is the full Perron vector.
Its visible driven rates are
$k_0(z,z')\varphi_s(z')/\varphi_s(z)$ and its stationary probability is
$\pi_0(z)\varphi_s(z)^2$. The counting-inner-product amplitude is
$\phi_s=\sqrt{\pi_0}\varphi_s$, so the same probability is $\phi_s^2$.
The counting-space Hamiltonian is
$\widetilde H_s=D_0^{1/2}H_sD_0^{-1/2}$, where
$D_0=\operatorname{diag}\pi_0$. Equation~\eqref{eq:random-stochastic-correlation}
then uses $\phi_s$ and $\widetilde H_s$ together.
Appendix~\ref{app:conditional-overlap} derives this similarity.
Thus the positive projection organizes more than one spin model.

Hidden translation symmetry can also be relaxed without losing the
long-time solution. Suppose that every hidden generator has the form
$g_e(z)\mathsf K_e$, where $g_e>0$ and $\mathsf K_e$ is a fixed finite
irreducible generator. If $\lambda_{e,0}(s)$ is the Perron eigenvalue of
its entropy tilt, the full SCGF follows exactly from
\begin{equation}
 \overline L_s=L_0+\sum_e g_e(z)\lambda_{e,0}(s).
 \label{eq:general-hidden-main}
\end{equation}
Products of positive hidden eigenvectors embed visible observables
into the full state space, replacing the constant hidden eigenvectors
used by the group average. This embedding also makes the driven visible process autonomous.
Appendix~\ref{app:general-hidden} proves a spectral bound using
triangularization, which remains valid for defective hidden matrices,
and identifies the low-energy range without Jordan blocks.
The hidden endpoint vectors now enter finite-time generating functions,
as Eq.~\eqref{eq:hidden-general-finite-time} specifies; uniform cycles make this contraction a single exponential through
their constant hidden eigenvector.
An explicit three-state clock shows how hidden residence times affect
the midpoint interaction. At fixed nonuniform stationary probabilities,
varying the visible gates retains the full stationary distribution and
the matched means. Varying the residence probabilities themselves
retains the mean dissipation, activity and affinity but changes the
hidden stationary distribution. Relative to the uniform clock, it
reduces the effective interaction, as Appendix~\ref{app:hidden-noncirculant}
derives.

Other bounded local interactions can be realized in the same way,
while preserving the visible transition graph. Appendix~\ref{app:extensions}
constructs plaquette and dipole-conserving examples and specifies their
symmetry and sector constraints. We now ask when the exact autonomous
reduction itself fails.

\section{Memory selected by hidden turnover}
\label{sec:selected-memory}
\begin{figure*}[t]
\includegraphics[width=\textwidth]{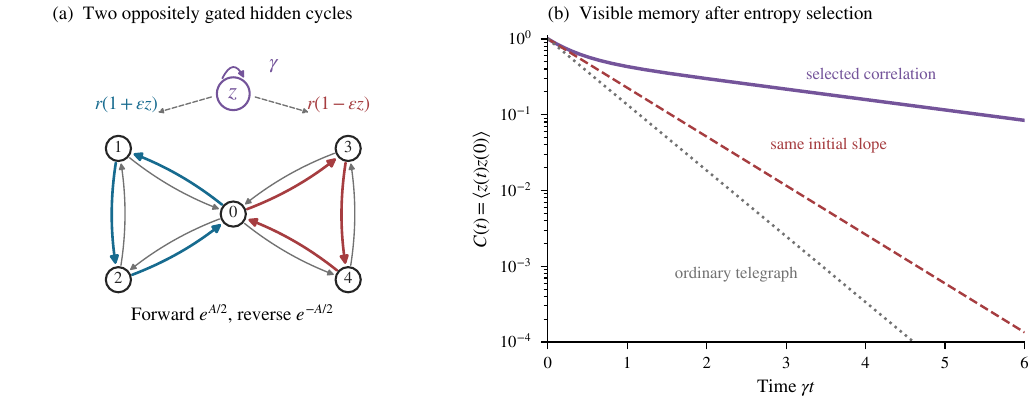}
\caption{Entropy selection creates visible memory. (a) Two hidden cycles
share state zero. Labels are the kinetic prefactors. Colored arrows
indicate forward jumps with factor $e^{A/2}$; gray arrows indicate
reverse jumps with factor $e^{-A/2}$. The visible bit flips independently in the ordinary process.
(b) Exact selected correlation for $r=\gamma$, $A=2\operatorname{arcosh}2$
and $\epsilon=1/2$, plotted against dimensionless time $\gamma t$. Its persistence exceeds both the ordinary telegraph
correlation and an exponential with the same selected initial slope.
The curves follow the three-dimensional spectral expression, with no fit.}
\label{fig:selected-memory}
\end{figure*}

The positive lift in Eq.~\eqref{eq:general-hidden-main} suggests the
precise boundary of autonomous selected dynamics. Here we allow the hidden tilted matrix itself, rather than a scalar
prefactor alone, to depend on $z$. Assume a finite tilted operator
$L_s=L_0+\mathsf K_s(z)$, where $L_0$ is irreducible
and its visible jumps leave the hidden state unchanged. Each hidden
matrix $\mathsf K_s(z)$ is irreducible, with nonnegative off-diagonal
entries. The condition is a common positive eigenline,
\begin{equation}
 \mathsf K_s(z)b_s=\lambda_s(z)b_s,\qquad b_s>0
 \quad\hbox{independent of }z.
 \label{eq:common-line-main}
\end{equation}
Indeed the driven visible rate is
$k_0(z,z')\Phi_s(z',h)/\Phi_s(z,h)$. Independence of $h$, together
with connectivity of visible transitions, forces the full positive
Perron vector to factorize as $\Phi_s(z,h)=a_s(z)b_s(h)$.
Substitution in the eigenvalue equation gives
Eq.~\eqref{eq:common-line-main} and the reduced operator
$L_0+\lambda_s(z)$. Conversely, their positive eigenvectors lift to
the full Perron vector and autonomous rates.

At the reversible entropy midpoint this criterion also determines
whether stationary visible histories are Markovian, meaning that their
future depends on the present visible state without further history.
To see the role of reversibility, let $\mathcal P$ be conditional
expectation onto visible functions in the driven stationary measure,
$\mathcal Q=I-\mathcal P$, and $L^{\rm D}$ its self-adjoint generator.
Then
\begin{equation}
 \mathcal P(L^{\rm D})^2\mathcal P-(\mathcal P L^{\rm D}\mathcal P)^2
 = (\mathcal QL^{\rm D}\mathcal P)^*(\mathcal QL^{\rm D}\mathcal P).
 \label{eq:common-line-compression-main}
\end{equation}
Differentiating a Markov compressed semigroup twice at $t=0$ makes
the left side zero. The positive square then forces invariance of visible functions, hence autonomy.
This stationary weighted projection differs from the uniform hidden
group average used earlier. Appendix~\ref{app:common-perron-line}
gives the finite-state assumptions and converse, and distinguishes
the general-$s$ statement~\cite{BurkeRosenblatt1958}.

Hidden-state projections can carry memory
\cite{RoldanParrondo2012}, and conditioning on empirical occupations can
generate self-interacting dynamics~\cite{CoghiGarrahan2025}.
The comparison below matches ordinary visible histories, stationary
probabilities, mean dissipation, mean activity and driving forces.
Its selected visible process nevertheless develops memory. These are
the mean constraints of Sec.~\ref{sec:model}. The conserved pair also
matches the first two joint entropy--activity cumulants at every
observation time, as Eq.~\eqref{eq:conserved-matched-noise} shows.

Consider a visible bit $z=\pm1$ that flips at rate $\gamma>0$,
independently of a hidden variable $h\in\{0,1,2,3,4\}$.
The hidden graph consists of the two oriented cycles
$0\to1\to2\to0$ and $0\to3\to4\to0$, which share their central state.
The forward and reverse rates on the left cycle are
$r(1+\epsilon z)e^{\pm A/2}$, and those on the right are
$r(1-\epsilon z)e^{\pm A/2}$, where $r,A>0$ and $|\epsilon|<1$.
Each undirected hidden edge represents a single pair of reverse
transitions. The parameter $\epsilon$ redistributes turnover between
the cycles without changing their affinity $3A$.
The ordinary stationary measure is uniform on all ten states. The total
hidden activity and entropy-production rates of this five-state unit are
\begin{equation}
 a_{\rm bt}=\frac{12r}{5}\cosh(A/2),\qquad
 \sigma_{\rm bt}=\frac{12rA}{5}\sinh(A/2),
 \label{eq:memory-matched-means}
\end{equation}
independently of $\epsilon$, even conditional on the current visible
state. The full mean activity is $\gamma+a_{\rm bt}$. Every microscopic
rate ratio remains fixed, and the visible process is the same telegraph
process throughout the family.

At the entropy midpoint, let $\Phi_\epsilon(z,h)>0$ be the normalized
ground vector of the ten-state symmetric tilted Hamiltonian. Its driven
visible flip rate is
\begin{equation}
 w_\epsilon(z,h)=\gamma
 \frac{\Phi_\epsilon(-z,h)}{\Phi_\epsilon(z,h)}.
 \label{eq:memory-driven-flip}
\end{equation}
Interchanging the two hidden cycles together with $z\mapsto-z$ leaves
the model invariant, so the selected visible stationary measure remains
uniform. The selected temporal correlations nevertheless change.
The stationary correlation
$C_\epsilon(t)=\langle z(t)z(0)\rangle_{1/2}$ tests that change directly.
Since $L^{\rm D}z=-2w_\epsilon z$, reversibility gives
$C_\epsilon''(0^+)=\langle(L^{\rm D}z)^2\rangle$. Hence
\begin{align}
 -C_\epsilon'(0^+)&=2\langle w_\epsilon\rangle,\\
 C_\epsilon''(0^+)-\left[C_\epsilon'(0^+)\right]^2
 &=4\Var_{\Phi_\epsilon^2}(w_\epsilon)>0
 \quad(\epsilon\ne0).
 \label{eq:memory-witness}
\end{align}
Appendix~\ref{app:selected-memory} proves strict positivity for every
allowed nonzero $\epsilon$ by reducing the problem to two
three-dimensional symmetry sectors. A stationary two-state Markov process with uniform measure has
a single exponential correlation and makes the left-hand side vanish.
Thus Eq.~\eqref{eq:memory-witness} detects memory in the stationary
selected visible process, even though its equal-time measure has not
changed. The complete driven process is still reversible and Markovian.
The exact identity $\langle w_\epsilon^2\rangle=\gamma^2$ and
nonzero rate variance imply
$a_\epsilon=2\langle w_\epsilon\rangle<2\gamma$. Strict convexity
of the positive spectral mixture then gives
$C_\epsilon(t)>e^{-a_\epsilon t}>e^{-2\gamma t}$ for every $t>0$.
The additional persistence cannot be absorbed into a single fitted flip
rate. Both inequalities are strict for $\epsilon\ne0$.

The symmetry reduction also determines the memory at all times.
It expresses the correlation as
$C_\epsilon(t)=\boldsymbol u^{\mathsf T}e^{-t\mathsf G_\epsilon}\boldsymbol u$. Reflection within each triangle reduces the relevant hidden states
to three amplitudes. Here $\boldsymbol u$ is the normalized ground
vector in the resulting block even under the combined visible flip
and cycle interchange.
Multiplication by $z$ takes it into the odd block, whose Hamiltonian
minus the ground energy is the positive-definite relaxation matrix
$\mathsf G_\epsilon$. Let
$\mathsf Q=I-\boldsymbol u\boldsymbol u^{\mathsf T}$,
$\boldsymbol b=\mathsf Q\mathsf G_\epsilon\boldsymbol u$, and let
$\mathsf G_\perp$ restrict $\mathsf Q\mathsf G_\epsilon\mathsf Q$ to the
two orthogonal amplitudes. Eliminating them gives
\begin{equation}
 \dot C_\epsilon(t)=-a_\epsilon C_\epsilon(t)
 +\int_0^t\mathfrak M_\epsilon(t-\tau)C_\epsilon(\tau)\,d\tau,
 \quad C_\epsilon(0)=1,
 \label{eq:memory-main-kernel}
\end{equation}
with the constructive expression
\begin{equation}
 \mathfrak M_\epsilon(t)=
 \boldsymbol b^{\mathsf T}e^{-t\mathsf G_\perp}\boldsymbol b.
 \label{eq:memory-kernel-main}
\end{equation}
Here $\boldsymbol b$ is understood in an orthonormal basis of that
complement. Positivity of $\mathsf G_\perp$ makes the kernel a sum of at
most two decaying exponentials with nonnegative weights. This finite-dimensional elimination uses the projection geometry of
Mori's memory construction~\cite{Mori1965}, applied here to dissipative
Markov evolution. In particular,
$\mathfrak M_\epsilon(0)$ equals the positive quantity in
Eq.~\eqref{eq:memory-witness}. The physical source of memory is the
change of the positive hidden eigenvector when the visible bit flips.
The two frozen hidden Hamiltonians have the same spectrum but different
ground vectors. Their mismatch couples visible relaxation to hidden
relaxation, although no such feedback acts on the ordinary visible
process. No separation of time scales is required.

\section{Discussion}
\label{sec:discussion}

Hidden kinetic correlations can change collective organization while
preserving the measured visible process and specified thermodynamic
diagnostics. Proposition~\ref{prop:matched} establishes this ambiguity
for mean dissipation, activity and affinities. The examples here locate the missing information
in the dependence of hidden turnover on visible configurations. The periodic binary construction in Sec.~\ref{sec:periodic-profiles}
retains opposite phases after matching finite-time entropy variance.
The conserved pair in Sec.~\ref{sec:conserved} strengthens the comparison
to the first two joint cumulants of entropy and hidden activity. These are explicit comparisons between local
finite-rate models, with the thermodynamic limit taken after the
long-time selection at each finite size.

The hierarchy of comparisons also identifies useful measurements. A
mixed event--alignment observation and its Fisher information determine
a uniform gate, including when activity and affinity are unknown.
For the periodic binary ring of Sec.~\ref{sec:periodic-profiles}, aggregate
arithmetic moments need not determine the geometric mean controlling
its Ising phase. The constant-term
deformation makes this obstruction exact while keeping the gates real
and positive. Edge-resolved joint observations recover information that
these aggregate constraints discard. In internal-state models, this
distinguishes measuring conformational traces and reaction counts
separately from observing their correlations.

For conserved particles, Eq.~\eqref{eq:conserved-matched-noise} pairs
macroscopic segregation with a determinantal state whose density
fluctuations are suppressed at long wavelengths. Activity-biased exclusion develops phase separation and hyperuniformity
\cite{LecomteGarrahanVanWijland2012,JackThompsonSollich2015}, while
large-current exclusion selects a determinantal state
\cite{PopkovSchutzSimon2010}.
Their finite-rate entropy realization here retains the exact agreement
of ordinary entropy--activity means and noise. The determinant fixes
transition probabilities and density correlations, whereas the bound
state fixes a different transport mechanism. A domain can have frequent
boundary rearrangements and exponentially small translational diffusion.
Further current selection stretches its stationary shape; for two
particles the SCGF on the whole real bias axis locates unbinding at a
continuous-band threshold. The bound-level/continuous-band mechanism also occurs in the
current-conditioned zero-range problem and its two-particle exclusion
interpretation~\cite{RakosHarris2008}. Here it describes the further
current fluctuations of the entropy-selected clustered member,
Eq.~\eqref{eq:two-particle-full-SCGF}. On a ring at fixed density,
the rigorous exponential gap bound and the numerically tested prefactor
separate the collective time scale from its finer asymptotics.

For independent hidden clocks with scalar gates,
Eq.~\eqref{eq:hidden-general-reduction} identifies the information needed
to close a visible evolution equation. A common positive hidden eigenvector
preserves autonomous selected motion, and a gap separates the remaining
hidden modes when the stated bound holds. Jordan blocks away from the
reversible midpoint are compatible with this separation. If the hidden
positive direction instead depends on the visible state, selection can
make a visible jump sensitive to an unobserved internal configuration.
The two overlapping cycles of Sec.~\ref{sec:selected-memory} isolate
this effect in ten states, outside the common-eigenvector family. The network's nonexponential
visible correlation has a positive memory kernel even though the
ordinary visible process is a telegraph process. This finite example
links the spectral obstruction to a measurable delay, without assuming
a separation of time scales. The full selected process remains Markovian;
memory concerns its observed projection.

Criticality and its thermodynamic diagnostic need not appear at the same
cumulant order. At a tuned midpoint the entropy source couples
quadratically to the thermal direction. Equation~\eqref{eq:entropy-activity-cumulants} relates the fourth entropy
cumulant to hidden-activity fluctuations in the uniform binary ring;
its temporal cutoff follows by integrating the marginal $1/t$
alignment correlation until the observation time or ring size cuts it off.
Disorder further separates a small spectral gap from a large response.
The pair-response formula, Eq.~\eqref{eq:random-curvature-main}, weights
each relaxation mode by its coupling to the energy operator. For the two disorder ensembles in
Eq.~\eqref{eq:disorder-numerical-ensemble}, numerical response densities
approach apparent plateaus while the lowest pair energies decrease. In the weak-disorder
continuum theory, every fixed response derivative can remain finite
although their large-order growth prevents convergence of the power
series. The Borel integral resolves that distinction and bounds optimal
truncation. Relating this continuum response to the lattice amplitudes requires
the scaling-field and background terms identified in
Appendix~\ref{app:disorder-borel}.

The gap-prefactor conjecture in Eq.~\eqref{eq:droplet-gap-conjecture}
and the response plateau in Fig.~\ref{fig:disorder-response} pose two
specific analytical questions.
The finite-density relaxation-gap prefactor requires relative control of an
exponentially narrow band, and a general lattice response limit requires
uniform control of rare disorder realizations. For more than two
particles, the exterior current SCGF and the possible dominance of
fragmented configurations remain to be determined. A many-body extension of the selected-memory example would ask how
visible-dependent hidden eigenvectors modify critical relaxation;
the finite network settles the local mechanism but leaves its collective
scaling open. The plaquette and dipole-conserving realizations of
Appendix~\ref{app:extensions} lead to phase and hydrodynamic problems
within their distinct symmetry and configuration sectors. The explicit operators and observables derived here make these extensions
accessible to analytical and computational study. Joint measurements of
visible configurations and hidden events distinguish the solved models
where their separate diagnostics agree.

\begin{acknowledgments}
A.C. acknowledges the ICTS, Bengaluru program, \emph{Monopole Moduli 2026}, for their hospitality and for creating stimulating environments in which part of the work was completed.
\end{acknowledgments}

\appendix
\section{Path-space and rate variational principles}
\label{app:information}

Consider a finite irreducible jump process with rates $k_{xy}$ on a
bidirected support and stationary distribution $\pi_x>0$. All
configuration variables are even under time reversal, and the rates
have no external time dependence. Let $P_T$ be its stationary
history measure and $P_T^{\rm R}$ its reversal. Reciprocal
support ensures mutual absolute continuity. The total entropy is
\begin{equation}
 \Sigma_T=\log\frac{dP_T}{dP_T^{\rm R}}
 =\log\frac{\pi_{x_0}}{\pi_{x_T}}
   +\sum_{\text{jumps }x\to y}\log\frac{k_{xy}}{k_{yx}}.
 \label{eq:app-total-entropy}
\end{equation}
This path definition distinguishes configuration irreversibility from
a heat interpretation requiring additional reservoir assignments
\cite{RoldanParrondo2012}.

\subsection{The reversible midpoint measure}
\label{app:path-midpoint}

For $0\le s\le1$, define
\begin{equation}
 \begin{gathered}
 Z_T(s)=\int(dP_T)^{1-s}(dP_T^{\rm R})^s,\\
 dQ_s=Z_T(s)^{-1}e^{-s\Sigma_T}\,dP_T.
 \end{gathered}
 \label{eq:app-geometric-path}
\end{equation}
Products of measures denote products of densities with respect to
any common dominating measure. For every history measure $Q$
with finite relative entropies, substitution gives
\begin{align}
 &(1-s)\KL(Q\Vert P_T)
       +s\KL(Q\Vert P_T^{\rm R})
 \nonumber\\
 &\hspace{12mm}
 =\KL(Q\Vert Q_s)-\log Z_T(s).
 \label{eq:app-kl-completion}
\end{align}
Nonnegativity of relative entropy proves the variational minimum and
its unique optimizer $Q_s$.

At $s=1/2$, $Z_T$ is the Hellinger affinity, the overlap of the square-root path densities. The R\'enyi divergence of order $1/2$ is
$D_{1/2}(P_T\Vert P_T^{\rm R})=-2\log Z_T(1/2)$,
and $Q_{1/2}^{\rm R}=Q_{1/2}$. If
$Q^{\rm R}=Q$, reversal invariance gives
$\KL(Q\Vert P_T)
 =\KL(Q\Vert P_T^{\rm R})$. Thus
\begin{equation}
 -\log Z_T(1/2)
 =\min_{Q^{\rm R}=Q}
      \KL(Q\Vert P_T).
 \label{eq:app-reversible-projection}
\end{equation}
The minimization includes all reversal-invariant history measures.
The finite-time optimizer is generally not a stationary homogeneous
Markov process. The stationary driven process describes its long-time
bulk, as in the general variational construction
\cite{ChetriteTouchetteControl2015}.

\subsection{Finite duration and stationary bulk}
\label{app:finite-duration-bulk}
For the binary uniform-cycle model and remaining physical duration $\tau$, set
$\mathfrak h_{s,\tau}=e^{-\tau H_s}\mathbf1>0$.
The exact finite-duration selected visible rates at time $t$ and its
initial distribution are
\begin{align}
 k_{s,T}(t;z,z')&=k_0(z,z')
 \frac{\mathfrak h_{s,T-t}(z')}{\mathfrak h_{s,T-t}(z)},\nonumber\\
 q_{s,T}(0,z)&=\frac{2^{-N}\mathfrak h_{s,T}(z)}{Z_{N,T}(s)}.
 \label{eq:finite-duration-rates}
\end{align}
The future path weight is the function $\mathfrak h$, so conditioning a
jump on the remaining duration gives the first ratio. Symmetry of $H_s$
then gives the time-$t$ marginal
\begin{equation}
 q_{s,T}(t,z)=\frac{\mathfrak h_{s,t}(z)
             \mathfrak h_{s,T-t}(z)}{2^N Z_{N,T}(s)}.
 \label{eq:finite-duration-marginal}
\end{equation}
Its mass is one by the semigroup property. At fixed $N$, taking $t\to\infty$ and $T-t\to\infty$ selects the
positive lowest mode in both factors, giving $q_{s,T}(t,z)\to\phi_s(z)^2$. The initial endpoint instead
approaches a probability proportional to $\phi_s(z)$. This distinction
separates stationary driven correlation formulas at finite observation
time from the endpoint-dependent finite-duration entropy ensemble.

\subsection{A reversible-rate variational problem}
\label{app:rate-variational}

Write $\lambda_x=\sum_{y\ne x}k_{xy}$. The negative tilted backward
generator for bulk jump entropy has entries
\begin{equation}
 (\widehat H_s)_{xx}=\lambda_x,\quad
 (\widehat H_s)_{xy}=-k_{xy}^{1-s}k_{yx}^{s}\quad(k_{xy}>0).
 \label{eq:app-tilted-matrix}
\end{equation}
The remaining off-diagonal entries vanish. The endpoint term in
Eq.~\eqref{eq:app-total-entropy} gives the exact formula
\begin{equation}
 Z_T(s)=(\pi^{1-s})^{\mathsf T}e^{-T\widehat H_s}\pi^s.
 \label{eq:app-endpoint-vectors}
\end{equation}
Powers of $\pi$ are componentwise. In particular $\widehat H_{1/2}$ is symmetric,
and its lowest eigenvalue is
$E_0=-\psi_N(1/2)$, with
$\psi_N(s)=\lim_{T\to\infty}T^{-1}\log Z_T(s)$ at fixed system size $N$.

Let $\widetilde k$ be trial rates reversible with respect to a
normalized $\rho>0$. Their stationary relative-entropy rate against
the original process is
\begin{equation}
 \mathcal R(\rho,\widetilde k)
 =\sum_{x,y\ne x}\rho_x
 \left[
 \widetilde k_{xy}\log\frac{\widetilde k_{xy}}{k_{xy}}
 -\widetilde k_{xy}+k_{xy}
 \right].
 \label{eq:app-relative-entropy-rate}
\end{equation}
All rate sums run over the original bidirected support. Trial processes
use that same support; adding a transition where the original rate vanishes
would have infinite relative-entropy cost. The initial relative entropy is
$O(1)$ and does not enter this rate.
Introduce the symmetric trial flux
$a_{xy}=\rho_x\widetilde k_{xy}
       =\rho_y\widetilde k_{yx}$.
For each unordered edge, its contribution is
\begin{equation}
 2a_{xy}\log\frac{a_{xy}}
 {\sqrt{\rho_x\rho_y k_{xy}k_{yx}}}
 -2a_{xy}+\rho_xk_{xy}+\rho_yk_{yx}.
 \label{eq:app-flux-cost}
\end{equation}
Strict convexity fixes
$a_{xy}=\sqrt{\rho_x\rho_y k_{xy}k_{yx}}$.
After this edgewise minimization,
\begin{equation}
 \min_{\widetilde k\,{\rm reversible\ at}\,\rho}
       \mathcal R(\rho,\widetilde k)
 =(\sqrt\rho)^{\mathsf T}\widehat H_{1/2}\sqrt\rho.
 \label{eq:app-flux-rayleigh}
\end{equation}
The Rayleigh principle and positivity of the lowest eigenvector
$\phi$ therefore prove
\begin{equation}
 \begin{gathered}
 \min_{\rho,\widetilde k\,{\rm reversible}}\mathcal R(\rho,\widetilde k)
 =-\psi_N(1/2),\\
 \rho_x^*=\frac{\phi_x^2}{\sum_y\phi_y^2},\qquad
 \widetilde k_{xy}^*=\sqrt{k_{xy}k_{yx}}\frac{\phi_y}{\phi_x}.
 \end{gathered}
 \label{eq:app-optimal-reversible-rates}
\end{equation}
These are precisely the midpoint driven rates. This derivation
establishes the reversible-rate minimum directly, without assuming
that the finite-time optimizer is already stationary.

\subsection{Conditional hidden overlap and projection}
\label{app:conditional-overlap}

Disintegrating $P_T$ and $P_T^{\rm R}$ into their visible marginals
and conditional hidden measures separates the integrand
$(dP_T)^{1-s}(dP_T^{\rm R})^s$. Its integral over $h$ is precisely
Eq.~\eqref{eq:app-visible-conditional-tilt}, with the conditional overlap
in Eq.~\eqref{eq:app-conditional-affinity}. H\"older's inequality gives
$0<\mathcal A_s(y)\leq1$ for $0\leq s\leq1$ on the common support.
This also proves the stated commutation criterion.
At the midpoint,
$-\log \mathcal A_{1/2}(y)=\tfrac12
D_{1/2}(P_T(h\mid y)\Vert
        P_T^{\rm R}(h\mid y))$.
The conditional overlap therefore acts as a visible path interaction,
even when the ordinary visible process is reversible.

For comparison, the entropy chain rule reads
\begin{align}
 \KL(P_T\Vert P_T^{\rm R})
 ={}&\KL(P_T^{\rm vis}\Vert
                  (P_T^{\rm vis})^{\rm R})
 \label{eq:app-kl-chain}\\
 &+\E_{P_T^{\rm vis}}\left[\KL(P_T(h\mid y)\Vert P_T^{\rm R}(h\mid y))\right]. \notag
\end{align}
Mean hidden irreversibility and its history-dependent overlap are
different inputs. Fixing the former need not fix the latter.

For the ring construction with a reversible autonomous visible
generator $L_0$, the full stationary distribution is
$\pi_0(z)3^{-|\mathcal E|}$. Detailed balance makes the visible jump
sum telescope to $\log\left[\pi_0(z_T)/\pi_0(z_0)\right]$, cancelling its endpoint
term exactly. Total entropy is then the hidden jump sum, whose
direct tilted representative is $\overline L_s=L_0-V_s$.
Restricted to hidden-constant observables, the bulk tilted generator
corresponding to Eq.~\eqref{eq:app-tilted-matrix} is
$D_0^s\overline L_sD_0^{-s}$, where
$D_0=\operatorname{diag}\pi_0$ on visible states. The symmetric visible
operator is
$-D_0^{1/2}\overline L_sD_0^{-1/2}$.
Normalize the backward eigenfunction by
$\sum_z\pi_0(z)\varphi_s(z)^2=1$. Then the driven probability is
$\pi_0\varphi_s^2=\phi_s^2$, with
$\phi_s=D_0^{1/2}\varphi_s$ normalized in the counting inner product.
The generator similarity acts on observables, while these probabilities
remain normalized in $L^1$. These factors are essential when the visible reference
process is generalized beyond independent fair spins.

\section{Finite-chain spectrum and entropy fluctuations}
\label{app:ising}

Consider a periodic chain of $N\geq3$ visible spins. Throughout this appendix,
\begin{equation}
 \mathcal X=\sum_{i=1}^N X_i,\quad
 \mathcal B=\sum_{i=1}^N Z_iZ_{i+1},\quad Z_{N+1}=Z_1.
 \label{eq:appendix-spin-sums}
\end{equation}
The ground energy of $\mathcal K(h,J)=-h\mathcal X-J\mathcal B$ is denoted by
$\mathcal E_N(h,J)$, for $h,J>0$. Scalar terms in the entropy operator
are kept separately. This convention distinguishes the Ising ground energy
from the full entropy SCGF.

\subsection{Parity sectors and the critical spectrum}

The Jordan--Wigner fermions are
\begin{equation}
 c_i=\left(\prod_{j<i}X_j\right)\frac{Z_i-\mathrm iY_i}{2},
 \qquad X_i=1-2c_i^\dagger c_i,
\end{equation}
where $Y_i$ is the third Pauli operator.

Here the local quantization axis is $X$, so $c_i^\dagger c_i=(1-X_i)/2$
is the fermion occupation in the Ising solution, not the visible
occupation $(1-Z_i)/2$.

 The string of $X_j$ changes the single-occupancy, or hard-core,  algebra into fermionic anticommutation relations. The conserved spin inversion
$\mathcal F=\prod_iX_i$ is their number parity. In the even sector,
the fermions are antiperiodic, with momenta
$k_m=(2m+1)\pi/N$, $m=0,\ldots,N-1$.
Diagonalizing each paired momentum block gives the excitation energy
\begin{equation}
 \varepsilon(k)=2\sqrt{h^2+J^2-2hJ\cos k}.
 \label{eq:appendix-dispersion}
\end{equation}
The ground state belongs to this even sector. Indeed, the spin-basis
matrix has strictly negative spin-flip entries and a connected transition
graph, so its ground vector is unique and strictly positive. Spin inversion
therefore acts on it with eigenvalue $+1$. Its energy is
\begin{equation}
 \mathcal E_N(h,J)
 =-\sum_{m=0}^{N-1}\sqrt{h^2+J^2-2hJ\cos k_m}.
 \label{eq:appendix-even-energy}
\end{equation}
The expression is symmetric under $h\leftrightarrow J$, including at
finite $N$ in the stated periodic, positive-coupling problem.

At $h=J=\gamma$, put
\begin{equation}
 \begin{gathered}
 \theta_m=\frac{(2m+1)\pi}{2N},\qquad R_N=\csc\frac{\pi}{2N},\\
 S_N=\frac12\sum_{m=0}^{N-1}\frac{\cos^2\theta_m}{\sin\theta_m}.
 \end{gathered}
 \label{eq:appendix-critical-sums}
\end{equation}
The finite trigonometric sum $\sum_m\sin\theta_m=R_N$ gives
\begin{equation}
 \begin{gathered}
 \mathcal E_N(\gamma,\gamma)=-2\gamma R_N,\\
 \partial_J\mathcal E_N=-R_N,\qquad
 \partial_J^2\mathcal E_N=-S_N/\gamma.
 \end{gathered}
 \label{eq:appendix-critical-derivatives}
\end{equation}
The derivatives keep $h$ fixed before setting $h=J=\gamma$.
They follow directly by differentiating each positive square root in
Eq.~\eqref{eq:appendix-even-energy}. In particular,
$\langle\mathcal X\rangle=\langle\mathcal B\rangle=R_N$.

Odd parity instead gives periodic momenta. At criticality its zero-momentum
mode has zero excitation energy, permitting the required odd occupation.
The lowest odd energy is $-2\gamma\cot\left[\pi/(2N)\right]$. Subtracting the even
ground energy determines the first spectral gap,
\begin{equation}
 g_N=2\gamma\left[\csc\frac{\pi}{2N}
                 -\cot\frac{\pi}{2N}\right]
     =2\gamma\tan\frac{\pi}{4N}.
 \label{eq:appendix-critical-gap}
\end{equation}
The allowed excitations within either parity sector lie above this level.
This is the spin-sector gap; the hidden-character bound in the main text
identifies it with the full microscopic driven gap whenever
$g_N<3r(1-\eta)$ at the midpoint.

The large-$N$ behavior relevant to the entropy fluctuations is
\begin{equation}
 R_N=\frac{2N}{\pi}+\frac{\pi}{12N}+O(N^{-3}),\quad
 \frac{S_N}{N}=\frac1\pi\log N+O(1).
 \label{eq:appendix-critical-asymptotics}
\end{equation}
For the second relation, use
$\cos^2\theta/\sin\theta=\csc\theta-\sin\theta$ and separate the two
endpoint singularities of the cosecant sum. Each endpoint contributes
half of the logarithmic coefficient. The $N^{-1}$ term in $R_N$ is
responsible for the finite-size correction $-\pi\gamma/(6N)$ to the
critical ground energy.

\subsection{Entropy derivatives and the critical susceptibility}
The full entropy SCGF is
\begin{equation}
 \psi_N(s)=-N\gamma-2rNd_s-\mathcal E_N(\gamma,2r\eta d_s).
 \label{eq:appendix-full-scgf}
\end{equation}
In the midpoint-driven stationary process, the long-time generating
function of the original entropy is
$\psi_N(1/2-\zeta)-\psi_N(1/2)$. Endpoint factors do not affect this rate.
At the midpoint $d_s'=d_s'''=0$, $d_s''=-A^2$ and $d_s''''=-A^4$.
The chain rule and Eq.~\eqref{eq:appendix-critical-derivatives} then
give Eqs.~\eqref{eq:k2} and \eqref{eq:k4}, including the scalar background.

The same coefficient follows from the occupation-time response.
For $W_T$ defined in Eq.~\eqref{eq:W}, the Feynman--Kac response to
a field coupled to $\mathcal B$ gives
\begin{equation}
 \begin{gathered}
 \lim_{T\to\infty}\frac{\E_{1/2}\left[W_T\right]}{T}=r(N-\eta R_N),\\
 \lim_{T\to\infty}\frac{\Var W_T}{T}
 =\frac{r^2\eta^2S_N}{\gamma}.
 \end{gathered}
 \label{eq:appendix-dwell-response}
\end{equation}
Together with the conditional Poisson identity in
Eq.~\eqref{eq:cumulantbridge}, this independently identifies the
fourth-cumulant enhancement as an integrated alignment susceptibility.

To extract the constant in Eq.~\eqref{eq:Sasym}, subtract both endpoint
poles from $\csc(\pi x)$,
\begin{equation}
 g(x)=\csc(\pi x)-\frac1{\pi x}-\frac1{\pi(1-x)}.
\end{equation}
The resulting function is smooth at the endpoints, and
$\int_0^1g(x)\dd x=(2/\pi)\log(2/\pi)$.
The midpoint sum of the pole terms is
$(2N/\pi)\left[\psi_{\rm dg}(N+1/2)-\psi_{\rm dg}(1/2)\right]$, where $\psi_{\rm dg}$ is the logarithmic
derivative of the gamma function. Using
$\psi_{\rm dg}(1/2)=-\gamma_{\mathrm E}-2\log2$ and the midpoint integration
error gives
$N^{-1}\sum_m\csc\theta_m=(2/\pi)\left[\log(8N/\pi)+\gamma_{\mathrm E}\right]+O(N^{-2})$.
Subtracting $R_N/(2N)$ proves the stated expression for $S_N/N$.

The reduction, phase boundaries and finite-ring expressions were
derived analytically. Their symbolic derivatives were checked with the
open-source computer algebra system SymPy~\cite{SymPySoftware}.
Independently assembled microscopic generators checked the finite-time
contraction and all hidden Fourier sectors for binary and three-state
fixtures; spin matrices checked the critical energies, gaps and
susceptibilities for $3\leq N\leq7$. These finite numerical comparisons
corroborate the analytical identities. The Potts powers follow from the thermal exponent $\nu=5/6$ of the
three-state critical chain~\cite{KarraschSchuricht2017} and the quadratic
source in Eq.~\eqref{eq:pottscritical}; their amplitudes require a separate Potts calculation.

\section{Duality between activity and alignment}
\label{app:duality}

Use dimensionless time $\tau=\gamma t$ for the critical visible driven
process. Let $\phi>0$ be the critical ground vector and
$\Phi=\operatorname{diag}\phi$. Its backward generator is
\begin{equation}
 G=\Phi^{-1}(\mathcal X+\mathcal B)\Phi-2R_NI.
 \label{eq:appendix-critical-doob}
\end{equation}
Let $F_\tau$ count visible flips during dimensionless duration
$\tau$, and let
$B_\tau=\int_0^\tau\mathcal B(\tau')\dd\tau'$.
Biasing their joint distribution by $e^{uF_\tau+vB_\tau}$
changes the tilted backward matrix to
\begin{equation}
 G_{u,v}=\Phi^{-1}\left[e^u\mathcal X+(1+v)\mathcal B\right]\Phi-2R_NI.
 \label{eq:appendix-joint-tilt}
\end{equation}
Off-diagonal flip rates acquire $e^u$, while the occupation integral
adds the diagonal field $v\mathcal B$. Its Perron eigenvalue is the
joint SCGF $\Theta_N(u,v)$. For real $u$ and $v>-1$, the two couplings
are positive; the symmetry of Eq.~\eqref{eq:appendix-even-energy} therefore
proves Eq.~\eqref{eq:stochasticKW}.
This is a finite-$N$, long-time identity. It does not assume that duality
is an invertible map on every boundary sector or that the finite-time
generating functions coincide. Write $\lambda(h,J)=-\mathcal E_N(h,J)$. At $h=J=1$,
$\lambda_h=\lambda_J=R_N$ and
$\lambda_{hh}=\lambda_{JJ}=S_N$. Homogeneity of degree one implies
$\lambda_{hJ}=-S_N$. Substituting $h=e^u$ and $J=1+v$ then proves
Eq.~\eqref{eq:KWcov}; the extra $R_N$ in the flip variance comes from
the exponential count field.
The fluctuating sum has variance rate $R_N$, despite the logarithmically
enhanced entries separately. Indeed, the visible escape rate is
$2R_N-\mathcal B(z)$, so
$F_\tau+B_\tau-2R_N\tau$ is the compensated
counting martingale. Under stationary preparation its finite-time
variance is $R_N\tau$.

\section{Additional marks and local interactions}
\label{app:extensions}
The same conditional Poisson contraction controls more detailed
observations and more general visible generators. These consequences
state the reach of the construction without assuming integrability of
the resulting interacting model.

\subsection{Joint entropy and activity}
If $K_T=\sum_eK_{e,T}$ counts hidden jumps, a joint weight
$e^{-s\Sigma_T-\xi K_T}$ replaces the local potential by
\begin{multline}
 V_{s,\xi}(z)=2r\sum_e(1-\eta P_e)\\
 \times\bigl\{\cosh(A/2)-e^{-\xi}\cosh\left[(1/2-s)A\right]\bigr\}.
 \label{eq:joint-hidden-marks}
\end{multline}
The same Fourier projection remains exact. A mark coupled instead to
$\int P_e\dd K_e$ puts $e^{-\xi P_e}$ in the corresponding term.
Differentiating at zero gives Eq.~\eqref{eq:crossreadout}. Thus the
missing kinetic correlation is accessible within the same generating
operator, with no new elimination prescription.

Write $d_{s,\xi}=\cosh(A/2)-e^{-\xi}\cosh\left[A(s-1/2)\right]$. On an even
periodic binary chain the thermodynamic two-field phase diagram follows from
$J_{s,\xi}=2r\eta d_{s,\xi}$. For $\eta>0$, define
$b_\pm=\cosh(A/2)\pm\gamma/(2r\eta)$. The antiferromagnetic phase lies
below
\begin{equation}
 \xi_{\rm AF}(s)=\log\cosh\left[A(s-1/2)\right]-\log b_+.
 \label{eq:joint-AF}
\end{equation}
If $b_->0$, the ferromagnetic phase lies above
$\xi_{\rm F}(s)=\log\cosh\left[A(s-1/2)\right]-\log b_-$.
Between the two boundaries the selected histories are disordered.
If $b_-\le0$, the ferromagnetic region is absent at all finite biases.
These statements follow from $|J|=\gamma$; alternating spin inversion
maps $J<0$ to $J>0$ on an even ring. Odd rings have a frustrated boundary
bond, so the finite-ring identity must not be transferred to them.

At the critical entropy midpoint, $b_-=1$ and the local ferromagnetic
boundary is
\begin{equation}
 \begin{aligned}
 \xi_{\rm F}(s)&=\log\cosh\left[A(s-1/2)\right]\\
 &=\tfrac12A^2(s-1/2)^2+O((s-1/2)^4).
 \end{aligned}
 \label{eq:curved-critical-boundary}
\end{equation}
Activity bias crosses the thermal direction linearly; entropy bias is
tangent to this curve. Thus the critical activity variance and fourth
entropy cumulant are two derivatives of the same singularity, with
different source coordinates. At a reversible midpoint weighted also
by $e^{-\xi K_T}$, the path variational problem becomes
\begin{equation}
 -\log Z_T(1/2,\xi)
 =\min_{Q^{\rm R}=Q}\{\KL(Q\Vert P_T)+\xi\E_Q\left[K_T\right]\}.
 \label{eq:activity-reversible-cost}
\end{equation}
It minimizes a penalized cost rather than the unpenalized reversible
cost of Eq.~\eqref{eq:app-reversible-projection}.

\subsection{Local potentials at a prescribed entropy bias}
Let $L_0$ be an autonomous reversible visible generator, and let $W_C(z)$
be a bounded real function on a finite local support $C$. Choose a
constant $b_C$ with $W_C+b_C>0$, and a common positive affinity $A$.
Attach one hidden cycle to each support and set
\begin{equation}
 \begin{gathered}
 u_C(z)=r_C(z)e^{A/2},\qquad v_C(z)=r_C(z)e^{-A/2},\\
 r_C(z)=\frac{W_C(z)+b_C}{2\left[\cosh(A/2)-1\right]}.
 \end{gathered}
 \label{eq:local-embedding}
\end{equation}
At the midpoint the exact projected operator is
$-L_0+\sum_C(W_C+b_C)$. The shifts affect the absolute entropy cost
but not the selected visible distribution. This realizes arbitrary
bounded local diagonal interactions on the given visible transition
graph. A single microscopic model fixes their entire $s$ dependence;
it cannot assign unrelated potentials independently at every bias.

For independent visible flips, a cycle coupled through
$P_\square=\prod_{i\in\square}z_i$ generates a square-plaquette
interaction. Flipping a complete row or column leaves every plaquette
product unchanged. The full microscopic generator and its entropy tilts
therefore have subsystem symmetries of the kind studied in
Ref.~\cite{YouDevakulBurnellSondhi2018}. Their finite positive ground
state is invariant. A product of $Z_i$ has zero expectation unless its
support intersects every row and every column an even number of times.
A rectangle's four corners can satisfy this condition while a separated
pair cannot. These exact selection rules do not determine a plaquette
phase diagram or imply subsystem topological order.

Alternatively, symmetric moves $0110\leftrightarrow1001$ on an open
chain conserve particle number and its first spatial moment. Choosing
$L_0$ from these moves restricts the construction to each connected
configuration sector, relevant to multipole-conserving
systems~\cite{HanLakeRo2024}. The positive transform preserves every
allowed transition and every disconnected sector. Equal number and
first moment need not place two configurations in the same sector,
which is the additional obstruction associated with kinetic
fragmentation. A hydrodynamic theory or collective transition requires
an analysis of this visible generator.

\subsection{An entropy contribution invisible to complex zeros}
Complex zeros of a finite-time generating function can characterize
trajectory transitions~\cite{FlindtGarrahan2013}. Their diagnostic
scope can be checked explicitly here. Add an independent hidden
three-state cycle with constant positive rates $u,v$. Its factor in
the entropy generating function is
\begin{equation}
 Z_{\mathrm{extra},T}(s)
 =\exp\{T\left[u^{1-s}v^s+v^{1-s}u^s-u-v\right]\}.
 \label{eq:zero-free}
\end{equation}
This function is entire and has no zeros in complex $s$, using the
real logarithms of the positive rates to define the powers. It leaves
all finite-time zeros of the original generating function unchanged,
as well as the normalized selected visible histories, while adding
$(u-v)\log(u/v)$ to the mean entropy rate. Thus complex zeros identify
the singular structure without determining every regular entropy
contribution. The distinction also explains why scalar terms must
be kept when computing absolute costs and cumulants.

\section{Finite-time counting and critical response}
\label{app:finite-time-counts}
The counting composition in the main text can be developed through
conditional intensities, then evaluated using the critical spectral sum.
This connects higher counted cumulants to the observation-time cutoff.

\subsection{Conditional intensity and higher counted cumulants}
In the driven midpoint process, define the integrated hidden intensity
\begin{equation}
 W_T=r\int_0^T\sum_e\left[1-\eta P_e(t)\right]\dd t.
 \label{eq:W}
\end{equation}
Conditional on a visible history, $W_T$ is the aggregate integrated
intensity of the hidden jumps in each direction.
For the \emph{original} entropy observable of Eq.~\eqref{eq:entropy},
\begin{equation}
 \log\E_{1/2}\left[e^{\zeta\Sigma_T}\mid z(\cdot)\right]
 =2\left[\cosh(A\zeta)-1\right]W_T.
 \label{eq:midpoisson}
\end{equation}
With the original driving force attached to the midpoint jumps,
the second and fourth entropy cumulants obey
\begin{equation}
 \begin{aligned}
 \kappa_2&=2A^2\E_{1/2}\left[W_T\right],\\
 \kappa_4&=2A^4\E_{1/2}\left[W_T\right]+12A^4\Var_{1/2}W_T.
 \end{aligned}
 \label{eq:cumulantbridge}
\end{equation}

The intensity representation connects the counted events of
Eq.~\eqref{eq:all-count-composition} to the time spent in each visible
configuration. Writing $c_j$ for the cumulants of $K_T$, its next two
entropy orders are
\begin{align}
 \kappa_6/A^6&=15c_3-30c_2+16c_1,\label{eq:count-sixth}\\
 \kappa_8/A^8&=105c_4-420c_3+588c_2-272c_1.\label{eq:count-eighth}
\end{align}
All odd entropy cumulants vanish. The statement concerns the original
entropy observable in the driven ensemble. Changing the hidden waiting
mechanism can invalidate the fair-sign reduction even when a spectral
reduction survives, as explained in Appendix~\ref{app:general-hidden}.

\subsection{Spectral response and the temporal cutoff}
In the critical binary chain, $\mathcal B$ preserves fermion parity and
creates pairs of opposite momenta. A pair has excitation gap
$\Omega_m=8\gamma\sin\theta_m$ and squared matrix element
$4\cos^2\theta_m$. Counting each opposite-momentum pair once gives
\begin{equation}
 C_{\mathcal B}(t)
 =2\sum_{m=0}^{N-1}\cos^2\theta_m\,
 e^{-8\gamma |t|\sin\theta_m}.
 \label{eq:critical-time-correlation}
\end{equation}
For odd $N$ the unpaired momentum $\pi$ has zero matrix element. The connected equal-time
variance is $C_{\mathcal B}(0)=N$.

Let $\mathcal V_N(T)$ be the variance of
$\int_0^T\mathcal B(t)\dd t$ in the stationary midpoint process. The
spectral sum gives
\begin{align}
 \mathcal V_N(T)
 &=4\sum_m\cos^2\theta_m\left[
 \frac{T}{\Omega_m}-\frac{1-e^{-\Omega_mT}}{\Omega_m^2}\right] \label{eq:finite-time-integrated-variance}
\\
 &=\frac{T S_N}{\gamma}
 -\frac1{16\gamma^2}\sum_m\cot^2\theta_m
       (1-e^{-8\gamma T\sin\theta_m}). \notag
 \end{align}
It follows that
\begin{equation}
 \kappa_4(T)=2rA^4T(N-\eta R_N)
       +12r^2\eta^2A^4\mathcal V_N(T).
 \label{eq:finite-time-fourth}
\end{equation}
For short times $\mathcal V_N(T)=NT^2+O(T^3)$; for long times its rate
is $S_N/\gamma$. The two endpoint regions of the momentum sum give
the $1/t$ tail in Eq.~\eqref{eq:critical-green-kubo-main}; its time
integral proves Eq.~\eqref{eq:temporal-log}. The odd-parity gap, asymptotically eight times smaller than the first
pair gap, controls full relaxation but does not appear in this even observable's
spectral sum.

\section{Information in state-resolved hidden events}
\label{app:gate-inference}
The gate deformation also has a direct path interpretation. Write $P_{\eta,T}$ for the full ordinary stationary history distribution. Relative to the member $\eta=0$,
\begin{multline}
 \log\frac{\dd P_{\eta,T}}{\dd P_{0,T}}
 =\sum_e\int_0^T\log(1-\eta P_e)\dd K_e\\
 +a_h\eta\int_0^T\sum_e P_e\dd t.
 \label{eq:gate-path-action}
\end{multline}
Both terms are invariant under time reversal. The entropy observable is the same function on path space for every gate, although its distribution changes. This is a precise instance of the time-symmetric kinetic contribution, often called frenesy~\cite{Maes2016Nondissipative}. Fixing mean turnover does not fix this complete path weight.

The binary uniform-cycle model has stationary preparation independent
of $(\eta,a_h,A)$. Its path score for $\eta$ is the martingale
\begin{equation}
 S_\eta=-\sum_e\int_0^T\frac{P_e}{1-\eta P_e}
 \{\dd K_e-a_h(1-\eta P_e)\dd t\}.
 \label{eq:gate-score}
\end{equation}
Compensated counts on distinct edges have zero quadratic covariation.
The mean squared score is therefore
\begin{equation}
 \mathcal I_{\eta\eta}
 =a_h|\mathcal E|T\E_\pi\left[\frac{P_e^2}{1-\eta P_e}\right]
 =\frac{a_h|\mathcal E|T}{1-\eta^2}.
 \label{eq:gate-score-information}
\end{equation}
Path-space information theory relates this mean squared score to
relative entropy ~\cite{Pantazis2013}. Averaging also over the two jump
directions gives Eq.~\eqref{eq:gate-fisher}. For interior parameter
values, the Cram\'er--Rao bound for an unbiased gate estimate is
$(1-\eta^2)/(a_h|\mathcal E|T)$, unchanged by the two nuisance parameters.
The bound is not a guarantee of finite-sample attainment.

Let $K_{\rm al}$ and $K_{\rm op}$ count jumps on aligned and opposite visible pairs,
respectively, and let $B_T=\int_0^T\sum_eP_e\dd t$.
The part of the log likelihood depending on the gate is
\begin{equation}
 \ell(\eta)=K_{\rm al}\log(1-\eta)+K_{\rm op}\log(1+\eta)+a_h\eta B_T.
 \label{eq:gate-sufficient}
\end{equation}
Thus unsigned hidden event times and their visible configurations
retain all gate information; resolving clockwise directions adds
information about $A$ but not $\eta$. The likelihood is concave, and
an interior maximizing gate solves
\begin{equation}
 a_h B_T\eta^2+(K_{\rm al}+K_{\rm op})\eta
 +(K_{\rm al}-K_{\rm op})-a_h B_T=0.
 \label{eq:gate-likelihood-equation}
\end{equation}
Small samples can instead put the maximum at a parameter boundary.
For distinct edge gates, the gate-information block is diagonal with
entries $a_hT/(1-\eta_e^2)$ when edge identities are observed. Aggregate
moments discard that resolution, which is why the matched profiles in
Sec.~\ref{sec:periodic-profiles} evade their inference.

\section{Exact conserved construction}
\label{app:conserved}
The fixed particle-number sector is needed both for the microscopic
centering and for the fermionic boundary condition. This appendix makes
those conventions explicit and derives the spatial and temporal
consequences of Sec.~\ref{sec:conserved}.

\subsection{Canonical centering and the exact operator}
Since $\sum_i z_i=N-2Q$ is fixed, expanding its square and using
exchangeability gives Eq.~\eqref{eq:conserved-centering}. The two values
of the local gate are $1-\eta(1-m_{N,Q})$ and
$1+\eta(1+m_{N,Q})$. Their positivity is equivalent to
\begin{equation}
 -\frac1{1+m_{N,Q}}<\eta<\frac1{1-m_{N,Q}}.
 \label{eq:conserved-positivity}
\end{equation}
The full stationary state is uniform on the fixed-$Q$ configurations
and hidden cycle positions. Visible exchanges have zero jump entropy,
and the stationary endpoint term vanishes. Conditional hidden Poisson
counting therefore gives
\begin{equation}
 \E\left[e^{-s\Sigma_T}\mid n(\cdot)\right]
 =\exp\left[-2rd_s\int_0^T\sum_i f_i(n(t))\dd t\right].
 \label{eq:conserved-conditional}
\end{equation}
Differentiating its two-mark version, or applying conditional variance
to the directional Poisson counts, proves
Eq.~\eqref{eq:conserved-matched-noise}. This equality refers to hidden
activity and entropy; it does not equate higher cumulants or joint
observations involving arbitrary visible path functionals.

The negative unbiased exclusion generator is
\begin{equation}
 H_0=\frac\gamma2\sum_i
 (\Id-X_iX_{i+1}-Y_iY_{i+1}-Z_iZ_{i+1}).
 \label{eq:conserved-H0}
\end{equation}
Each bond acts on $01,10$ with diagonal entries $\gamma$ and
off-diagonal entries $-\gamma$. Adding $2rd_s\sum_i f_i$ proves
Eq.~\eqref{eq:conserved-XXZ}, including its scalar.
In a hidden Fourier sector with $p$ nonzero characters, the Hermitian
part exceeds the constant-character operator by at least
$3rp f_{\min}\cosh\left[(1/2-s)A\right]$, where $f_{\min}=\min_{n,i}f_i(n)>0$.
Thus the lifted positive visible eigenvector is the full Perron vector,
and hidden sectors stay a finite spectral distance away uniformly in
size for the chosen matched pair.

\subsection{Interface bound and local coexistence}
Set $\mathcal W=\sum_i(1-Z_iZ_{i+1})=2N_{\mathrm{wall}}$. Every
nonempty, nonfull configuration has $\mathcal W\geq4$ and $H_0\geq0$.
A basis state with a contiguous particle block has $\mathcal W=4$ and
diagonal expectation $2\gamma$ for $H_0$. Hence, for any $J>0$,
\begin{equation}
 4J\leq E-NC\leq4J+2\gamma,
 \qquad\langle N_{\mathrm{wall}}\rangle\leq2+\gamma/J.
 \label{eq:conserved-wall-proof}
\end{equation}
Uniqueness and positivity make the finite-ring ground state translation
invariant. For a fixed separation $\ell$,
\begin{equation}
 \Pr(n_i\ne n_{i+\ell})
 \leq\frac\ell N(2+\gamma/J).
 \label{eq:conserved-local-coexistence}
\end{equation}
The same union bound applies to all sites of a fixed finite window.
Together with $\langle n_i\rangle=Q/N$, it proves the local mixture
stated in the main text. It does not determine an exact single-droplet
distribution. The thermodynamic energy density obeys
$E/N-C\to0$ for every fixed $J>0$.

The coexistence conclusion extends to a fixed local spacetime window.
The eigenvector equation gives the total driven visible escape rate
\begin{equation}
 R_{\rm D}(n)=(\gamma+2J)N_{\mathrm{wall}}(n)-(E-NC).
 \label{eq:conserved-escape}
\end{equation}
Equation~\eqref{eq:conserved-wall-proof} bounds its stationary mean by
$4\gamma+\gamma^2/J$, independent of size. Translation invariance then
bounds the probability of any exchange across a fixed set of $b$ bonds
during time $T$ by $bT(4\gamma+\gamma^2/J)/N$.
Together with Eq.~\eqref{eq:conserved-local-coexistence}, this proves that
each fixed finite spacetime window tends to a fully occupied or empty
constant history with probabilities $\rho$ and $1-\rho$. The statement
keeps the window and $T$ fixed as $N$ grows.
Appendix~\ref{app:droplet-transport} treats collective motion in the
infinite-line fixed-$Q$ limit and bounds the slow band at fixed density.

\subsection{Fermionic parity, determinant and correlations}
For $J=-\gamma/2$, introduce occupation fermions
$d_i=(\prod_{j<i}Z_j)(X_i+\mathrm iY_i)/2$, so $n_i=d_i^\dagger d_i$.
Their quantization axis differs from that of the Ising fermions $c_i$
in Appendix~\ref{app:ising}. The shifted operator is
$-\gamma\sum_i(d_i^\dagger d_{i+1}+d_{i+1}^\dagger d_i)$.
The Jordan--Wigner transformation in the fixed-$Q$ sector requires
$e^{ikN}=(-1)^{Q-1}$. Its $Q$ occupied ground-state momenta are
\begin{equation}
 k_a=\frac{2\pi}{N}\left(a-\frac{Q+1}{2}\right),
 \qquad a=1,\ldots,Q,
 \label{eq:conserved-momenta}
\end{equation}
with one-particle energy $\epsilon(k)=-2\gamma\cos k$. The determinant
of orbitals $N^{-1/2}e^{ik_a x_b}$ reduces to
Eq.~\eqref{eq:conserved-vandermonde-main} after removal of a constant
phase. Cauchy--Binet proves its normalization over ordered occupied
sites. These sites are not distinguishable particle labels; crossing
the periodic bond entails reordering and the parity factor above.
Summing the occupied energies gives
\begin{equation}
 E_-=NC_-+E_{\mathrm{hop}},\qquad
 E_{\mathrm{hop}}=-2\gamma\frac{\sin(\pi Q/N)}{\sin(\pi/N)}.
 \label{eq:conserved-free-energy}
\end{equation}
For a legal hop from $x_a$ to its neighboring unoccupied site
$x_a+\delta$, the driven rate is
\begin{equation}
 w_- =\gamma\prod_{b\ne a}
 \left|\frac{\sin\left[\pi(x_a+\delta-x_b)/N\right]}
 {\sin\left[\pi(x_a-x_b)/N\right]}\right|,\qquad\delta=\pm1.
 \label{eq:conserved-doob-product}
\end{equation}
The rate depends on all occupied sites, although the microscopic
transition rules were local.

The stationary projection kernel is
$\mathcal K_N(d)=N^{-1}\sum_{a=1}^Q e^{ik_a d}$. Wick contraction gives
\begin{equation}
 \operatorname{Cov}_-(n_i,n_j)
 =\rho\delta_{ij}-|\mathcal K_N(i-j)|^2.
 \label{eq:conserved-covariance}
\end{equation}
The Fourier transform counts occupied momenta carried outside the
occupied interval by $k_\ell$, proving
Eq.~\eqref{eq:conserved-structure-main}. At fixed
$\rho\in(0,1)$ this differs from the ordinary canonical value
$\rho(1-\rho)N/(N-1)$ at nonzero wave number. Vanishing variance at
$k=0$ follows from number conservation in both ensembles and does not
by itself imply hyperuniformity.

\subsection{Finite-time selected evolution}
Let $\mathcal P_Q$ be the $N$ momenta obeying the parity condition and
define the one-particle kernel
\begin{equation}
 g_t(x,y)=\frac1N\sum_{k\in\mathcal P_Q}
 e^{ik(x-y)+2\gamma t\cos k}.
 \label{eq:conserved-oneparticle-kernel}
\end{equation}
Taking the exterior power of this hopping kernel and applying the
positive Doob similarity proves Eq.~\eqref{eq:conserved-temporal-kernel}.
The parity twist compensates the sign of a particle crossing the
ordered configuration boundary. The one-particle hopping kernel need
not itself be a stochastic kernel; the complete determinant and
ground-state factors give the normalized transition probability.

Writing $\mathcal P_{\rm occ}$ for the occupied interval, its stationary dynamic
structure factor for $t\geq0$ and $k=2\pi\ell/N$, $\ell=1,\ldots,N-1$, is
\begin{equation}
 \mathcal S_N(k,t)=\frac1N
 \sum_{\substack{p\in\mathcal P_{\rm occ}\\p+k\notin\mathcal P_{\rm occ}}}
 e^{-t\left[\epsilon(p+k)-\epsilon(p)\right]}.
\label{eq:conserved-dynamic-structure}
\end{equation}
Momenta in this formula are understood modulo $2\pi$.
Moving a particle across a Fermi edge costs
$2\gamma\left[\cos(\pi(Q-1)/N)-\cos(\pi(Q+1)/N)\right]$, proving
Eq.~\eqref{eq:conserved-gap}. At fixed density it eventually falls below
the hidden-sector bound, so it is also the full driven gap for
sufficiently large $N$. A small finite ring can instead have a slower
hidden mode. The fixed-density qualification is essential; fixed
$Q=1$ gives an $N^{-2}$ scale.

\section{Bound-domain transport in the conserved conditioned process}
\label{app:droplet-transport}

The additive constant $NC_s$ in Eq.~\eqref{eq:conserved-XXZ} cancels from
the Doob generator. For $\Delta=\cosh\vartheta>1$, we use
\begin{equation}
 \begin{aligned}
 H_{\mathrm{drop}}&=\frac\gamma2\sum_i\Delta(1-Z_iZ_{i+1})\\
 &\quad-\frac\gamma2\sum_i(X_iX_{i+1}+Y_iY_{i+1}).
 \end{aligned}
 \label{eq:droplet-normalization}
\end{equation}
Each legal particle--hole exchange has off-diagonal entry $-\gamma$,
and the diagonal is $\gamma\Delta$ times the number of interfaces.
In particular, a single particle has energy
$2\gamma\Delta-2\gamma\cos k$.

\subsection{The bound band and stationary internal shape}

For $Q$ particles on $\mathbb Z$, put
$g_j=x_{j+1}-x_j-1$. The coordinate Bethe ansatz
$\Psi=\prod_j a_j^{x_j}$ obeys the contact conditions
$a_j+a_{j+1}^{-1}=2\Delta$. Writing $a_j=b_{j-1}/b_j$ reduces these
conditions to a linear recurrence. The solution with total momentum $k$
is
\begin{equation}
 b_j(k)=\frac{e^{ik}\sinh\left[(Q-j)\vartheta\right]+
                  \sinh(j\vartheta)}{\sinh(Q\vartheta)}.
 \label{eq:droplet-complex-b}
\end{equation}
Here $b_0(k)=e^{ik}$ and $b_Q(k)=1$. Up to a momentum-dependent normalization, the eigenfunction is
$e^{ikx_1}\prod_{j=1}^{Q-1}b_j(k)^{g_j}$. Substitution into the
free-separation equation gives Eq.~\eqref{eq:droplet-band}.
The inequality $|b_j(k)|\le b_j(0)<1$ ensures a normalizable internal
shape in each momentum fiber. If $Q$ is even and $k=\pi$, the central
factor vanishes; its limiting interpretation restricts that gap to zero.
The positive $k=0$ state is the infinite-line ground state in the
generalized sense \cite{FischbacherStolz2014,NachtergaeleSpitzerStarr2007}.

At $k=0$, the factors reduce to those in
Eq.~\eqref{eq:droplet-weights}. The ground-state transform gives the
right and left rates of the $j$th ordered particle as
\begin{equation}
 r_j^+=\gamma\frac{b_{j-1}}{b_j},\qquad
 r_j^-=\gamma\frac{b_j}{b_{j-1}},
 \label{eq:droplet-rank-rates}
\end{equation}
whenever the destination is vacant. Their reciprocal structure means
that detailed balance for the gap process is solved by
\begin{equation}
 \pi_Q(g)=\prod_{j=1}^{Q-1}(1-b_j^2)b_j^{2g_j},
 \qquad g_j\in\mathbb Z_{\ge0}.
 \label{eq:droplet-gap-measure}
\end{equation}
This finite-dimensional gap process is irreducible, has bounded total
rate, and is positive recurrent. Absolute position on the infinite line
has no normalizable stationary distribution. Stationarity below refers
to the internal gaps. The same separation between a stable shape and a
wandering position appears in the general particle-dependent exclusion
framework \cite{MalyshevMenshikovPopovWade2023}.

\subsection{An exact harmonic coordinate}

Set $a_j=b_{j-1}/b_j$ and
$c_Q=(\sum_{j=1}^{Q}(b_{j-1}b_j)^{-1})^{-1}$. The weights
$w_j=c_Q/(b_{j-1}b_j)$ make the drift of
$X_{\mathrm{eff}}=\sum_jw_jx_j$ vanish for each individual gap pattern.
Indeed the contribution from the two unblocked outer edges is
$\gamma(-w_1/a_1+w_Qa_Q)=0$, while each nonempty internal gap contributes
\begin{equation}
 \gamma\mathbf1_{\{g_j>0\}}
       (w_ja_j-w_{j+1}/a_{j+1})=0 .
 \label{eq:droplet-drift-cancellation}
\end{equation}
The process is therefore a martingale for every initial configuration.
Under Eq.~\eqref{eq:droplet-gap-measure},
$\Pr(g_j>0)=b_j^2$. The predictable quadratic variation $\langle X_{\mathrm{eff}}\rangle_T$
accumulates the squared martingale jumps weighted by their conditional
rates. Its mean rate is
\begin{align}
 \frac{\mathrm d}{\mathrm dT}\E\left[\langle X_{\mathrm{eff}}\rangle_T\right]
 &=\gamma\left(\frac{w_1^2}{a_1}+w_Q^2a_Q\right)\nonumber\\
 &\quad+\gamma\sum_{j=1}^{Q-1}b_j^2
 \left(w_j^2a_j+\frac{w_{j+1}^2}{a_{j+1}}\right)
 \nonumber\\
 &=2\gamma c_Q .
 \label{eq:droplet-bracket}
\end{align}
The hyperbolic identity
\begin{equation}
 \sum_{j=1}^{Q}\frac1{b_{j-1}b_j}
 =\frac{\sinh(Q\vartheta)}{\sinh\vartheta}
 \label{eq:droplet-telescope}
\end{equation}
follows by telescoping differences of $\tanh\left[(j-Q/2)\vartheta\right]$.
Hence $\gamma c_Q=D_Q$, proving
Eq.~\eqref{eq:droplet-variance}. Ergodicity of the gaps and bounded
martingale jumps imply the Brownian scaling limit with variance rate
$2D_Q$. The difference between $X_{\mathrm{eff}}$ and the arithmetic center
of mass is a linear combination of the gaps. Their stationary
exponential tails make this difference negligible on diffusive scales
at fixed $Q$, including uniformly on compact rescaled time intervals.
This argument identifies an exact martingale, rather than assuming that
the projected center is itself Markovian.

The mean number of interfaces is $2+2\sum_{j=1}^{Q-1}b_j^2$. It tends to
$2\coth\vartheta$ as $Q\to\infty$. The mean visible escape rate is
$\gamma\Delta\,\E\left[N_{\mathrm{wall}}\right]-E_Q(0)$ and therefore tends
to $2\gamma/\sinh\vartheta$. These rates describe boundary motion on the
infinite line with $Q$ fixed before the long-time limit. They do not
replace the finite-density ring analysis.

\subsection{Collective current and a stable-cloud strip}
\label{app:droplet-current}
Consider the entropy-selected infinite-line particle process with fixed
$Q\ge2$, and start the gaps in
Eq.~\eqref{eq:droplet-gap-measure}. For the arithmetic center-of-mass
displacement $C_T=Q^{-1}\sum_j\left[x_j(T)-x_j(0)\right]$, define
$M_Q(\theta,T)=\E\left[e^{\theta C_T}\right]$. The absolute initial position
is arbitrary. For $|\theta|<Q\vartheta$,
\begin{align}
 M_Q(\theta,T)&\sim\mathcal A_Q(\theta)e^{\Lambda_Q(\theta)T},\nonumber\\
 \Lambda_Q(\theta)&=2D_Q(\cosh\theta-1),\label{eq:droplet-current-SCGF}\\
 \mathcal A_Q(\theta)&=\prod_{j=1}^{Q-1}
 \frac{(1-b_j^2)(1-\beta_j^+\beta_j^-)}
 {(1-b_j\beta_j^+)(1-b_j\beta_j^-)}>0,\label{eq:droplet-current-amplitude}
\end{align}
where the two positive internal waves are
\begin{equation}
 \beta_j^\pm(\theta)=\frac{e^{\pm\theta j/Q}\sinh\left[(Q-j)\vartheta\right]
 +e^{\mp\theta(Q-j)/Q}\sinh(j\vartheta)}{\sinh(Q\vartheta)}.
 \label{eq:droplet-current-beta}
\end{equation}
The sign convention follows by continuing Eq.~\eqref{eq:droplet-complex-b}
to $k=-i\theta$ and factoring out $e^{\theta\overline x}$, where
$\overline x=Q^{-1}\sum_jx_j$.
The positive gap eigenfunction is
$h_\theta(g)=\prod_j(\beta_j^+/b_j)^{g_j}$, with eigenvalue
$E_Q(0)-E_Q(-i\theta)=\Lambda_Q(\theta)$. Its Doob transform moves
particle $j$ right at $\gamma B_{j-1}^+/B_j^+$ and left at the reciprocal
times $\gamma$, where
$B_j^+=\left[e^\theta\sinh((Q-j)\vartheta)+\sinh(j\vartheta)\right]/\sinh(Q\vartheta)$,
$B_0^+=e^\theta$, and $B_Q^+=1$. Write $B_j^-(\theta)=B_j^+(-\theta)$.

The gap traffic equations admit the product distribution
$\pi_{Q,\theta}(g)=\prod_j(1-\rho_j)\rho_j^{g_j}$ with
$\rho_j=\beta_j^+\beta_j^-$. The recurrence for $B_j^\pm$ verifies
interior and boundary balance, and
\begin{equation}
 1-\rho_j=\frac{2\sinh(j\vartheta)\sinh\left[(Q-j)\vartheta\right]
 \left[\cosh(Q\vartheta)-\cosh\theta\right]}{\sinh^2(Q\vartheta)}.
 \label{eq:droplet-current-stability}
\end{equation}
Thus the cloud is positive recurrent exactly in the stated strip.
The continued eigenvalue alone does not establish its dominance, because
the gap space is infinite. Let $P_T^{(\theta)}$ denote the gap semigroup,
$\mu_\theta=\pi_Qh_\theta$, and $f_\theta=h_\theta^{-1}$. Changing path
measure gives the exact identity
\begin{equation}
 e^{-\Lambda_QT}M_Q(\theta,T)
 =\langle\mu_\theta/\pi_{Q,\theta},
                 P_T^{(\theta)}f_\theta\rangle_{\pi_{Q,\theta}}.
 \label{eq:droplet-current-endpoints}
\end{equation}
Both endpoint functions lie in this weighted $L^2$ space throughout the
strip. Strict convexity and the endpoint values give $\beta_j^\pm<1$;
Cauchy--Schwarz gives $\beta_j^+\beta_j^-\ge b_j^2$. Hence their squared
norms are geometric sums with ratios
$b_j^2\beta_j^\pm/\beta_j^\mp\le(\beta_j^\pm)^2<1$.
Ergodicity and Markov contraction imply strong $L^2$ convergence to the
stationary projection. The resulting two geometric sums give
Eq.~\eqref{eq:droplet-current-amplitude}, establishing the SCGF.
The moving cloud's span $\ell_Q=x_Q-x_1=Q-1+\sum_jg_j$ also follows
exactly. Here $\E_\theta$ and $\operatorname{Var}_\theta$ refer to
$\pi_{Q,\theta}$; their subscript $0$ denotes $\pi_Q$, the already
entropy-selected bound cloud. With $\zeta_Q=(\cosh\theta-1)/(\cosh(Q\vartheta)-1)$,
$\rho_j=b_j^2+(1-b_j^2)\zeta_Q$ and independence of the gaps gives
\begin{equation}
 \begin{gathered}
 \E_\theta\left[\ell_Q\right]=\frac{\E_0\left[\ell_Q\right]}{1-\zeta_Q},\\
 \operatorname{Var}_\theta\ell_Q=
 \frac{\operatorname{Var}_0\ell_Q+\zeta_Q\E_0\left[\ell_Q\right]}
 {(1-\zeta_Q)^2}.
 \end{gathered}\label{eq:droplet-current-swelling}
\end{equation}
These relations sum the geometric-gap means and variances. The smallest
site interval containing the cloud has $\ell_Q+1$ sites.

For velocities $|v|<v_c=2\gamma\sinh\vartheta$, the probability that $C_T/T$ is near $v$ has exponential rate
$I_Q(v)$, the interior Legendre transform
\begin{equation}
 I_Q(v)=v\,\operatorname{arsinh}\frac{v}{2D_Q}
       -\sqrt{v^2+4D_Q^2}+2D_Q .\label{eq:droplet-current-rate}
\end{equation}
Indeed the transformed cloud has mean velocity
$v=\Lambda_Q'(\theta)=2D_Q\sinh\theta$. Its ergodic theorem and
geometric endpoint tails give the change-of-measure lower bound at this
velocity, while the SCGF gives the matching Chernoff upper bound.
Equation~\eqref{eq:droplet-current-SCGF} has Poisson-current form at long
times. The escape rate depends on the gaps, so their stationary mixture
gives a nonexponential center-of-mass waiting time. Neither a Poisson
finite-time distribution nor a Markov center follows from the SCGF.
All gap loads reach one at $|\theta|=Q\vartheta$.
For general $Q$, loss of a normalizable cloud there does not determine
the exterior SCGF or prove a singularity; the next subsection settles
$Q=2$. For $Q=1$, there are
no gaps and the Poisson expression is exact for all real $\theta$ and $T$.

\subsection{An exact current-induced unbinding transition for two particles}
\label{app:two-particle-current}
For $Q=2$ the exterior SCGF can also be settled. The underlying
bound-level and continuous-band mechanism occurs in current-conditioned
zero-range processes and their two-particle exclusion interpretation
\cite{RakosHarris2008}. Here the gap realization determines the complete
center-of-mass current statistics after entropy selection, with the
stationary initial gap distribution fixed above.

Put $c_\theta=\cosh(\theta/2)$. The gap has birth rate $2\gamma/\Delta$
and death rate $2\gamma\Delta$ away from zero. Its two physical moves
at each event carry opposite center increments $\pm1/2$, so the current
tilt multiplies both off-diagonal rates by $c_\theta$. Similarity by
$\sqrt{\pi_2(g)}=\sqrt{1-\Delta^{-2}}\,\Delta^{-g}$ gives a half-line
Jacobi operator $A_\theta$ with off-diagonal $2\gamma c_\theta$, bulk
diagonal $-2\gamma(\Delta+\Delta^{-1})$, and boundary diagonal
$-2\gamma/\Delta$. Its bound eigenvector $(c_\theta/\Delta)^g$ is
square summable precisely for $c_\theta<\Delta$. Its eigenvalue
and the upper continuous-band edge are the two branches in
Eq.~\eqref{eq:two-particle-full-SCGF}. To verify its
dominance, write $M_2=\langle\sqrt{\pi_2},e^{TA_\theta}\sqrt{\pi_2}\rangle$.
The spectral supremum bounds its exponential growth from above.
Entrywise positivity bounds it below by
$\pi_2(0)\langle0|e^{TA_\theta}|0\rangle$.
Successive applications of the Jacobi operator to $|0\rangle$ span all
finite-support vectors, so this boundary spectral measure reaches the
spectral supremum and gives the matching lower rate.

The branches have equal first derivatives at $|\theta|=2\vartheta$,
but the second derivative drops by $\gamma(\Delta-\Delta^{-1})$.
This is a current-bias binding transition on the infinite line; a fixed
finite ring has an analytic SCGF. Since Eq.~\eqref{eq:two-particle-full-SCGF}
is differentiable for every real field and steep at infinity, its full
velocity rate is
\begin{equation}
 I_2(v)=\begin{cases}
 \begin{aligned}
 &v\operatorname{arsinh}(\Delta v/\gamma)\\
 &\quad-\sqrt{v^2+(\gamma/\Delta)^2}+\gamma/\Delta,
 \end{aligned}&|v|\le v_c,\\[6pt]
 \begin{aligned}
 &2v\operatorname{arsinh}\left[v/(2\gamma)\right]\\
 &\quad-2\sqrt{v^2+4\gamma^2}+2\gamma(\Delta+\Delta^{-1}),
 \end{aligned}&|v|\ge v_c.
 \end{cases}\label{eq:two-particle-full-rate}
\end{equation}
where $v_c=2\gamma\sqrt{\Delta^2-1}$.
At $c_\theta\ge\Delta$ the positive band-edge wave is
$\psi_g=1+(1-\Delta/c_\theta)g$. The generalized Doob gap rates are
$2\gamma c_\theta\psi_{g+1}/\psi_g$ and
$2\gamma c_\theta\psi_{g-1}/\psi_g$, with no downward jump at zero.
The invariant weight $\psi_g^2$ is not summable. At equality the gap is
a null-recurrent reflecting random walk. Above threshold,
$\sum_{g\ge0}\left[2\gamma c_\theta\psi_g\psi_{g+1}\right]^{-1}<\infty$;
the birth--death resistance criterion therefore makes the gap transient.

\subsection{What survives at fixed density on a ring}

Nachtergaele and Starr's Theorem~6.1 applies to every fixed
$\Delta>1$, with $1\le Q\le N-1$
\cite{NachtergaeleStarr2001}. Their Hamiltonian is
$H_{\mathrm{drop}}/(2\gamma\Delta)$. After restoring this factor, the
first $N$ energies lie within
$O(e^{-Q\vartheta}+e^{-(N-Q)\vartheta})$ of
$2\gamma\sinh\vartheta$, with constants depending on $\Delta$ and
$\gamma$. This gives Eq.~\eqref{eq:droplet-ring-gap-bound}. When
$\min(Q,N-Q)\to\infty$, the asymptotic separation from higher visible
excitations is at least $2\gamma(\Delta-1)$. The Doob similarity preserves
energy differences. The previously derived positive lower bound for
nonuniform hidden modes then separates the lowest droplet relaxation
modes from hidden relaxation modes for sufficiently large rings.

These results leave the exact finite-density gap prefactor unsettled.
For example, inserting $D_Q$ into a nearest-neighbor random-walk formula
on a ring would require control of finite-ring binding corrections
relative to the exponentially small band splitting. An estimate on the
whole band width does not establish that control.

\subsection{High-precision finite-ring gaps}
\label{app:droplet-ring-numerics}
Particle--hole symmetry and the infinite-line dispersion motivate the
asymptotic estimate
\begin{equation}
 g^{\mathrm{pred}}_{N,Q}
 =2(D_Q+D_{N-Q})\left[1-\cos(2\pi/N)\right].
 \label{eq:droplet-gap-conjecture}
\end{equation}
It combines the mobilities of a bound particle domain and its
complementary vacancy domain. The conjecture is that
$g_{N,Q}/g^{\mathrm{pred}}_{N,Q}\to1$ at fixed nonzero density and fixed
$\Delta>1$. This requires more control than the band-width theorem.
Table~\ref{tab:droplet-gaps} tests it at half filling for the matched
value $\Delta=2$. The relative correction falls below $0.3\%$ at
$N=18$, after being of order unity on the smallest rings.

We construct the exchange matrix separately in every translation sector.
Double-precision spectra identify the first excited sector for each
listed size. Within the ground and first excited sectors, arbitrary-precision
sparse residuals refine the eigenvalues, using a fixed double-precision
bordered inverse as a preconditioner. Repeating at 50 and 80 decimal
digits leaves the quoted digits unchanged. Full configuration matrices
at $N=6,8$ independently check the complete sector decomposition.
The 80-digit residuals and the precision change are retained with the
calculation. These are spectral calculations, with no trajectory sampling.

\begin{table}[!tbp]
\caption{Finite-ring gaps at $\Delta=2$, $\gamma=1$, and half filling.
The comparison is with the conjectured asymptotic expression
$g^{\mathrm{pred}}_{N,Q}=2(D_Q+D_{N-Q})\left[1-\cos(2\pi/N)\right]$.
All momentum sectors identify $k=\pm2\pi/N$ as the first excited sector
for these sizes. The numerical values are stable under residual refinement at
50- and 80-digit working precision. The last column measures the finite-ring correction to the proposed
asymptotic gap; its decay supports the stated large-$N$ conjecture.}

\label{tab:droplet-gaps}
\begin{ruledtabular}
\begin{tabular}{rrcc}
$N$ & $Q$ & $g_{N,Q}$ & $g_{N,Q}/g^{\mathrm{pred}}_{N,Q}-1$\\

\hline
6 & 3 & $0.337426672758$ & $1.53070004568$\\
8 & 4 & $0.0448971241128$ & $1.14603718080$\\
10 & 5 & $0.00493743710659$ & $0.350806517968$\\
12 & 6 & $0.000753949403919$ & $0.0973732178870$\\
14 & 7 & $0.000140128060361$ & $0.0297589633432$\\
16 & 8 & $0.0000282855656782$ & $0.00923705432112$\\
18 & 9 & $0.00000596655495680$ & $0.00283735537764$
\end{tabular}
\end{ruledtabular}
\end{table}

\section{Hidden generators without translation symmetry}
\label{app:general-hidden}

The full long-time reduction extends beyond circulant hidden cycles.
The scalar-gate family has a common hidden generator multiplied by a
visible kinetic gate. It is a sufficient family within the positive-eigenvector
criterion of Appendix~\ref{app:common-perron-line}. This extension distinguishes exact spectral closure
from the stronger finite-time Poisson identity.

\subsection{Positive product lift and spectral separation}
\label{app:hidden-positive-lift}

Let $L_0$ be a finite irreducible visible generator reversible with respect
to $\pi_0$. On each support $e$, choose a finite irreducible hidden
generator $\mathsf K_e$, with bidirected rates $k^{(e)}_{ab}$ and stationary
distribution $\rho_e>0$. Its physical rates are
$g_e(z)k^{(e)}_{ab}$, where $g_e(z)\ge g_{e,\min}>0$.
The full stationary distribution is
$\pi_0(z)\prod_e\rho_e(h_e)$.
Visible entropy cancels its endpoint contribution; the remaining
total entropy has hidden increments
\begin{equation}
 \xi^{(e)}_{ab}
 =\log\frac{\rho_{e,a}k^{(e)}_{ab}}
               {\rho_{e,b}k^{(e)}_{ba}}.
 \label{eq:hidden-general-mark}
\end{equation}
The gate cancels from this ratio.

If distinct reaction channels connect the same two states, each channel and its reverse must retain its own entropy increment. The tilted off-diagonal entry is then the sum
$\sum_\nu k_{ab}^{(e),\nu}\exp\left[-s\xi_{ab}^{(e),\nu}\right]$.
Summing rates before taking their entropy ratios generally changes the observable. For example, opposing channels of a two-state pump can sustain entropy production even when the aggregated state process is reversible~\cite{Kaneko2026KineticLocking}.

Let $\mathsf K_{e,s}$ have off-diagonal entries
$k^{(e)}_{ab}e^{-s\xi^{(e)}_{ab}}$ and the original escape diagonal.
For real $s$, denote its Perron eigenvalue by $\lambda_{e,0}(s)$ and
its positive right and left vectors by $v_{e,s}$ and $\ell_{e,s}$,
normalized by $\sum_a\ell_{e,s}(a)v_{e,s}(a)=1$.
The full tilted generator is
$L_s=L_0+\sum_e g_e\mathsf K_{e,s}$.
The positive embedding
\begin{equation}
 (\iota_s f)(z,h)=f(z)\prod_e v_{e,s}(h_e)
 \label{eq:hidden-general-lift}
\end{equation}
obeys the exact intertwining
\begin{equation}
 L_s\iota_s=\iota_s\overline L_s,\qquad
 \overline L_s=L_0+\sum_e g_e(z)\lambda_{e,0}(s).
 \label{eq:hidden-general-reduction}
\end{equation}
The reduced operator is self-adjoint in $L^2(\pi_0)$, whose inner
product is $\sum_z\pi_0(z)f(z)^*g(z)$.
Its positive Perron eigenfunction $\varphi_s$ lifts to a positive
full eigenvector and hence determines the full SCGF.
For identical hidden generators and $g_e=1-\eta_eP_e$, the Ising
couplings are $-\lambda_0(s)\eta_e$.

The driven visible rates and full stationary probability are
\begin{align}
 k_0^{\rm D}(z,z')&=k_0(z,z')
             \frac{\varphi_s(z')}{\varphi_s(z)},\nonumber\\
 \pi_s^{\rm D}(z,h)&=
 \frac{\pi_0(z)\varphi_s(z)^2}
      {\sum_{z'}\pi_0(z')\varphi_s(z')^2}
 \prod_e\ell_{e,s}(h_e)v_{e,s}(h_e).
 \label{eq:hidden-general-driven}
\end{align}
Thus the visible driven process remains autonomous. Hidden driven rates
are $g_e k^{(e)}_{ab}e^{-s\xi^{(e)}_{ab}}
v_{e,s}(b)/v_{e,s}(a)$.
With $R_e=\operatorname{diag}\rho_e$, entropy reversal gives
$\mathsf K_{e,s}^{\mathsf T}=R_e\mathsf K_{e,1-s}R_e^{-1}$.
At the midpoint the whole operator has a weighted symmetric
representation, and the full driven process is reversible.

Triangularizing each hidden matrix separately proves a stronger
spectral statement. In their tensor-product basis the full generator
is block-upper-triangular, with visible diagonal blocks
$L_0+\sum_e g_e\lambda_{e,j_e}(s)$.
Its spectrum, with algebraic multiplicity, is their union, even if a
hidden matrix is defective. Let $E_0$ be the lowest energy of
$-\overline L_s$. Conjugating $L_0$ by $\sqrt{\pi_0}$ and taking the
Hermitian part shows that every block energy satisfies
\begin{equation}
 \operatorname{Re}E\ge E_0+
 \sum_{e:j_e\ne0}g_{e,\min}
 \operatorname{Re}\left[\lambda_{e,0}-\lambda_{e,j_e}\right].
 \label{eq:hidden-general-gap}
\end{equation}
Every indicated difference is positive. Identical finite hidden
generators and a uniform lower gate bound therefore separate all
non-Perron hidden sectors uniformly in size. Below that threshold
the full spectrum and algebraic multiplicities equal those of the
self-adjoint visible block, so hidden-induced Jordan blocks cannot
occur there. Higher sectors can retain Jordan structure away from
the midpoint.

For example, in fixed inverse-time units the bidirected generator
\begin{equation}
 \mathsf K=\frac15
 \begin{pmatrix}-9&6&3\\4&-11&7\\5&5&-10\end{pmatrix}
 \label{eq:hidden-Jordan-example}
\end{equation}
has a uniform stationary state and eigenvalues $0,-3,-3$.
The eigenspace at $-3$ is one-dimensional, so its relaxation includes
a term proportional to $t e^{-3t}$. With constant gates this hidden
Jordan block remains in the full process. Detailed balance guarantees
diagonalizability through a symmetric similarity transform, but
diagonalizability alone does not imply detailed balance.

For a fixed visible history, however, the exact hidden factor is
\begin{equation}
 \rho_e^{\mathsf T}
 \exp\!\left[\mathsf K_{e,s}\int_0^T g_e(z_t)\,dt\right]\mathbf1.
 \label{eq:hidden-general-finite-time}
\end{equation}
It need not reduce to a single exponential of the integrated gate.
The original circulant example has the additional constant-vector
property that makes its finite-time contraction scalar. The extension
above asserts exact SCGF and driven-process closure, not that stronger
identity for every hidden generator.

\subsection{An explicit cycle with nonuniform residence times}
\label{app:hidden-noncirculant}

Choose any positive probability vector
$\rho=(\rho_0,\rho_1,\rho_2)$ and set
\begin{equation}
 k_{h,h+1}=\frac{r}{3\rho_h}e^{A/2},\qquad
 k_{h,h-1}=\frac{r}{3\rho_h}e^{-A/2}.
 \label{eq:hidden-residence-rates}
\end{equation}
Stationary fluxes are $re^{\pm A/2}/3$, so the stationary distribution
is $\rho$, the increments in Eq.~\eqref{eq:hidden-general-mark} are
$\pm A$, and $\sigma,a_h$ retain Eq.~\eqref{eq:means}.
Nonuniform $\rho$ breaks hidden translation symmetry.

Put $R=\operatorname{diag}\rho$ and $d=2\cosh(A/2)+1$.
The midpoint hidden cost $\kappa_\rho=-\lambda_0(1/2)>0$, with $A>0$ as
in the original model, is the
lowest eigenvalue of 
\begin{equation}
 \mathcal K_\rho=\frac r3 R^{-1/2}
       (dI-\mathbf1\mathbf1^{\mathsf T})R^{-1/2}.
 \label{eq:hidden-residence-matrix}
\end{equation}
Here $\mathbf1=(1,1,1)^{\mathsf T}$ and
$\mathbf1\mathbf1^{\mathsf T}$ is the three-by-three all-ones matrix.
The rank-one determinant formula reduces its eigenvalue problem to
\begin{equation}
 \sum_{h=0}^2\frac1{d-q_\rho\rho_h}=1,\quad
 q_\rho=\frac{3\kappa_\rho}{r},\quad
 0<q_\rho<\frac d{\max_h\rho_h}.
 \label{eq:hidden-residence-secular}
\end{equation}
The left side increases from $3/d<1$ to infinity, establishing
a unique root in this interval. The normalized trial vector
$\sqrt\rho$ gives
\begin{equation}
 \kappa_\rho\le2r\left[\cosh(A/2)-1\right]=\kappa_{\rm unif},
 \label{eq:hidden-residence-bound}
\end{equation}
with equality precisely for uniform $\rho$.
Indeed, equality would require $\sqrt\rho$ to be an eigenvector
of Eq.~\eqref{eq:hidden-residence-matrix}, which forces all $\rho_h$
to coincide.
For $\rho_h=(1+\epsilon w_h)/3$, $\sum_h w_h=0$,
expanding Eq.~\eqref{eq:hidden-residence-secular} gives
\begin{equation}
 \kappa_\rho=\kappa_{\rm unif}
 -\frac{\kappa_{\rm unif}^2}{9r}\epsilon^2\sum_h w_h^2
 +O(\epsilon^3).
 \label{eq:hidden-residence-expansion}
\end{equation}
All parameters except $\epsilon$ are fixed in this analytic expansion.

Fixing any one nonuniform $\rho$ and varying the visible gates therefore
retains identical full stationary states and mean diagnostics across
the compared models. Their Ising coupling is $\eta\kappa_\rho$.
When $\kappa_\rho>\gamma$, varying $0\leq\eta<1$ again gives opposite
phases with identical ordinary diagnostics. The matched-phase mechanism
therefore survives without hidden translation symmetry.
Varying $\rho$ itself preserves the mean entropy, activity and cycle
affinity but changes the hidden stationary distribution, which is a
different observational comparison.

\section{Quenched gates and the entropy source}
\label{app:random-gates}
Periodic and random gates share a geometric-mean phase criterion, but
have different critical dynamics. For a positive Ising chain with bonds
$J_i$ and fields $h_i$, the zero-energy boundary Majorana recurrence is
$a_{i+1}=(h_i/J_i)a_i$. Its squared normalization gives the surface
magnetization with a remote boundary spin fixed,
\begin{equation}
 m_{\rm surf}(L)=\left[1+\sum_{\ell=1}^{L-1}
 \prod_{i=1}^{\ell}(h_i/J_i)^2\right]^{-1/2}.
 \label{eq:surface-product}
\end{equation}
This exact boundary formula~\cite{IgloiRieger1998RandomWalks} concerns
a symmetry-breaking boundary condition, not a one-point mean on a
symmetric finite ring. For a finite periodic cell and $h_i=\gamma$,
its convergence gives Eq.~\eqref{eq:periodic-criterion}; the dual
recurrence identifies the disordered side. For bounded positive iid
gates, the logarithm of its product has drift
\begin{equation}
 \mu_s=\log\frac{\gamma}{J_0(s)}-\E_{\rm dis}\left[\log\eta\right].
 \label{eq:random-drift}
\end{equation}
The zero-drift case is the random critical boundary.

On the disordered side $\mu_s>0$, arbitrarily long intervals with
locally strong bonds can relax slowly. When the gate distribution has
positive probability of $J_0(s)\eta>\gamma$, the Griffiths dynamical
exponent is the positive solution of the established relation
\begin{equation}
 \E_{\rm dis}\left[\left(\frac{J_0(s)\eta}{\gamma}\right)^{1/z_{\rm G}}\right]=1
 \label{eq:Griffiths-exponent}
\end{equation}
\cite{IgloiJuhaszRieger1999Griffiths}.
Put $v_\eta=\Var_{\rm dis}(\log\eta)>0$. Expanding the log moment-generating
function at small $1/z_{\rm G}$ gives
$z_{\rm G}=v_\eta/(2\mu_s)+O(1)$. At a tuned midpoint,
$J_0(1/2)\exp(\E_{\rm dis}\left[\log\eta\right])=\gamma$, so
\begin{align}
 \mu_s&=\frac{A^2(s-1/2)^2}{2\left[\cosh(A/2)-1\right]}
       +O((s-1/2)^4),\nonumber\\
 z_{\rm G}(s)&\sim
 \frac{v_\eta\left[\cosh(A/2)-1\right]}{A^2(s-1/2)^2}.
 \label{eq:entropy-Griffiths}
\end{align}
At the critical point itself, the strong-disorder theory has activated
rather than finite-$z_{\rm G}$ scaling~\cite{Fisher1995}.

Equation~\eqref{eq:random-stochastic-correlation} identifies the
random-Ising excitation energies with physical relaxation rates.
The size-independent separation of hidden sectors retains the
sufficiently low microscopic modes.
Take the long-time limit for each fixed gate sample before its
thermodynamic limit. Averaging $Z_{N,T}$ before taking its logarithm
defines a different, annealed ensemble. Activated scaling describes
typical relaxation scales or fixed disorder quantiles; disorder averages
can probe other tails. Specially correlated disorder need not share
these exponents~\cite{ShiraiTanaka2021}. Neither the activated gap nor
Eq.~\eqref{eq:entropy-Griffiths} determines a disordered entropy-cumulant
exponent without the corresponding energy-operator matrix elements.

\subsection{Exact energy response of a quenched sample}
\label{app:random-energy-response}
The activated gap does not by itself fix an entropy susceptibility.
The latter also contains the matrix elements of the weighted alignment
operator. For one periodic sample, define
\begin{equation}
 \begin{gathered}
 \mathcal B_\eta=\sum_i\eta_iZ_iZ_{i+1},\\
 \mathscr E_N(x)=\min\operatorname{spec}
 \left[-\gamma\sum_iX_i-x\mathcal B_\eta\right],
 \end{gathered}
 \label{eq:random-bond-energy}
\end{equation}
where $x>0$ scales all bonds and has units of inverse time. The gates
are held fixed in every derivative. At $x_*=J_0(1/2)$, put
$\chi_\eta(N)=-\mathscr E_N''(x_*)\geq0$. The exact midpoint entropy
cumulant rates for that sample are
\begin{align}
 \lim_{T\to\infty}\frac{\kappa_2}{T}
 &=2rA^2\left[N+\mathscr E_N'(x_*)\right],\nonumber\\
 \lim_{T\to\infty}\frac{\kappa_4}{T}
 &=2rA^4\left[N+\mathscr E_N'(x_*)\right]
       +12r^2A^4\chi_\eta(N).
 \label{eq:random-entropy-response}
\end{align}
These identities retain the scalar trajectory cost. They follow from
$\mathscr E_N'=-\langle\mathcal B_\eta\rangle$ and the chain rule for
$J_0(s)=2rd_s$. In particular the required averaged response is
$\E_{\rm dis}\left[\chi_\eta(N)/N\right]$, with a quenched disorder average.

An $N$-dimensional singular-value problem determines this response  through an exact finite-size formula. Numerical evaluation of its
singular values introduces the separately checked arithmetic error. Let $S_{\rm AP}$ translate a one-particle coordinate, with
$(S_{\rm AP})_{i,i+1}=1$ for $i<N$ and
$(S_{\rm AP})_{N,1}=-1$, and set
\begin{equation}
 M(x)=\gamma I-x\operatorname{diag}(\eta_i)S_{\rm AP}.
 \label{eq:random-Majorana-matrix}
\end{equation}
The antiperiodic sign is required by the positive, even-parity spin
ground state. Since
$\det M=\gamma^N+x^N\prod_i\eta_i>0$, its vacuum parity cannot change
along the positive bond scale. If $M=U\operatorname{diag}(\varsigma_a)V^{\mathsf T}$
is a real singular-value decomposition, then
$\mathscr E_N=-\sum_a\varsigma_a$. Write
$\mathsf F=U^{\mathsf T}M'V$. Its first two derivatives are
\begin{equation}
 \mathscr E_N'=-\operatorname{tr}\mathsf F,\qquad
 \chi_\eta(N)=\sum_{a<b}
 \frac{(\mathsf F_{ab}-\mathsf F_{ba})^2}{\varsigma_a+\varsigma_b}.
 \label{eq:random-curvature-SVD}
\end{equation}
To obtain the second expression, use the symmetric dilation
$\left(\begin{smallmatrix}0&M\\M^{\mathsf T}&0\end{smallmatrix}\right)$,
whose eigenvalues are $\pm\varsigma_a$. Write $\boldsymbol u_a,\boldsymbol v_a$ for columns of $U,V$.
The normalized dilation eigenvectors are
$|a,\pm\rangle=(\boldsymbol u_a,\pm\boldsymbol v_a)^{\mathsf T}/\sqrt2$.
Second-order perturbation of the sum of positive eigenvalues pairs
and cancels contributions between two positive modes. The positive--negative matrix element is
$(\mathsf F_{ba}-\mathsf F_{ab})/2$, which gives
Eq.~\eqref{eq:random-curvature-SVD}. The expression remains valid at
degenerate positive singular values and has no small difference
denominators. Equivalently,
$\mathsf F_{ab}-\mathsf F_{ba}=-(\gamma/x)\left[(U^{\mathsf T}V)_{ab}-(U^{\mathsf T}V)_{ba}\right]$.

The response involves even-parity excitations of energy
$\Omega_{ab}=2(\varsigma_a+\varsigma_b)$, with weight
$W_{ab}=(\mathsf F_{ab}-\mathsf F_{ba})^2$. The stationary driven correlation is
\begin{equation}
 \operatorname{Cov}_s\left[\mathcal B_\eta(t),\mathcal B_\eta(0)\right]
 =\sum_{a<b}W_{ab}e^{-\Omega_{ab}|t|},
 \label{eq:random-energy-correlation}
\end{equation}
where $M$ and its derivatives are evaluated at the selected bond scale.
Consequently its integrated occupation-time variance is
\begin{equation}
 2\sum_{a<b}W_{ab}\left[
 \frac{T}{\Omega_{ab}}-
 \frac{1-e^{-\Omega_{ab}T}}{\Omega_{ab}^2}\right].
 \label{eq:random-integrated-response}
\end{equation}
The conditional counting argument then fixes the fourth cumulant at
every finite observation time under stationary midpoint preparation.
An exceptionally small odd-parity gap is absent from these expressions.
Even a small pair energy contributes little when its matrix element is
suppressed. Renormalized energy operators are therefore needed in a
random critical theory, as earlier analyses of local and boundary energy
correlations emphasize~\cite{IgloiJuhaszRieger1999Griffiths,RefaelFisher2004Energy}.

The weak-disorder random-mass continuum theory has a smooth but
nonanalytic averaged ground energy~\cite{BunderMcKenzie1999}. At fixed
nonzero disorder, its energy curvature stays finite at criticality.
This permits a finite fourth-cumulant density despite activated
relaxation. The continuum result does not establish interchange of
disorder averages, derivatives and thermodynamic limits for every
bounded lattice distribution. Equation~\eqref{eq:random-curvature-SVD}
makes that stronger lattice question directly testable without numerical
differentiation of nearby extensive energies.

\subsection{Quenched sampling and checks of the energy response}
\label{app:disorder-numerics}
We evaluate  the lattice susceptibility
$\chi_\eta(N)=-\mathscr E_N''(x_*)$ in
Eq.~\eqref{eq:random-curvature-SVD}, with every gate held fixed during
differentiation. We choose time units with $\gamma=1$ and sample independent gates from 
\begin{equation}
 \log(4\eta_i)\overset{\mathrm{iid}}\sim\mathrm{Unif}\left[-w,w\right],\quad
 w\in\{0.3,0.8\},\quad x_*=4.
 \label{eq:disorder-numerical-ensemble}
\end{equation}
All gates lie strictly between zero and one. Since
$\E_{\rm dis}\left[\log\eta_i\right]=-\log4$, both ensembles satisfy the exact
geometric-mean critical criterion. We retain each sample's log drift;
subtracting its sample mean would introduce correlations and change
the ensemble. For either width, any fixed affinity $A>0$ can be used
with $r=2/\left[\cosh(A/2)-1\right]$. The ordinary entropy-production rate,
hidden activity and affinity are then identical, as is the entire
unbiased visible path process. Their higher ordinary fluctuation
statistics need not coincide.

For each width we use $512$ independent samples at each of
$N=64,128,256$, $256$ at $N=512$ and $64$ at $N=1024$, for $3712$
production samples in total. A separate $512$-sample pilot is not
pooled into these averages. No sample is removed or trimmed.
The plotted response is $\chi_\eta(N)/N$, in the chosen time units.
The exact clean comparison is
$\eta_0^2 S_N/(\gamma N)$ at $\eta_0=1/4$.

Error bars denote one empirical standard error of the mean. We also
inspect medians, upper quantiles, maxima, equal-sized independent
batches and percentile-bootstrap intervals from $5000$ resamples.
At $N=1024$ the $95\%$ bootstrap intervals are
$\left[0.08539,0.08775\right]$ for $w=0.3$ and $\left[0.04962,0.05056\right]$ for $w=0.8$.
The two $w=0.3$ batches at that size differ by $2.27$ combined standard
errors; the other batch differences are below $1.35$.
The largest $\lceil0.01n_{\rm samp}\rceil$ samples account for
between $1.33\%$ and $1.85\%$ of the observed response sums. At the
smallest sample count this selects one of $64$ samples, so it should
not be interpreted as an exactly resolved upper one-percent tail.
These diagnostics quantify sampling variation in the observed
ensembles. They do not bound contributions from rare realizations
absent from them, and no universal finite-size correction is fitted.

The response is evaluated by a real singular-value decomposition,
without finite differences of extensive energies. Three distinct
checks test the calculation. First, the clean expression agrees
through $N=256$, and direct $2^N$-dimensional spin Hamiltonians at
$N=3,\ldots,7$ agree in energy, first derivative and curvature.
Second, the symmetric-dilation formula at $70$ and $100$ decimal
digits agrees to relative errors below $2\times10^{-65}$ for two
moderately disordered fixtures at $N=8,12$. These are independent
precision checks of small samples, not high-precision calculations
of every production realization. Third, the full spin-space
correlation function at $N=4,6,8$ agrees with
Eq.~\eqref{eq:random-energy-correlation} at
$t=0,0.25,2,15$ to absolute error below $10^{-15}$.

For every production sample, we compare the two numerator expressions
following Eq.~\eqref{eq:random-curvature-SVD} and monitor the
singular-value residual. Their maximum relative discrepancies are
$8.3\times10^{-16}$ and $3.4\times10^{-15}$, respectively.
Conditioning is assessed through the sum of the two smallest singular
values, since the response creates pairs. This sum remains at least
$3.6\times10^8$ times the double-precision spacing $\epsilon_{\rm mach}=2^{-52}\simeq2.22\times10^{-16}$  multiplied by the largest
singular value. Comparisons of two singular-value algorithms on
benchmark samples through $N=1024$ agree in curvature to relative error
below $10^{-15}$.

The observed plateaus therefore concern the finite-size random
lattice response, with arithmetic and sampling errors checked
separately. The weak-disorder continuum result~\cite{BunderMcKenzie1999}
explains how the mean energy can be smooth at a critical point. Establishing the lattice limit requires control of rare realizations
when taking disorder averages and derivatives. The present data support a finite limiting
fourth-cumulant density for Eq.~\eqref{eq:disorder-numerical-ensemble};
they do not extend that conclusion to every bounded disorder
distribution.

Numerical linear algebra uses SciPy~\cite{SciPy2020};
the arbitrary-precision checks use mpmath~\cite{Mpmath}.

\subsection{A resummed continuum response and its error bound}
\label{app:disorder-borel}
The continuum function in Eq.~\eqref{eq:disorder-borel} follows from
the low-energy density of states of the weak-disorder random-mass
theory. Up to analytic backgrounds and an overall dimensional factor,
its digamma representation is
\begin{equation}
 \Phi_{\rm dis}(\upsilon)
 =\log(2\upsilon)+\psi_{\rm dg}\bigl(1/(2\upsilon)\bigr)+\upsilon,
 \qquad \upsilon>0.
 \label{eq:disorder-digamma}
\end{equation}
Here $\psi_{\rm dg}(x)=\mathrm d\log\Gamma(x)/\mathrm dx$ is the
digamma function, distinct from the trajectory SCGF $\psi_N$.
The ground-energy construction and its Bernoulli expansion are due to
Bunder and McKenzie~\cite{BunderMcKenzie1999}. The added linear term
removes an analytic contribution. Matching the neighboring Griffiths
exponent fixes the leading continuum coordinate,
\begin{equation}
 \upsilon\simeq\frac{\mu_s}{v_\eta}
 =\frac{A^2(s-1/2)^2}{2v_\eta\left[\cosh(A/2)-1\right]}
 +O((s-1/2)^4).
 \label{eq:disorder-continuum-coordinate}
\end{equation}
The first relation is a scaling-field identification in the weak-disorder
limit. It is not an exact finite-disorder relation between microscopic
bonds and a bare continuum mass.

The integral representations of the digamma function~\cite{DLMF2026}
give Eq.~\eqref{eq:disorder-borel} and the complementary Binet form
\begin{equation}
 \Phi_{\rm dis}(\upsilon)=-8\upsilon^2\int_0^\infty
 \frac{t\dd t}{(1+4\upsilon^2t^2)(e^{2\pi t}-1)}.
 \label{eq:disorder-binet}
\end{equation}
Expanding the kernel of the first integral defines the asymptotic series
\begin{equation}
 \Phi_{\rm dis}(\upsilon)
 \sim-\sum_{n\geq1}\frac{B_{2n}}{2n}(2\upsilon)^{2n},
 \label{eq:disorder-bernoulli}
\end{equation}
where the Bernoulli numbers are defined by
$t/(e^t-1)=\sum_{n\geq0}B_n t^n/n!$.
The coefficients grow factorially. Equation~\eqref{eq:disorder-borel}
is their Borel--Laplace sum, with the convention that the Laplace integral
has no additional $1/(2\upsilon)$ prefactor. Dividing the coefficient of
$(2\upsilon)^{2n}$ by $(2n-1)!$ reconstructs its meromorphic kernel.
The nearest poles are $t=\pm2\pi i$, away from the positive integration
ray, so the sum is unambiguous for real $\upsilon>0$.

The second integral controls truncation without assuming convergence.
Expanding its denominator through $M$ terms gives
\begin{align}
 \Phi_{\rm dis}-\Phi_{{\rm dis},M}
 &=2(-1)^{M+1}(2\upsilon)^{2M+2}\nonumber\\
 &\quad\times\int_0^\infty
 \frac{t^{2M+1}\dd t}{(1+4\upsilon^2t^2)(e^{2\pi t}-1)},
 \nonumber\\[-2pt]
 \left|\Phi_{\rm dis}-\Phi_{{\rm dis},M}\right|
 &\leq\frac{|B_{2M+2}|}{2M+2}(2\upsilon)^{2M+2},
 \label{eq:disorder-remainder}
\end{align}
where $\Phi_{{\rm dis},M}$ denotes the sum in
Eq.~\eqref{eq:disorder-bernoulli} through $n=M$.
The remainder has the sign of the first omitted term. Stirling's formula
places the least term at $M\simeq\pi/(2\upsilon)$, with a bound of order
$\sqrt{\upsilon}\,e^{-\pi/\upsilon}$. Under the quadratic source this is
an error smaller than every power of $s-1/2$, of order
$\exp\left[-\mathrm{const}/(s-1/2)^2\right]$ up to algebraic factors.
This exponential scale quantifies the optimal truncation error of the
continuum response. Interpreting it as a microscopic transition rate
would require a separate dynamical derivation.

The leading term $-\upsilon^2/3$ contributes a finite fourth entropy
derivative under Eq.~\eqref{eq:disorder-continuum-coordinate}. Higher
terms determine the large-order response, with the bound above controlling
what any finite truncation omits. Analytic backgrounds and nonlinear
scaling fields remain necessary for actual lattice amplitudes.

\section{Exact finite-rate memory after entropy selection}
\label{app:selected-memory}

This appendix derives the visible correlation and its memory kernel for
the two-cycle construction of Sec.~\ref{sec:selected-memory}. The
calculation concerns the stationary midpoint driven process, which
describes the interior of a long entropy-selected trajectory. It does
not identify that stationary process with every finite-duration
conditioned ensemble, whose endpoints can matter
\cite{ChetriteTouchetteControl2015}.

\subsection{When selection preserves visible autonomy}
\label{app:common-perron-line}

The factorization $g_e(z)\mathsf K_e$ is sufficient for the autonomous
selected dynamics, but the relevant condition is weaker and has a
direct eigenvector formulation. Let $z$ range over a finite irreducible
visible graph, and let its generator $L_0$ act by jumps that leave the
hidden state $h$ unchanged. Suppose the full tilted operator has the
form
\begin{equation}
 (L_s f)(z,h)=(L_0 f)(z,h)
                 +\left[\mathsf K_s(z)f(z,\cdot)\right](h),
 \label{eq:common-line-operator}
\end{equation}
where every $\mathsf K_s(z)$ is an irreducible finite tilted hidden
generator, with nonnegative off-diagonal entries. This form applies to
the entropy tilt when the ordinary stationary measure is
$\pi_0(z)\rho(h)$, the visible generator is reversible with respect to
$\pi_0$, and the endpoint contribution is included in the stationary
entropy convention.

\begin{proposition}[Visible autonomy] The full driven process has visible jump rates
independent of $h$ if and only if the matrices $\mathsf K_s(z)$ have
a common positive right eigenvector $b_s$, up to normalization.
That vector is necessarily their Perron vector. When the criterion
holds, the exact reduced tilted operator is
\begin{equation}
 \overline L_s=L_0+\lambda_s(z),\qquad
 \mathsf K_s(z)b_s=\lambda_s(z)b_s.
 \label{eq:common-line-reduction}
\end{equation}
Thus a common positive eigenline, rather than proportionality of
the hidden matrices, is the exact structural requirement for autonomous
visible rates.

\end{proposition}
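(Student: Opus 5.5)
The plan is to argue directly through the generalized Doob transform of the full tilted operator $L_s$ in Eq.~\eqref{eq:common-line-operator}. Since $L_0$ is irreducible on the visible graph and every $\mathsf K_s(z)$ is irreducible with nonnegative off-diagonal entries, $L_s$ is an irreducible Metzler matrix on the product space. It therefore has a simple Perron eigenvalue $\Lambda_s$ with strictly positive right eigenvector $\Phi_s(z,h)$. Visible jumps leave $h$ unchanged, so the driven visible rate from $(z,h)$ to $(z',h)$ is $k_0(z,z')\Phi_s(z',h)/\Phi_s(z,h)$. This is the expression already used around Eq.~\eqref{eq:common-line-main}, and everything reduces to a factorization question for $\Phi_s$.

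\emph{Sufficiency.} Suppose $\mathsf K_s(z)b_s=\lambda_s(z)b_s$ with $b_s>0$ independent of $z$. Since $b_s$ is a positive eigenvector of the irreducible Metzler matrix $\mathsf K_s(z)$, Perron--Frobenius uniqueness makes it the Perron vector and $\lambda_s(z)$ the Perron eigenvalue, for every $z$. I will then check the intertwining $L_s(a\otimes b_s)=(\overline L_s a)\otimes b_s$, with $\overline L_s=L_0+\lambda_s(z)$. This is immediate: $L_0$ acts only on the visible argument, and $\mathsf K_s(z)$ acts on $b_s$ by the scalar $\lambda_s(z)$. The reduced operator $\overline L_s$ is irreducible Metzler on the visible space, so it has a positive Perron vector $a_s$. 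Its lift $a_s\otimes b_s$ is then a positive eigenvector of $L_s$, hence the full Perron vector by uniqueness of positive eigenvectors. The driven visible rates become $k_0(z,z')a_s(z')/a_s(z)$, which are independent of $h$. The same intertwining identifies the SCGF with the Perron eigenvalue of Eq.~\eqref{eq:common-line-reduction}.

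\emph{Necessity.} Now suppose the driven rates are independent of $h$. Then for every visible edge with $k_0(z,z')>0$, the ratio $\Phi_s(z',h)/\Phi_s(z,h)$ does not depend on $h$. Fix a reference state $z_0$ and set $b_s(h)=\Phi_s(z_0,h)>0$. Following a path in the connected visible graph from $z_0$ to any $z$ and multiplying the ratios along it gives $\Phi_s(z,h)=a_s(z)b_s(h)$ with $a_s>0$. I would note that only reachability is needed here, not path independence, because the ratios are already determined by $\Phi_s$ itself. Substituting the factorized vector into $L_s\Phi_s=\Lambda_s\Phi_s$ gives
\[
(L_0a_s)(z)\,b_s(h)+a_s(z)\,[\mathsf K_s(z)b_s](h)=\Lambda_s\,a_s(z)\,b_s(h).
\]
Dividing by $a_s(z)>0$ shows that $\mathsf K_s(z)b_s=\lambda_s(z)b_s$ with $\lambda_s(z)=\Lambda_s-(L_0a_s)(z)/a_s(z)$. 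So $b_s$ is a common positive eigenvector, and by the Perron argument it is the common Perron vector. The statement that a common eigenline is weaker than proportionality then only needs a remark: non-proportional matrices can still share a Perron vector.

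The delicate point is where the hypotheses enter, rather than any heavy computation. Irreducibility of the full $L_s$ is what guarantees uniqueness of the positive eigenvector, and so justifies identifying the lift with the Doob vector. Visible connectivity is what propagates the $h$-independence of the rate ratios to a global product form. The only step I expect to need care is showing that the full product-space matrix is irreducible. I would settle it by combining hidden moves at a fixed $z$, which are irreducible, with visible moves at a fixed $h$, which are irreducible because $L_0$ acts identically on each hidden fibre. Together these connect any pair of product states.
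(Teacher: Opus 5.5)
Your proof is correct and follows the paper's argument essentially step for step. For necessity, you propagate the $h$-independent ratios $\Phi_s(z',h)/\Phi_s(z,h)$ along the connected visible graph to get $\Phi_s=a_s(z)b_s(h)$, then substitute into the eigenvalue equation; for sufficiency, you lift the Perron vector of $L_0+\lambda_s(z)$ and use uniqueness of positive eigenvectors, and your explicit check that the product-space operator is irreducible (left implicit in the paper) is a useful addition.
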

\begin{proof} Let $\Phi_s(z,h)>0$ be the full right Perron vector.
For an allowed visible edge, the driven rate is
$k_0(z,z')\Phi_s(z',h)/\Phi_s(z,h)$. Independence of $h$, together
with connectivity of the visible graph, implies
$\Phi_s(z,h)=a_s(z)b_s(h)$. Substitution into the full eigenvalue
equation gives
\begin{equation}
 \mathsf K_s(z)b_s=
 \left[\psi(s)-\frac{(L_0a_s)(z)}{a_s(z)}\right]b_s.
\end{equation}
Positivity identifies this eigenvector as the Perron vector of every
hidden matrix. Conversely, given a common positive $b_s$, take the
positive Perron vector $a_s$ of $L_0+\lambda_s(z)$. Their product is
a positive eigenvector of $L_s$, hence its Perron vector. Its visible
Doob rates depend on $z,z'$ alone. \end{proof}

The criterion above concerns autonomy for every hidden preparation,
often called strong lumpability. At a general tilt, a failure of this
property should not by itself be identified with a proof that every
stationary visible projection is non-Markovian. At the entropy
midpoint, reversibility closes that distinction, consistently with
the reversible lumpability result of Burke and Rosenblatt
\cite{BurkeRosenblatt1958}. Let $\mathcal P$ be
conditional expectation onto visible functions in the full driven
stationary measure and let $\mathcal Q=I-\mathcal P$. The full driven
generator $L^{\rm D}$ is self-adjoint in that measure. If its
stationary visible projection were Markov, the compressed conditional
propagators $\mathcal P e^{tL^{\rm D}}\mathcal P$ would form a
semigroup on visible functions. Their first two derivatives at zero
would then obey
\begin{equation}
 0=\mathcal P(L^{\rm D})^2\mathcal P
       -(\mathcal P L^{\rm D}\mathcal P)^2
   =(\mathcal Q L^{\rm D}\mathcal P)^*
     (\mathcal Q L^{\rm D}\mathcal P).
 \label{eq:common-line-reversible-compression}
\end{equation}
Consequently $L^{\rm D}$ leaves the space of visible functions
invariant, which is precisely autonomy of the visible rates.
Conversely, that invariance gives a Markov projection. Therefore,
under the finite midpoint assumptions, the common positive hidden
eigenline is also necessary and sufficient for the stationary visible
projection to be Markov. This last statement uses reversibility;
the general-$s$ autonomy criterion does not require it.

\subsection{Stationarity and the two symmetry sectors}

Each oriented triangle separately has equal incoming and outgoing
total rates at every vertex. Their sum therefore has the uniform
hidden stationary measure for both values of $z$. The visible flips
are also symmetric, so the full stationary measure is $1/10$.
Each triangle has three edges. Summing its forward and reverse
stationary flows gives $6r(1\pm\epsilon z)\cosh(A/2)/5$;
their difference, multiplied by $A$, gives
$6r(1\pm\epsilon z)A\sinh(A/2)/5$. Adding the triangles proves
Eq.~\eqref{eq:memory-matched-means}. The full stationary entropy is
$A$ times the net number of forward hidden jumps, because the uniform
stationary measure has no endpoint contribution and visible jumps have
zero log rate ratio.

Write $c=\cosh(A/2)$ and $d=2c-1$. At the midpoint the off-diagonal
Hamiltonian entries on a hidden edge are minus its turnover
prefactor, while the diagonal entries retain the original escape
rates. Reflection within either hidden triangle is now a symmetry.
In the reflection-even hidden subspace choose $|0\rangle$ together with
the orthonormal vectors
\begin{equation}
 \begin{aligned}
 |S\rangle&=\frac{|1\rangle+|2\rangle+|3\rangle+|4\rangle}{2},\\
 |D\rangle&=\frac{|1\rangle+|2\rangle-|3\rangle-|4\rangle}{2}.
 \end{aligned}
 \label{eq:memory-hidden-basis}
\end{equation}
The visible vectors $|\pm\rangle=(|z=+1\rangle\pm|z=-1\rangle)/\sqrt2$
diagonalize the visible flip. The positive ground vector is even under
the combined visible flip and interchange of the hidden branches.
In the ordered basis $|+\rangle|0\rangle$, $|+\rangle|S\rangle$,
$|-\rangle|D\rangle$, its Hamiltonian is exactly
\begin{equation}
 \mathsf H_{\rm e}
 =r\begin{pmatrix}
 4c&-2&-2\epsilon\\
 -2&d&d\epsilon\\
 -2\epsilon&d\epsilon&d+2\gamma/r
 \end{pmatrix}.
 \label{eq:memory-even-block}
\end{equation}
Let $E_\epsilon$ be its lowest eigenvalue and
$\boldsymbol u=(u_0,u_S,u_D)^{\mathsf T}$ its real normalized ground
vector, oriented so that $u_0,u_S>0$.
The full physical vector is
\begin{equation}
 \Phi_\epsilon(z,h)=\frac1{\sqrt2}
 \left[u_0\delta_{h0}+u_S S_h+z u_D D_h\right].
 \label{eq:memory-positive-vector}
\end{equation}
Strict positivity implies $u_S>|u_D|$ and $u_D^2<1/2$.
Summing $\Phi_\epsilon^2$ over $h$ gives $1/2$ for either visible state.

Multiplication by $z$ maps the ground vector into the opposite
combined-parity sector. In the corresponding ordered basis
$|-\rangle|0\rangle$, $|-\rangle|S\rangle$, $|+\rangle|D\rangle$,
the matrix is
\begin{equation}
 \mathsf H_{\rm o}
 =\mathsf H_{\rm e}+2\gamma\operatorname{diag}(1,1,-1).
 \label{eq:memory-odd-block}
\end{equation}
The diagonal similarity relating the symmetric Hamiltonian and its
Doob generator then gives
\begin{equation}
 C_\epsilon(t)=\boldsymbol u^{\mathsf T}
 e^{-t\mathsf G_\epsilon}\boldsymbol u,
 \qquad
 \mathsf G_\epsilon=\mathsf H_{\rm o}-E_\epsilon I.
 \label{eq:memory-exact-correlation}
\end{equation}
The full finite chain is irreducible, so $\mathsf G_\epsilon$ is positive
definite. If $\Lambda_j$ and $\boldsymbol v_j$ are the eigenvalues and
orthonormal eigenvectors of $\mathsf H_{\rm o}$, Eq.~\eqref{eq:memory-exact-correlation}
is a sum of at most three exponentials with rates
$\Lambda_j-E_\epsilon>0$ and weights
$|\boldsymbol v_j^{\mathsf T}\boldsymbol u|^2$. The cubic eigenproblem
specifies the full time dependence without a perturbative approximation.

\subsection{A strict test for visible memory}

Equation~\eqref{eq:memory-odd-block} and the ground-state equation give
$\mathsf G_\epsilon\boldsymbol u=
2\gamma\operatorname{diag}(1,1,-1)\boldsymbol u$.
The first two derivatives of Eq.~\eqref{eq:memory-exact-correlation}
give $-C_\epsilon'(0^+)=2\gamma(1-2u_D^2)$ and the curvature excess
$16\gamma^2u_D^2(1-u_D^2)$. In the physical rate notation of
Eq.~\eqref{eq:memory-witness}, these identities read
\begin{equation}
 \begin{gathered}
 \langle w_\epsilon\rangle=\gamma(1-2u_D^2),\quad
 \langle w_\epsilon^2\rangle=\gamma^2,\\
 C_\epsilon''(0^+)-\left[C_\epsilon'(0^+)\right]^2
 =4\operatorname{Var}(w_\epsilon).
 \end{gathered}
 \label{eq:memory-rate-variance}
\end{equation}
The second identity follows by cancelling the ratio in
$\Phi_\epsilon(z,h)^2w_\epsilon(z,h)^2$ and then relabelling $z$.
Thus fluctuations of the selected visible flip rate across hidden
states are directly measurable through the correlation curvature.

To prove strict positivity for every $\epsilon\ne0$, suppose instead
that $u_D=0$. The first two components of the eigenvalue equation would
make $(u_0,u_S)$ the positive ground vector of
$r\bigl(\begin{smallmatrix}4c&-2\\-2&d\end{smallmatrix}\bigr)$.
Its eigenvalue is
\begin{equation}
 E_{\rm h}=\frac r2
 \left[6c-1-\sqrt{(2c+1)^2+16}\right]>0
 \qquad(c>1).
 \label{eq:memory-base-cost}
\end{equation}
The third component would require $-2u_0+d u_S=0$, whereas the
second requires $r(-2u_0+d u_S)=E_{\rm h}u_S>0$.
This contradiction proves $u_D\ne0$. Together with $u_D^2<1/2$, it
proves the strict inequality. At $A=0$, entropy is identically zero
and this argument appropriately ceases to apply. At $\epsilon=0$,
the hidden process is independent of $z$ and
$C_0(t)=e^{-2\gamma t}$. The endpoints $|\epsilon|=1$, where some
rates vanish, are outside the assumed irreducible family.

The positive spectral representation also bounds the full correlation.
Its decay rates have mean $a_\epsilon$ and strictly positive variance
given by Eq.~\eqref{eq:memory-witness}. Strict convexity of
$x\mapsto e^{-tx}$ therefore gives
\begin{equation}
 C_\epsilon(t)>e^{-a_\epsilon t}>e^{-2\gamma t},
 \qquad t>0,\quad\epsilon\ne0.
 \label{eq:memory-persistence-bound}
\end{equation}
The first inequality distinguishes memory from a reduction of the
mean selected flip rate, and the second compares with the ordinary
visible process.

\subsection{The exact memory kernel and its controlled onset}

The scalar resolvent of Eq.~\eqref{eq:memory-exact-correlation} can be
evaluated by projecting onto $\boldsymbol u$ and its orthogonal
complement. Define
\begin{equation}
 \begin{gathered}
 \mathsf P=\boldsymbol u\boldsymbol u^{\mathsf T},\quad
 \mathsf Q=I-\mathsf P,\\
 a_\epsilon=\boldsymbol u^{\mathsf T}\mathsf G_\epsilon\boldsymbol u,\quad
 \boldsymbol b=\mathsf Q\mathsf G_\epsilon\boldsymbol u.
 \end{gathered}
\end{equation}
Let $\mathsf G_\perp$ denote the restriction of
$\mathsf Q\mathsf G_\epsilon\mathsf Q$ to the two-dimensional space
orthogonal to $\boldsymbol u$. Block elimination, or the Schur
complement of this restricted matrix, gives for $p>0$
\begin{equation}
 \widetilde C_\epsilon(p)
 =\frac{1}{p+a_\epsilon-
 \boldsymbol b^{\mathsf T}(pI+\mathsf G_\perp)^{-1}\boldsymbol b}\,.
 \label{eq:memory-resolvent}
\end{equation}
Here the components of $\boldsymbol b$ are understood in any
orthonormal basis of that complement. Inverting the Laplace transform
proves Eq.~\eqref{eq:memory-main-kernel}, with
\begin{equation}
 \mathfrak M_\epsilon(t)
 =\boldsymbol b^{\mathsf T}e^{-t\mathsf G_\perp}\boldsymbol b,
 \quad
 \mathfrak M_\epsilon(0)
 =16\gamma^2u_D^2(1-u_D^2).
 \label{eq:memory-exact-kernel}
\end{equation}
The restricted matrix is positive definite. Its spectral decomposition
therefore proves the two-exponential form and nonnegative weights.
The positive convolution term in Eq.~\eqref{eq:memory-main-kernel}
follows from eliminating amplitudes in dissipative evolution;
it is consistent with the minus sign of the self-energy in
Eq.~\eqref{eq:memory-resolvent}. This exact correlation projection
uses the geometry underlying Mori's memory construction
\cite{Mori1965}; it does not postulate a stochastic equation for $z$.

For a controlled small-$\epsilon$ expansion, define
\begin{equation}
 \begin{gathered}
 \Delta_{\rm h}=rd-E_{\rm h}>0,\\
 b_{\rm h}=\left[1+\left(\frac{\Delta_{\rm h}}{2r}\right)^2\right]^{-1/2},\\
 v_{\rm h}=E_{\rm h}b_{\rm h},\qquad
 \chi_{\rm h}=\frac{v_{\rm h}}{\Delta_{\rm h}+2\gamma}.
 \end{gathered}
 \label{eq:memory-small-parameters}
\end{equation}
The unperturbed hidden ground vector has components
$b_{\rm h}(\Delta_{\rm h}/2r,1)$. The perturbation couples it to the
branch-odd mode with matrix element $v_{\rm h}$; the energy denominator
includes both hidden relaxation and the visible flip, giving
\begin{align}
 E_\epsilon&=E_{\rm h}
 -\epsilon^2\frac{v_{\rm h}^2}{\Delta_{\rm h}+2\gamma}
 +O(\epsilon^4),\nonumber\\
 u_D&=-\epsilon\chi_{\rm h}+O(\epsilon^3),\nonumber\\
 a_\epsilon&=2\gamma-4\gamma\epsilon^2\chi_{\rm h}^2
 +O(\epsilon^4),\nonumber\\
 \mathfrak M_\epsilon(t)
 &=16\gamma^2\epsilon^2\chi_{\rm h}^2e^{-\Delta_{\rm h}t}
 +O(\epsilon^4).
 \label{eq:memory-small-expansion}
\end{align}
The scalar quantities are even in $\epsilon$ by branch interchange.
Finite-dimensional analytic perturbation theory controls the
remainders at fixed $r,A,\gamma>0$; the kernel remainder is uniform
on compact time intervals. These expressions remain regular when
$\Delta_{\rm h}=2\gamma$, where expansions of individual correlation
poles would need degenerate perturbation theory. They do not assume
a uniform many-body gap or establish a critical scaling limit.

For a directly visible example, take $r=\gamma=1$,
$A=2\operatorname{arcosh}2$ and $\epsilon=1/2$.
The three decay rates are approximately $0.315916$, $2.954914$,
and $9.087043$, with weights $0.559711$, $0.438938$, and $0.001351$.
Consequently $C_\epsilon(2)=0.298743$, while the exponential with
the same initial slope is $0.051188$. The ordinary visible
correlation remains $e^{-2t}$, which equals $0.018316$ at $t=2$.
This contrast separates memory from a change of a single effective
flip rate.

\bibliography{references}
\end{document}